\def\bu{\bullet}
\def\marker{\>\hbox{${\vcenter{\vbox{
    \hrule height 0.4pt\hbox{\vrule width 0.4pt height 6pt
    \kern6pt\vrule width 0.4pt}\hrule height 0.4pt}}}$}\>}
\def\gpic#1{#1
     \medskip\par\noindent{\centerline{\box\graph}} \medskip}
\theoremstyle{plain}
\newtheorem{thm}{Theorem}[section]
\newtheorem{cor}[thm]{Corollary}
\newtheorem{lem}[thm]{Lemma}
\newtheorem{prop}[thm]{Proposition}
\newtheorem{ques}[thm]{Question}
\newtheorem{examp}[thm]{Example}
\theoremstyle{definition}
\newtheorem{defin}[thm]{Definition}
\newtheorem{rem}[thm]{Remark}
\newcommand{\bigceil}[1]{\big\lceil{#1}\big\rceil}
\newcommand{\lemma}[1]{Lemma~\ref{#1} } 
\DeclareMathOperator{\RS}{RS}
\DeclareMathOperator{\bin}{Bin}
\newcommand{\pr}{\mathbb{P}}
\newcommand{\ex}{\mathbb{E}}
\newcommand{\sizeof}[1]{\left\lvert #1 \right\rvert}
\newcommand{\floor}[1]{\lfloor #1 \rfloor}
\newcommand{\ceil}[1]{\lceil #1 \rceil}
\newcommand{\mtg}[1]{\overline{#1}}
\newcommand{\free}[1]{\hat{#1}}
\def\revs{revolutionaries}
\def\rev{revolutionary}
\def\floor#1{\left\lfloor #1 \right\rfloor}
\def\ceil#1{\left\lceil #1 \right\rceil}
\def\bFL#1{\bigl\lfloor #1 \bigr\rfloor}
\def\FL#1{\left\lfloor #1 \right\rfloor}
\def\CL#1{\left\lceil #1 \right\rceil}
\def\C#1{\left| #1\right |}
\def\RSG{{\rm RS}(G,m,r,s)}
\def\RS{{\rm RS}}
\def\sgmr{\sigma(G,m,r)}
\def\sgtr{\sigma(G,2,r)}
\def\sghr{\sigma(G,3,r)}
\def\VEC#1#2#3{#1_{#2},\ldots,#1_{#3}}
\def\SE#1#2#3{\sum_{#1=#2}^{#3}}
\def\SM#1#2{\sum_{#1\in #2}}
\def\nul{\varnothing}
\def\FR{\frac}
\def\NN{{\mathbb N}}
\def\CH#1#2{\binom{#1}{#2}}
\def\st{\colon\,}
\def\esub{\subseteq}
\def\us{\hat s}
\def\ds{\check s}
\def\ur{\hat r}
\def\dr{\check r}
\def\ex{\mathbb{E}}
\def\pr{\mathbb{P}}
\def\bipmtwo{\bigl\lceil{\FR{\FL{7r/2}-3}5}\bigr\rceil}
\def\bipmgen{\Bigl\lfloor{\FR12\bigl\lfloor{\FR r{\CL{m/3}}\bigr\rfloor}\Bigr\rfloor}}
\def\hispace{\phantom{\Big|}}
\def\cart{{\small\square}}
\begin{document}

\title{Revolutionaries and spies: Spy-good and spy-bad graphs}

\author{
Jane V. Butterfield\thanks{Mathematics Department, University of Illinois,
jbutter2@illinois.edu, partially supported by NSF grant DMS 08-38434,
``EMSW21-MCTP: Research Experience for Graduate Students''.}\,,
Daniel W. Cranston\thanks{Mathematics Department, Virginia Commonwealth
University, dcranston@vcu.edu.}\,,
Gregory J. Puleo\thanks{Mathematics Department, University of Illinois,
puleo@illinois.edu, partially supported by NSF grant DMS 08-38434,
``EMSW21-MCTP: Research Experience for Graduate Students''.}\,,\\
Douglas B. West\thanks{Mathematics Department, University of Illinois,
west@math.uiuc.edu, partially supported by NSA grant H98230-10-1-0363.}\,, and
Reza Zamani\thanks{Computer Science Department, University of Illinois,
zamani@uiuc.edu.}
}

\maketitle

\begin{abstract}
We study a game on a graph $G$ played by $r$ {\it \revs} and $s$ {\it spies}.
Initially, \revs\ and then spies occupy vertices.  In each subsequent round,
each \rev\ may move to a neighboring vertex or not move, and then each spy has
the same option.  The \revs\ win if $m$ of them meet at some vertex having no
spy (at the end of a round); the spies win if they can avoid this forever.

Let $\sgmr$ denote the minimum number of spies needed to win.  To avoid 
degenerate cases, assume $|V(G)|\ge r-m+1\ge\FL{r/m}\ge 1$.  The easy bounds
are then $\FL{r/m}\le \sgmr\le r-m+1$.  We prove that the lower bound is sharp
when $G$ has a rooted spanning tree $T$ such that every edge of $G$ not in $T$
joins two vertices having the same parent in $T$.  As a consequence,
$\sgmr\le\gamma(G)\FL{r/m}$, where $\gamma(G)$ is the domination number; this
bound is nearly sharp when $\gamma(G)\le m$.

For the random graph with constant edge-probability $p$, we obtain constants
$c$ and $c'$ (depending on $m$ and $p$) such that $\sgmr$ is near the trivial
upper bound when $r<c\ln n$ and at most $c'$ times the trivial lower bound when
$r>c'\ln n$.  For the hypercube $Q_d$ with $d\ge r$, we have $\sgmr=r-m+1$ when
$m=2$, and for $m\ge 3$ at least $r-39m$ spies are needed.

For complete $k$-partite graphs with partite sets of size at least $2r$, the
leading term in $\sgmr$ is approximately $\FR k{k-1}\FR rm$ when $k\ge m$.
For $k=2$, we have $\sgtr=\bipmtwo$ and $\sghr=\FL{r/2}$, and in general
$\frac{3r}{2m}-3\le \sgmr\le\frac{(1+1/\sqrt3)r}{m}$.
\looseness-1
\end{abstract}

\baselineskip16pt

\section{Introduction}
We study a pursuit game involving two teams on a graph.
The first team consists of $r$
{\it\revs}; the second consists of $s$ {\it spies}.  The \revs\ want to arrange
a one-time meeting of $m$ \revs\ free of oversight by spies.  Initially, the
\revs\ take positions at vertices, and then the spies do the same.  In each
subsequent round, each \rev\ may move to a neighboring vertex or not move, and
then each spy has the same option.  All positions are known by all players
at all times.

The \revs\ win if at the end of a round there is an {\it unguarded meeting},
where a {\it meeting} is a set of (at least) $m$ \revs\ on one vertex, and a
meeting is {\it unguarded} if there is no spy at that vertex.  The spies win if
they can prevent this forever.  Let $\RSG$ denote this game played on the graph
$G$ by $s$ spies and $r$ \revs\ seeking an unguarded meeting of size $m$.

The spies trivially win if $s\ge|V(G)|$ or $r<m$.
If $\FL{r/m}<|V(G)|$, then the \revs\ can form $\FL{r/m}$ meetings initially,
and hence at least $\FL{r/m}$ spies are needed to avoid losing immediately.
On the other hand, the spies win if $s\ge r-m+1$; they follow $r-m+1$ distinct
\revs, and the other $m-1$ \revs\ cannot form a meeting.  To avoid degenerate
or trivial games, henceforth in this paper {\bf we always assume}
$$|V(G)|\ge r-m+1\ge\FL{r/m}\ge1.$$
Let $\sgmr$ denote the minimum $s$ such that the spies win the game $\RSG$.

The game of Revolutionaries and Spies was invented by Jozef Beck in the
mid-1990s (unpublished).  Smyth promptly showed that $\sgmr=\FL{r/m}$ when $G$
is a tree, achieving the trivial lower bound (a later proof appears
in~\cite{CSW}).  Howard and Smyth~\cite{HS} studied the game when $G$ is the
infinite $2$-dimensional integer grid with one-step horizontal, vertical, and
diagonal edges.  They observed that the spy wins $\RS(G,m,2m-1,1)$ (the spy
stays at the median position), and hence $\sgmr\le r-2m+2$ when $r\ge 2m-1$
(note that always $\sgmr\le\sigma(G,m,r-1)+1$).  For $m=2$, they proved that
$6\FL{r/8}\le \sgtr\le r-2$ when $r\ge3$; they conjectured that the upper bound
is the correct answer.

Cranston, Smyth, and West~\cite{CSW} showed that $\sgmr\le\CL{r/m}$ when 
$G$ has at most one cycle.  Furthermore, let $G$ be a unicyclic graph
consisting of a cycle of length $\ell$ and $t$ vertices not on the cycle.
They showed that if $m\nmid r$ (and as usual $|V(G)|> r/m$ to avoid
degeneracies), then $\sgmr=\FL{r/m}$ if and only if
$\ell\le \max\{\FL{r/m}-t+2,3\}$.

Our objective in this paper is to advance the systematic study of this game.
We show that the trivial lower and upper bounds on $\sgmr$ each may be sharp
on various classes of graphs.  Furthermore, we obtain classes where neither
bound is asymptotically sharp and yet still $\sgmr$ can be determined or
closely approximated.

Say that $G$ is {\it spy-good} if $\sgmr$ equals the trivial lower bound
$\FL{r/m}$ for all $m$ and $r$ such that $r/m<|V(G)|$.  In
Section~\ref{spygoodsec}, we prove that every webbed tree is spy-good, where
a {\it webbed tree} is a graph $G$ containing a rooted spanning tree $T$ such
that every edge of $G$ not in $T$ joins vertices having the same parent in $T$.
For example, every graph having a dominating vertex $u$ is a webbed tree
(rooted at $u$).

Section~\ref{goodbadsec} considers general bounds.  Always
$\sgmr\le\gamma(G)\FL{r/m}$, where $\gamma(G)$ is the domination number of $G$
(the minimum size of a set $S$ such that every vertex outside $S$ has a
neighbor in $S$).  Since always $\FL{r/m}\ge (r-m+1)/m$, this upper bound is
nontrivial only when $\gamma(G)<m$.  In that case, it is nearly sharp: for
$t,m,r\in\NN$ with $t<m$, we construct a graph with domination number $t$ such
that $\sgmr>t(r/m-1)$.

In contrast to spy-good graphs, a graph $G$ is {\it spy-bad} {\it for $r$
\revs\ and meeting size $m$} if $\sgmr$ equals the trivial upper bound $r-m+1$.
Section~\ref{goodbadsec} constructs chordal graphs (and bipartite graphs)
that are spy-bad (for given $r$ and $m$).

In Section~\ref{hypersec} we study hypercubes, showing first that the
$d$-dimensional hypercube $Q_d$ is spy-bad when $d\ge r$ and $m=2$.  Also,
the winning strategy for the \revs\ uses only vertices near a fixed vertex.
By splitting the \revs\ into disjoint groups who play this strategy around
vertices far apart, it follows that when $d<r\le 2^d/d^8$, the revolutionaries
win against $(d-1)\FL{r/d}$ spies on $Q_d$ (for $m=2$).  For general $m$, we
show that hypercubes are nearly spy-bad by proving $\sigma(Q_d,m,r)\ge r-39m$
for $d\ge r\ge m$.  (For small $m$, the bound $\sigma(Q_d,m,r)\ge r-\FR34 m^2$
when $d\ge r\ge m$ is better.)

In these examples of spy-bad graphs, there are few \revs\ compared to the
number of vertices.  Similar behavior holds for the random graph with constant
edge-probability (Section~\ref{randsec}); the threshold for spies to win
depends on the relationship between $r$ and the number of vertices, $n$.
Via fairly simple arguments, we obtain constants $c$ and $c'$ (depending on
$m$) such that almost always $r-m+1$ spies are needed when $r<c\ln n$, while a
multiple of $r/m$ spies are enough when $r>c'\ln n$.
Using more intricate structural characteristics of the random graph and a more
complex strategy for the spies, Mitsche and Pra\l at~\cite{MP} independently
proved that $\sgmr=(1+o(1))r/m$ spies suffice when $r$ grows faster than
$(\log n)/p$ (here also $p$ may depend on $n$).

A complete $k$-partite graph is {\it $r$-large} if each part has at least $2r$
vertices, which is as many vertices as the players might want to use.
In Section~\ref{multisec}, we prove $\sgmr\ge \FR k{k-1}\FR rm+k$.  Also
$\sgmr\ge\FR k{k-1}\FR r{m+c}-k$ when $k\ge m$ and $c=\FR 1{k-1}$.

Section~\ref{bicliqsec} focuses on complete bipartite graphs and contains our
most delicate results.  When $G$ is an $r$-large complete bipartite graph, we
obtain $\sgtr=\bipmtwo$ and $\sghr=\FL{r/2}$.  For larger $m$ we do not have
the complete answer; we prove
$$
\left(\FR 32-o(1)\right)\FR rm-2\le \sgmr
\le \left(1+\FR1{\sqrt3}\right)\FR rm<1.58\FR rm,
$$
where the upper bound requires $\FR rm\ge\FR1{1-1/\sqrt3}$.  We conjecture that
$\sgmr$ is approximately $\frac{3r}{2m}$ when $3$ divides $m$, but in other
cases the \revs\ do a bit better.  That advantage should fade as $m$ grows,
with $\sgmr\sim \frac{3r}{2m}$.

Upper bounds for $\sgmr$ are proved using strategies for the spies.
We define a notion of \emph{stable position} in the game.  Proving that a
particular number of spies can win involves showing that in a stable position
all meetings are guarded and that for any move by the \revs\ from a stable
position, the spies can reestablish stability.  This technique is used for
graphs with dominating vertices and for webbed trees in
Section~\ref{spygoodsec}, for random graphs in Section~\ref{randsec}, and for
complete multipartite and complete bipartite graphs in Sections~\ref{multisec}
and~\ref{bicliqsec}.  Each setting uses its own definition of stability
tailored to the graphs under study.

Lower bounds are proved by strategies for the \revs, which usually are much
simpler.  Most of our winning strategies for \revs\ take at most two rounds,
but on hypercubes they take $m-1$ rounds.  In~\cite{CSW}, strategies for \revs\
proving that $\sigma(C_n,m,r)=\CL{r/m}$ (when $r/m<n$) may take many rounds.

Many questions remain open, such as a characterization of spy-good graphs.
In all known spy-good graphs, the spies can ensure that at the end of each
round the number of spies at any vertex $v$ is at least $\FL{r(v)/m}$, where
$r(v)$ is the number of \revs\ at $v$.  Existence of such a strategy is
preserved when vertices expand into a complete subgraph.  Also, Howard and
Smyth~\cite{HS} observed that $\sgmr$ is preserved by taking the distance power
of a graph.  Hence every graph obtained from some webbed tree via some sequence
of distance powers or vertex expansions is spy-good, but these are not the only
spy-good graphs.

It would also be interesting to bound $\sgmr$ in terms of other graph 
parameters, such as treewidth.  Generalizations of the game are also possible,
such as by allowing players to travel farther in a move or by requiring more
spies to guard a meeting.  One can also consider analogous games on directed
graphs.

\section{Dominating Vertices and Webbed Trees}\label{spygoodsec}
We begin with graphs having a {\it dominating vertex} (a vertex adjacent to all
others); we then apply this result to webbed trees.  Let $N(v)$ denote the
neighborhood of a vertex $v$.  Also $N[v]=N(v)\cup \{v\}$, and
$N(S)=\bigcup_{v\in S}N(v)$.

\begin{defin}\label{dstable}
For a graph $G$ having a dominating vertex $u$, a position in the game $\RSG$
is {\it stable} if, for each vertex $v$ other than $u$, the number of spies at
$v$ is exactly $\FL{r(v)/m}$, where $r(v)$ is the number of \revs\ at $v$.  The
other spies, if any, are at $u$.
\end{defin}

\begin{thm}\label{dom}
If a graph $G$ has a dominating vertex, then $\sgmr=\FL{r/m}$.
\end{thm}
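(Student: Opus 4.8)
The lower bound $\sgmr\ge\FL{r/m}$ is the trivial bound from the introduction, so the plan is to show that $\FL{r/m}$ spies suffice, using the stable positions of Definition~\ref{dstable}. Let $u$ be the dominating vertex, and for a position write $r(v)$ for the number of \revs\ at $v$. Following the technique described in the introduction, two obligations remain: that every meeting is guarded in a stable position, and that $\FL{r/m}$ spies can occupy a stable position initially and restore one after any \rev\ move.

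First I would verify the guarding. At a vertex $v\ne u$, a meeting requires $r(v)\ge m$, and then the stable count $\FL{r(v)/m}\ge 1$ guards it. For $u$ itself, the number of spies present is $\FL{r/m}-\sum_{v\ne u}\FL{r(v)/m}$; superadditivity of the floor, namely $\sum_v\FL{r(v)/m}\le\FL{(\sum_v r(v))/m}=\FL{r/m}$, shows this is at least $\FL{r(u)/m}$, so $u$ is guarded whenever it holds a meeting. The same inequality gives $\sum_{v\ne u}\FL{r(v)/m}\le\FL{r/m}$, which (since spies place after \revs) lets the spies occupy a stable position at the outset.

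The crux is restoring stability after the \revs\ move from a stable position with counts $r(\cdot)$ to new counts $r'(\cdot)$. I would model the spies' response as a transportation problem: a spy at $v$ may move to any $w\in N[v]$, we demand $\FL{r'(w)/m}$ spies at each $w\ne u$ with the surplus routed to $u$, and the supply at each $v$ is its current stable count. Since $u$ dominates, $u\in N[W]$ for every nonempty $W$ (where $N[W]=\bigcup_{w\in W}N[w]$), so the Hall/max-flow-min-cut feasibility condition for this problem reduces, for each set $W$ not containing $u$ (the case $u\in W$ being immediate because then $N[W]=V$), to the inequality $\sum_{w\in W}\FL{r'(w)/m}+\sum_{v\notin N[W]}\FL{r(v)/m}\le\FL{r/m}$. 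The key observation making this provable is that a \rev\ ending in $W$ must have started in $N[W]$; hence the \revs\ ending in $W$ are disjoint from those starting outside $N[W]$, giving $\sum_{w\in W}r'(w)+\sum_{v\notin N[W]}r(v)\le r$, and superadditivity of the floor then yields the displayed bound.

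I expect this restoration step to be the main obstacle, both in choosing the right feasibility framework and in resisting the tempting but flawed shortcut of sending every surplus spy to $u$ and drawing every deficit back out of $u$: because all moves are simultaneous, $u$ need not hold enough spies at the instant they are required, and small examples (e.g.\ with $m=2$ and \revs\ converging onto a common neighbor) show that this direct scheme can fail. Viewing the spies globally as tracking \revs\ along edges---which is exactly what the membership $w\in N[v]$ encodes---is what lets the Hall condition go through.
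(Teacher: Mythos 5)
Your proof is correct, and it implements the same underlying idea as the paper's---reestablish stability after each round via a Hall-type feasibility argument---but with a genuinely different encoding. The paper builds its auxiliary bipartite graph on \emph{meetings}: one side is a maximal family $X$ of disjoint $m$-sets of \revs\ at vertices other than $u$ before the move, padded by dummy elements $X'$ for the excess spies at $u$; the other side is such a family $Y$ after the move, with an edge when some \rev\ passes from an old meeting to a new one (and all of $X'$ joined to all of $Y$). Hall's condition is verified by the count $m|S|\le m|N_H(S)\cap X|+(r-km)$, and a matching covering $Y$ is translated back into legal spy moves: a spy can follow the \rev\ linking its old meeting to its new one, and spies at $u$ can go anywhere. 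You instead pose a vertex-level transportation problem in which supplies are the current stable counts, demands are the target stable counts, and admissible moves are given by closed neighborhoods; your cut condition is verified by the disjointness of \revs\ ending in $W$ from \revs\ starting outside $N[W]$, plus superadditivity of the floor. The dominating vertex also plays slightly different roles: in the paper it makes the excess spies universal wildcards in the matching, while for you it only guarantees $u\in N[W]$, so the (possibly large) supply at $u$ never lands on the deficient side of a cut. As for what each approach buys: in your version the demands are literally the new stable counts, so feasibility \emph{is} restoration and no translation step is needed, and the verification is arguably cleaner; the paper's meeting-to-meeting matching needs only plain Hall's theorem rather than the integral supply/demand version (the two are equivalent after splitting vertices into unit copies, so this is a matter of packaging). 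Your closing observation---that simultaneity of spy moves defeats the naive scheme of routing all surplus through $u$---is also correct, and it is exactly the reason some matching or flow argument is unavoidable here.
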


\begin{proof}
Let $u$ be a dominating vertex in $G$, and let $s = \FL{r/m}$.  Since
$s=\FL{r/m}$, a stable position will have a spy at $u$ if there is a meeting at
$u$.  Hence a stable position has no unguarded meeting.  When $s=\FL{r/m}$,
there are enough spies to establish a stable position after the initial round.
We show that the spies can reestablish a stable position at the end of each
round.

Consider a stable position at the start of round $t$.  Let $X$ be a maximal
family of disjoint sets of $m$ \revs\ such that each set is located at one
vertex other than $u$.  Let $Y$ be such a maximal family after the \revs\ move
in round $t$.  In $X$ or $Y$, more than one set may be located at a single
vertex in $G$.  For example, a vertex $v$ having $pm+q$ \revs\ at the start of
round $t$ (where $0\le q<m$) corresponds to $p$ elements of $X$, and there are
$p$ spies at $v$ at that time.

Let $X =\{\VEC x1k\}$ and $Y = \{\VEC y1{k'}\}$.  Let $X'=\{\VEC x{k+1}s\}$,
representing the excess spies waiting at $u$ after round $t$.
Define an auxiliary bipartite graph $H$ with partite sets $X\cup X'$ and $Y$.
For $x_i \in X$ and $y_j \in Y$, put $x_iy_j\in E(H)$ if some \rev\ from
meeting $x_i$ is in meeting $y_j$ (note that $x_i$ and $y_j$ may be the same
set).  Also make all of $X'$ adjacent to all of $Y$.  If some matching in $H$
covers $Y$, then the spies can move so that every vertex other than $u$ having
$p'm+q'$ \revs\ at the end of round $t$ (where $0\le q'<m$) has exactly $p'$
spies on it (and the remaining spies are at $u$).

The existence of such a matching follows from Hall's Theorem.  For $S\esub Y$, 
always $X' \subseteq N(S)$, so $|N(S)| = |X'| + |N(S) \cap X|$.  Consider the
$m|S|$ \revs\ in the meetings corresponding to $S$.  Such \revs\ came from
meetings in $|N(S)\cap X|$ or were not in any of the $k$ meetings indexed by
$X$.  Hence $m|S|\le m|N(S)\cap X|+(r-km)$.  Since $|X'|=s-k$ and $s=\FL{r/m}$,
$$
 |N(S)| \geq |X'| + |S| - \left( \lfloor {r/m} \rfloor -k\right) 
        = s-k + |S| - \left( \lfloor {r/m} \rfloor -k\right) = |S|,
$$
so Hall's Condition holds.
\end{proof}

\begin{cor}
Fix $n,m,r$ with $n\ge r/m$.  For $0 \leq k \leq {n \choose 2}$, there is an
$n$-vertex graph $G$ with $k$ edges such that $\sgmr=\FL{r/m}$.
\end{cor}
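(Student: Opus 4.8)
The plan is to realize every edge count either by a graph with a dominating vertex, where Theorem~\ref{dom} applies directly, or by a disjoint union of such graphs. In all cases the lower bound is free: since $n\ge r/m\ge\FL{r/m}$, the \revs\ can place $m$ of themselves on each of $\FL{r/m}$ distinct vertices, creating $\FL{r/m}$ meetings at once, so at least $\FL{r/m}$ spies are needed merely to survive the first round. Hence it suffices to produce, for each $k$, an $n$-vertex graph with exactly $k$ edges on which $\FL{r/m}$ spies win.

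For $n-1\le k\le\binom n2$ I would take the star with center $u$ and $n-1$ leaves and then add any $k-(n-1)$ of the $\binom{n-1}2$ possible edges among the leaves; since $n-1+\binom{n-1}2=\binom n2$, every such $k$ occurs, and $u$ remains a dominating vertex, so Theorem~\ref{dom} gives $\sgmr=\FL{r/m}$. The only case needing a new idea is $0\le k\le n-2$, where the graph has fewer than $n-1$ edges, hence is disconnected and has no dominating vertex; here I would take $G$ to be a star $K_{1,k}$ together with $n-1-k$ isolated vertices. The point is that $G$ has no edges between its components, so no \rev\ ever leaves its component: the number $r_i$ of \revs\ in each component $C_i$ is fixed at the outset, and once the spies see the initial placement the game decouples into independent subgames on the $C_i$. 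Each $C_i$ has a dominating vertex (the star's center, or the isolated vertex itself), so by the strategy in the proof of Theorem~\ref{dom}---which uses only that the number of spies deployed there equals $\FL{r_i/m}$, and not any relation between $|V(C_i)|$ and $r_i$---the spies defend $C_i$ with $\FL{r_i/m}$ spies, however many \revs\ crowd into it. As $\sum_i\FL{r_i/m}\le\FL{(\sum_i r_i)/m}=\FL{r/m}$, the global budget of $\FL{r/m}$ spies is never exceeded, matching the lower bound.

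The one subtlety to watch is precisely this last case: one must note that a graph with fewer than $n-1$ edges has no dominating vertex, so Theorem~\ref{dom} cannot be quoted wholesale, and instead argue the decoupling of the disconnected game and verify that the winning strategy of Theorem~\ref{dom} survives when a component carries more \revs\ than the paper's standing nondegeneracy assumption would permit. The inequality $\FL a+\FL b\le\FL{a+b}$ then keeps the per-component spy counts within the total $\FL{r/m}$. Everything else is a routine edge count and a direct appeal to Theorem~\ref{dom}.
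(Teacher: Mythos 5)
Your proposal is correct and is essentially the paper's own argument: for $k\ge n-1$ a star with extra edges among its leaves (so Theorem~\ref{dom} applies directly), and for smaller $k$ a star $K_{1,k}$ plus isolated vertices, handled component-by-component via Theorem~\ref{dom} together with $\FL{a}+\FL{b}\le\FL{a+b}$. The paper compresses the disconnected case into one line, while you spell out the decoupling and the observation that the spy strategy of Theorem~\ref{dom} does not need the standing nondegeneracy assumption per component; that is a faithful elaboration, not a different route.
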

\begin{proof}
For $k\ge n$, form $G$ by adding the desired number of edges joining leaves of
an $n$-vertex star; Theorem~\ref{dom} applies.  For $k \le n-1$, let $G$ be a
star plus isolated vertices; use Theorem~\ref{dom} and
$\FL{a}+\FL{b}\le\FL{a+b}$.
\end{proof}

\begin{defin}\label{webdef}
For any vertex $v$ in a rooted tree, the \emph{parent} of a non-root vertex
$v$ (written $v^+$) is the first vertex after $v$ on the path from $v$ to the
root.  The set of \emph{children} of $v$ (written $C(v)$) is the set of
neighbors of $v$ other than its parent, and the set of \emph{descendants} of
$v$ (written $D(v)$) is the set of vertices whose path to the root contains $v$.
A \emph{webbed tree} is a graph $G$ having a rooted spanning tree $T$ such that
every edge of $G$ outside $T$ joins two vertices having the same parent
(called \emph{siblings}).  Figure~\ref{webtree} shows a webbed tree, with the
rooted spanning tree $T$ in bold.
\end{defin}

Trivially, every tree is a webbed tree, as is every graph having a dominating
vertex.  In fact, a 2-connected graph is a webbed tree if and only if it has a
dominating vertex.  Every webbed tree is a graph whose blocks have dominating
vertices, but the converse does not hold.  Consider the graph obtained from two
$4$-cycles with a common vertex by adding chords of the $4$-cycles to create
four vertices of degree $3$; every block has a dominating vertex, but the graph
is not a webbed tree.

Our main result in this section is that all webbed trees are spy-good.  This
conclusion is proved for trees in~\cite{CSW}.  In that paper, an invariant
defined in terms of the positions of the \revs\ specifies how many spies should
be placed on each vertex.  The invariant guarantees that all meetings are
covered, and a direct proof is given to show that the spies can restore the
invariant after each round.

Here we use the same invariant to generalize the tree result to the class of
webbed trees.  Our method of proving that the invariant has the desired
properties is different from that in~\cite{CSW}.  Here we decompose the spies'
response into independent responses in imagined games on subgraphs having a
dominating vertex.  After the \revs\ move, the spies restore the invariant by
applying the strategy in Theorem~\ref{dom} independently to each graph induced
by a vertex and its children in the spanning tree.  Because we will apply
Theorem~\ref{dom}, we don't use ``stable'' for positions satisfying the
invariant in a webbed tree; instead, we reserve that term for positions in the
auxiliary local games, whose graphs have dominating vertices.

In~\cite{CSW}, the result on trees is extended in a different direction to
determine the winner in $\RSG$ whenever $G$ has at most one cycle.  A similar
extension is possible here for graphs obtained by adding a cycle through the
roots of disjoint webbed trees, but the resulting family is not as natural as
the family of unicyclic graphs.

\begin{thm}\label{webbed}
If $G$ is a webbed tree, then $\sgmr=\FL{r/m}$.
\end{thm}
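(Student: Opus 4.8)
The lower bound $\sgmr\ge\FL{r/m}$ is immediate, so the plan is to exhibit a winning strategy for $s=\FL{r/m}$ spies. Following the tree strategy of~\cite{CSW}, I would have the spies maintain the invariant that, writing $r(v)$ for the number of \revs\ at $v$ and $R(v)=\sum_{u\in D(v)}r(u)$ for the number in the subtree rooted at $v$, the number of spies at each vertex $v$ equals $\FL{R(v)/m}-\sum_{c\in C(v)}\FL{R(c)/m}$. First I would record the elementary facts that make this invariant usable: it is nonnegative (by superadditivity of the floor, $\sum_{c}\FL{R(c)/m}\le\FL{R(v)/m}$), the counts sum to $\FL{R(\text{root})/m}=\FL{r/m}$, and the number of spies in each subtree $D(v)$ telescopes to $\FL{R(v)/m}$. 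Crucially, the invariant guards every meeting: if $r(v)\ge m$ then $R(v)\ge m+m\sum_{c}\FL{R(c)/m}$, forcing $\FL{R(v)/m}-\sum_{c}\FL{R(c)/m}\ge1$. Since the counts are nonnegative and sum to $\FL{r/m}$, the spies can install the invariant right after the \revs\ choose initial positions, so it remains only to show they can restore it at the end of every round.

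The heart of the argument is the restoration step, and here I would use the webbed-tree hypothesis exactly as the paper proposes, by decomposing the response into imagined local games. For each internal vertex $w$, let $G_w=G[\{w\}\cup C(w)]$; since $w$ is adjacent in $T$ to each of its children, $w$ is a dominating vertex of $G_w$, so Theorem~\ref{dom} applies. In the imagined game on $G_w$ I would place $r(w)$ \revs\ at the center and $R(c)$ \revs\ at each child $c$, so that the stable position of Theorem~\ref{dom} assigns $\FL{R(c)/m}$ spies to $c$ — exactly the number the global invariant keeps inside $D(c)$ — and the remaining $\FL{R(w)/m}-\sum_c\FL{R(c)/m}$ spies to $w$. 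The webbed-tree property is what lets these local games capture all the dynamics: every non-tree edge joins two siblings and hence lies inside the local graph of their common parent, while every tree edge $c c^+$ lies inside $G_{c^+}$. Consequently each \rev\ move is an edge of a \emph{unique} $G_w$, and the only edges joining two distinct child-subtrees $D(c)$ and $D(c')$ are sibling edges $cc'$ at the top. Thus each subtree count $R(c)$ changes only through crossings at the single vertex $c$, which is precisely the traffic the imagined game $G_w$ sees between child $c$ and the rest of $G_w$; restabilizing every $G_w$ by Theorem~\ref{dom} therefore drives each $D(c)$ to its new target of $\FL{R'(c)/m}$ spies and restores the global invariant.

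The step I expect to be the main obstacle is assembling these independent local responses into one legal simultaneous move in which no spy moves twice and every relocation is a single edge of $G$. The danger is that a spy Theorem~\ref{dom} treats as sitting ``at child $c$'' is really somewhere in $D(c)$, so a locally prescribed move of it up to $w$ or across to a sibling need not be realizable in one step. I would resolve this using the matching that drives Theorem~\ref{dom}: whenever the local matching in $G_w$ reassigns the spy of an old meeting at $c$ to a new meeting at $w$ or at a sibling $c'$, the pairing exists only because a \rev\ traversed the connecting edge, and by the webbed-tree structure the sole inter-subtree edge available to that \rev\ is the sibling edge $cc'$ (or the tree edge $cw$); hence the crossing \rev\ — and the spy that follows it — is physically at $c$, making the move legal. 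Each spy lies in exactly two local graphs, as the center of $G_v$ and as a top vertex of $G_{v^+}$, so I must also check that these two games never both move it; I would arrange this from the direction of the net boundary flow, using that the supply of top-of-subtree spies meets the demand computed by the parent's game because both are governed by the same quantity $\FL{R(c)/m}$. Verifying this bookkeeping — that the local Hall matchings compose into a single system of vertex-disjoint one-step spy moves — is the crux; once it is in place, the stable position of each $G_w$ certifies that the restored global position again satisfies the invariant and guards every meeting.
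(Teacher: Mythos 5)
Your invariant, its basic properties (nonnegativity, telescoping, guarding of meetings), and the overall plan of decomposing the spies' response into local games on the subgraphs induced by a vertex and its children all match the paper. The gap is in how you set up the local games, and it sits exactly at the step you yourself flag as the crux. You place full subtree totals in the parent's game: $R(c)$ \revs\ and $\FL{R(c)/m}$ spies imagined at each child $c$ of $w$. Under this aggregation a physical spy (or \rev) at a vertex $v$ participates not in two local games but in $G_v$ and in the games of \emph{all} ancestors of $v$, since it is counted in the aggregate of every subtree containing it; so your claim that ``each spy lies in exactly two local graphs'' is inconsistent with your own setup, and playing the games independently cannot by itself compose into one legal round (several games may order the same physical spy to move, and a \rev\ at the center of $G_v$ may exit by an edge that $G_v$ cannot model). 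Worse, the spies that game $G_w$ treats as sitting at $c$ are physically spread through $D(c)$: only $\FL{R(c)/m}-\SM x{C(c)}\FL{R(x)/m}$ of them are actually at $c$, and this can be zero. Your repair --- that any spy the parent's matching moves across the boundary ``follows a crossing \rev'' and is therefore at $c$ --- is not justified: the matching of Theorem~\ref{dom} pairs imagined meetings, and nothing in your construction ties the imagined spy selected by that matching to a physical spy located at $c$ rather than deep inside $D(c)$. (A net-flow count does show that the parent game never needs to extract more spies from $D(c)$ than are physically at $c$, since $R'(c)\ge R(c)-r(c)\ge\SM x{C(c)}R(x)$; but that is a statement about totals, not about the particular spies the matching selects, and it says nothing about simultaneous consistency with the demands of the games inside $D(c)$.)

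The paper closes precisely this hole by never aggregating subtrees. It introduces $w^*(v)$, the number of \revs\ in $D(v)-\{v\}$ either before or after the \revs' move, and uses it to split the spies physically at $v$ into $\us(v)=\FL{w(v)/m}-\FL{w^*(v)/m}$ spies assigned to the parent's game $G(v^+)$ and $\ds(v)=\FL{w^*(v)/m}-\SM x{C(v)}\FL{w(x)/m}$ spies assigned to $G(v)$, together with imagined \rev\ counts $\ur(v)$ and $\dr(v)$ at $v$ in the two games. With these definitions every spy participates in exactly one local game, every participant is physically at the vertex where its game places it (so all prescribed moves are legal single-edge moves and no spy moves twice), and the content of the proof is then to verify that these counts form stable local positions, that $\ur(v)$ and $\dr(v)$ are large enough to absorb every actual crossing, and that after Theorem~\ref{dom} is applied locally the resulting spy counts recombine into the global invariant. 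Those definitions and verifications are what your proposal is missing; without them the assembly step you correctly identify as the main obstacle remains unproved.
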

\begin{proof}
Let $T$ be a rooted spanning tree in $G$ such that every edge of $G$ not in
$T$ joins sibling vertices in $T$.  Let $z$ be the root of $T$, and let
$s=\FL{r/m}$.  The notation for children and descendants is as in
Definition~\ref{webdef} with respect to $T$.

For each vertex $v$, let $r(v)$ and $s(v)$ denote the number of \revs\ and
spies on $v$ at the current time, respectively, and let $w(v)=\SM u{D(v)}r(u)$.
The spies maintain the following invariant specifying the number of spies on
each vertex at the end of any round:
\begin{equation}\label{invar}
s(v)=\floor{\frac{w(v)}{m}}-\sum_{x\in C(v)} \floor{\frac{w(x)}{m}}
\qquad\textrm{for }v\in V(G).
\end{equation}
Since $\SM x{C(v)}w(x)=w(v)-r(v)$, the formula is always nonnegative.  Also, if
$r(v)\ge m$, then $s(v)\ge \FL{\FR{w(v)}m}-\FL{\FR{w(v)-r(v)}m}\ge 1$.  Hence
(\ref{invar}) guarantees that every meeting is guarded.

To show that the spies can establish (\ref{invar}) after the first round, it
suffices that all the formulas sum to $\floor{r/m}$.  More generally, summing
over the descendants of any vertex $v$,
\begin{equation}\label{telsum}
\SM u{D(v)} s(u)= \FL{\FR{w(v)}m},
\end{equation}
since $\FL{w(u)/m}$ occurs positively in the term for $u$ and negatively in the
term for $u^+$, except that $\FL{w(v)/m}$ occurs only positively.  When $v=z$,
the total is $\FL{r/m}$, since $w(z)=r$.

To show that the spies can maintain (\ref{invar}), let $r(v)$ and $s(v)$ refer
to the start of round $t$, let $r'(v)$ denote the number of \revs\ at $v$ after
the \revs\ move in round $t$, and let $w'(v)=\SM u{D(v)}r'(v)$.  The spies will
move in round $t$ to achieve the new values required by (\ref{invar}).
To determine these moves, we will use Theorem~\ref{dom} to obtain a stable
position in each subgraph induced by a vertex and its children, independently.
Let $G(v)$ denote the subgraph induced by $C(v)\cup\{v\}$; note that $v$ is a
dominating vertex in $G(v)$.  We will play a round in an imagined ``local''
game on $G(v)$ for each vertex $v$.

\begin{figure}[hbt]
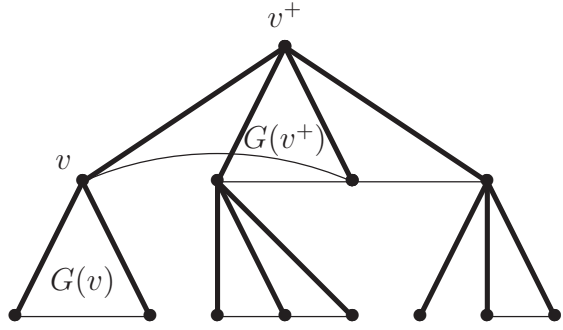

\gpic{
\expandafter\ifx\csname graph\endcsname\relax \csname newbox\endcsname\graph\fi
\expandafter\ifx\csname graphtemp\endcsname\relax \csname newdimen\endcsname\graphtemp\fi
\setbox\graph=\vtop{\vskip 0pt\hbox{%
    \graphtemp=.5ex\advance\graphtemp by 0.141in
    \rlap{\kern 1.500in\lower\graphtemp\hbox to 0pt{\hss $\bu$\hss}}%
    \graphtemp=.5ex\advance\graphtemp by 0.847in
    \rlap{\kern 0.441in\lower\graphtemp\hbox to 0pt{\hss $\bu$\hss}}%
    \graphtemp=.5ex\advance\graphtemp by 0.847in
    \rlap{\kern 1.147in\lower\graphtemp\hbox to 0pt{\hss $\bu$\hss}}%
    \graphtemp=.5ex\advance\graphtemp by 0.847in
    \rlap{\kern 1.853in\lower\graphtemp\hbox to 0pt{\hss $\bu$\hss}}%
    \graphtemp=.5ex\advance\graphtemp by 1.553in
    \rlap{\kern 0.088in\lower\graphtemp\hbox to 0pt{\hss $\bu$\hss}}%
    \graphtemp=.5ex\advance\graphtemp by 1.553in
    \rlap{\kern 0.794in\lower\graphtemp\hbox to 0pt{\hss $\bu$\hss}}%
    \graphtemp=.5ex\advance\graphtemp by 1.553in
    \rlap{\kern 1.147in\lower\graphtemp\hbox to 0pt{\hss $\bu$\hss}}%
    \graphtemp=.5ex\advance\graphtemp by 1.553in
    \rlap{\kern 1.500in\lower\graphtemp\hbox to 0pt{\hss $\bu$\hss}}%
    \graphtemp=.5ex\advance\graphtemp by 1.553in
    \rlap{\kern 1.853in\lower\graphtemp\hbox to 0pt{\hss $\bu$\hss}}%
    \graphtemp=.5ex\advance\graphtemp by 0.000in
    \rlap{\kern 1.500in\lower\graphtemp\hbox to 0pt{\hss $v^+$\hss}}%
    \graphtemp=.5ex\advance\graphtemp by 0.747in
    \rlap{\kern 0.341in\lower\graphtemp\hbox to 0pt{\hss $v$\hss}}%
    \graphtemp=.5ex\advance\graphtemp by 0.635in
    \rlap{\kern 1.500in\lower\graphtemp\hbox to 0pt{\hss $G(v^+)$\hss}}%
    \graphtemp=.5ex\advance\graphtemp by 1.376in
    \rlap{\kern 0.441in\lower\graphtemp\hbox to 0pt{\hss $G(v)$\hss}}%
    \special{pn 8}%
    \special{pa 1500 141}%
    \special{pa 1147 847}%
    \special{pa 1853 847}%
    \special{pa 1500 141}%
    \special{pa 441 847}%
    \special{pa 88 1553}%
    \special{pa 794 1553}%
    \special{pa 441 847}%
    \special{fp}%
    \special{pa 1147 847}%
    \special{pa 1147 1553}%
    \special{pa 1500 1553}%
    \special{pa 1853 1553}%
    \special{pa 1147 847}%
    \special{pa 1500 1553}%
    \special{fp}%
    \special{ar 1147 2464 1765 1765 -1.982313 -1.159279}%
    \graphtemp=.5ex\advance\graphtemp by 0.141in
    \rlap{\kern 1.500in\lower\graphtemp\hbox to 0pt{\hss $\bu$\hss}}%
    \graphtemp=.5ex\advance\graphtemp by 0.847in
    \rlap{\kern 1.853in\lower\graphtemp\hbox to 0pt{\hss $\bu$\hss}}%
    \graphtemp=.5ex\advance\graphtemp by 0.847in
    \rlap{\kern 2.559in\lower\graphtemp\hbox to 0pt{\hss $\bu$\hss}}%
    \graphtemp=.5ex\advance\graphtemp by 1.553in
    \rlap{\kern 2.206in\lower\graphtemp\hbox to 0pt{\hss $\bu$\hss}}%
    \graphtemp=.5ex\advance\graphtemp by 1.553in
    \rlap{\kern 2.559in\lower\graphtemp\hbox to 0pt{\hss $\bu$\hss}}%
    \graphtemp=.5ex\advance\graphtemp by 1.553in
    \rlap{\kern 2.912in\lower\graphtemp\hbox to 0pt{\hss $\bu$\hss}}%
    \special{pa 2206 1553}%
    \special{pa 2559 847}%
    \special{pa 2559 1553}%
    \special{pa 2912 1553}%
    \special{pa 2559 847}%
    \special{pa 1500 141}%
    \special{pa 1853 847}%
    \special{pa 2559 847}%
    \special{fp}%
    \special{pn 28}%
    \special{pa 1500 141}%
    \special{pa 2559 847}%
    \special{fp}%
    \special{pa 1500 141}%
    \special{pa 1853 847}%
    \special{fp}%
    \special{pa 1500 141}%
    \special{pa 441 847}%
    \special{fp}%
    \special{pa 1500 141}%
    \special{pa 1147 847}%
    \special{fp}%
    \special{pa 88 1553}%
    \special{pa 441 847}%
    \special{fp}%
    \special{pa 441 847}%
    \special{pa 794 1553}%
    \special{fp}%
    \special{pa 1147 847}%
    \special{pa 1853 1553}%
    \special{fp}%
    \special{pa 1147 847}%
    \special{pa 1147 1553}%
    \special{fp}%
    \special{pa 1147 847}%
    \special{pa 1500 1553}%
    \special{fp}%
    \special{pa 2559 847}%
    \special{pa 2912 1553}%
    \special{fp}%
    \special{pa 2559 847}%
    \special{pa 2206 1553}%
    \special{fp}%
    \special{pa 2559 847}%
    \special{pa 2559 1553}%
    \special{fp}%
    \graphtemp=.5ex\advance\graphtemp by 0.141in
    \rlap{\kern 1.500in\lower\graphtemp\hbox to 0pt{\hss $\bu$\hss}}%
    \graphtemp=.5ex\advance\graphtemp by 0.847in
    \rlap{\kern 0.441in\lower\graphtemp\hbox to 0pt{\hss $\bu$\hss}}%
    \graphtemp=.5ex\advance\graphtemp by 0.847in
    \rlap{\kern 1.147in\lower\graphtemp\hbox to 0pt{\hss $\bu$\hss}}%
    \graphtemp=.5ex\advance\graphtemp by 0.847in
    \rlap{\kern 1.853in\lower\graphtemp\hbox to 0pt{\hss $\bu$\hss}}%
    \graphtemp=.5ex\advance\graphtemp by 1.553in
    \rlap{\kern 0.088in\lower\graphtemp\hbox to 0pt{\hss $\bu$\hss}}%
    \graphtemp=.5ex\advance\graphtemp by 1.553in
    \rlap{\kern 0.794in\lower\graphtemp\hbox to 0pt{\hss $\bu$\hss}}%
    \graphtemp=.5ex\advance\graphtemp by 1.553in
    \rlap{\kern 1.147in\lower\graphtemp\hbox to 0pt{\hss $\bu$\hss}}%
    \graphtemp=.5ex\advance\graphtemp by 1.553in
    \rlap{\kern 1.500in\lower\graphtemp\hbox to 0pt{\hss $\bu$\hss}}%
    \graphtemp=.5ex\advance\graphtemp by 1.553in
    \rlap{\kern 1.853in\lower\graphtemp\hbox to 0pt{\hss $\bu$\hss}}%
    \graphtemp=.5ex\advance\graphtemp by 0.141in
    \rlap{\kern 1.500in\lower\graphtemp\hbox to 0pt{\hss $\bu$\hss}}%
    \graphtemp=.5ex\advance\graphtemp by 0.847in
    \rlap{\kern 1.853in\lower\graphtemp\hbox to 0pt{\hss $\bu$\hss}}%
    \graphtemp=.5ex\advance\graphtemp by 0.847in
    \rlap{\kern 2.559in\lower\graphtemp\hbox to 0pt{\hss $\bu$\hss}}%
    \graphtemp=.5ex\advance\graphtemp by 1.553in
    \rlap{\kern 2.206in\lower\graphtemp\hbox to 0pt{\hss $\bu$\hss}}%
    \graphtemp=.5ex\advance\graphtemp by 1.553in
    \rlap{\kern 2.559in\lower\graphtemp\hbox to 0pt{\hss $\bu$\hss}}%
    \graphtemp=.5ex\advance\graphtemp by 1.553in
    \rlap{\kern 2.912in\lower\graphtemp\hbox to 0pt{\hss $\bu$\hss}}%
    \hbox{\vrule depth1.641in width0pt height 0pt}%
    \kern 3.000in
  }%
}%
}
\caption{Decomposition of a webbed tree}\label{webtree}
\end{figure}

To set up the local games, we partition the $s(v)$ spies at each vertex $v$
into a set of $\ds(v)$ spies to be used in the local game on $G(v)$ and a set
of $\us(v)$ spies to be used in the local game on $G(v^+)$, where $\ds(v)$ and
$\us(v)$ sum to $s(v)$ (when the tree is drawn with the root $z$ at the top,
the accent indicates the direction of the relevant subgraph).

Let $D^*(v)=D(v)-\{v\}$.  Let $w^*(v)$ be the number of \revs\ that are in
$D^*(v)$ at the start of round $t$ {\it or} are there after the \revs\ move in
round $t$.  Every \rev\ counted by $w^*(v)$ is also counted by $w(v)$, and
every \rev\ counted by $\SM x{C(v)}w(x)$ is also counted by $w^*(v)$.  These
statements also hold with $w'$ in place of $w$.  Hence
\begin{equation}\label{nonneg}
w(v)\ge w^*(v)
\qquad\textrm{and}\qquad
w^*(v)\ge \SM x{C(v)}w(x).
\end{equation}
By (\ref{nonneg}),  $\us(v)$ and $\ds(v)$ are nonnegative when we define
\begin{equation}\label{supdown}
\us(v)=\FL{\FR{w(v)}m}-\FL{\FR{w^*(v)}m}
\qquad\textrm{and}\qquad
\ds(v)=\FL{\FR{w^*(v)}m}-\SM x{C(v)}\FL{\FR{w(x)}m}.
\end{equation}
By (\ref{invar}), $\us(v)+\ds(v)=s(v)$.  Note also that if $v$ is a leaf of
$T$, then $\ds(v)=0$ and $\us(v)=s(v)$.

For each non-leaf vertex $v$, the spies first imagine positions of \revs\ in a
game on the graph $G(v)$ that together with (\ref{supdown}) for the spies form
a stable position.  After viewing the actual moves by \revs\ within $G(v)$ as
moves in this game, the spies reestablish stability as in Theorem~\ref{dom}.
We will show that the resulting positions satisfy the global invariant.  The
spies imagine $\ur(v)$ spies at $v$ in $G(v^+)$ and $\dr(v)$ spies at $v$ in
$G(v)$, where

\begin{equation}\label{rupdown}
\ur(v)=w(v)-m\FL{\FR{w^*(v)}m}
\qquad\textrm{and}\qquad
\dr(v)=w^*(v)-\SM x{C(v)}{w(x)}.
\end{equation}
By (\ref{nonneg}), the values of $\dr(v)$ and $\ur(v)$ are nonnegative.
Furthermore, we claim that if (\ref{supdown}) and (\ref{rupdown}) hold at each
vertex $v$, then the position on each subgraph induced by one parent and its
children is stable.  In $G(v)$ we use $\ds(v)$ and $\dr(v)$, and we use $\us(x)$
and $\ur(x)$ for $x\in C(v)$.  By definition, $\us(x)=\FL{\ur(x)/m}$.  It
remains only to check the sum.  We compute the total number of \revs\ in the
local game:
$$
\dr(v)+\SM x{C(v)}\ur(x) =
w^*(v)-\SM x{C(v)}{w(x)}+ \SM x{C(v)} w(x)-m\SM x{C(v)}\FL{\FR{w^*(x)}m}
$$
Dividing by $m$ yields $\FR{w^*(v)}m-\SM x{C(v)}\FL{\FR{w^*(x)}m}$, whose floor
is $\ds(v)+\SM x{C(v)}\us(x)$, as desired.

The spies next view the actual moves by \revs\ in the global game as moves
by the \revs\ in the imagined local games.  Each such move occurs within the
subgraph $G(v)$ for one vertex $v$.  The local game can model these moves if
the relevant value of $\ur$ or $\dr$ is at least the number of real \revs\
leaving this vertex and staying within this subgraph.  The \revs\ leaving $v$
by edges in $G(v^+)$ are those that were in $D(v)$ and now are not; there are
at most $w(v)-w^*(v)$ of them.  By (\ref{rupdown}), $\ur(v)$ is
at least this large.  Similarly, \revs\ leaving $v$ via $G(v)$ wind up in
$D^*(v)$ but were not there previously, so the number of them is at most
$w^*(v)-\SM x{C(v)} w(x)$, which equals $\dr(v)$.

The net change in the actual number of \revs\ at $v$ is $r'(v)-r(v)$.  Some of
this change is due to moves in $G(v)$ and the rest to moves in $G(v^+)$.  Moves
in $G(v^+)$ enter or leave $D(v)$.  Hence the net change in the number of
\revs\ at $v$ due to such moves is $w'(v)-w(v)$.  The remaining net change,
due to moves between $v$ and its children (in $G(v)$), is
$(r'(v)-r(v))-(w'(v)-w(v))$.  Therefore, after executing the actual moves in
the imagined local games, the new imagined distributions for the \revs\ are
given by
\begin{equation}\label{rchange}
\ur'(v)=\ur(v)+w'(v)-w(v)
\quad\textrm{and}\quad
\dr'(v)=\dr(v)+(r'(v)-r(v))-(w'(v)-w(v)).
\end{equation}
The specification of $\ur(v)$ in (\ref{rupdown}) and the change from $\ur(v)$
to $\ur'(v)$ in (\ref{rchange}) immediately yield the formula for $\ur'(v)$
in (\ref{rnew}).  To obtain $\dr'(v)$, start with the formula for $\dr'(v)$ in
(\ref{rupdown}) and adjust by the definitions of $r(v)-r(v)$ and $w'(v)-r'(v)$,
as indicated in (\ref{rchange}).  We compute
\begin{align*}
\dr'(v)&=\dr(v)+(w(v)-r(v))-(w'(v)-r'(v))\\
&=w^*(v)-\SM x{C(v)}w(x)+\SM x{C(v)}w(x)-\SM x{C(v)}w'(x)
=w^*(v)-\SM x{C(v)}{w'(x)}.
\end{align*}
Thus
\begin{equation}\label{rnew}
\ur'(v)=w'(v)-m\FL{\FR{w^*(v)}m}
\qquad\textrm{and}\qquad
\dr'(v)=w^*(v)-\SM x{C(v)}{w'(x)}.
\end{equation}

The spies now respond in the local games.  By Theorem~\ref{dom}, these
positions are stable, so $\us'(x)=\FL{\ur'(x)/m}$ for $x\in C(v)$, and
$\ds'(v)$ is the leftover amount for $v$ in the local game on $G(v)$.  By the
same computation that earlier showed $\ds(v)$ was the correct needed amount of
spies left for $v$ in $G(v)$, also
$\ds'(v)= \FL{\FR{w^*(v)}m}-\SM x{C(v)}\FL{\FR{w'(x)}m}$.

Because each spy participated in exactly one local game, playing the local
games independently ensures automatically that each spy moves at most once
in round $t$.  Hence the spy moves we have described are feasible.  It remains
only to show that (\ref{invar}) holds for the resulting distribution of spies;
that is
$$
\us'(v)+\ds'(v)=\floor{\frac{w'(v)}{m}}-\sum_{x\in C(v)} \floor{\frac{w'(x)}{m}}
\qquad\textrm{for }v\in V(G).
$$
Since the terms involving $w^*$ again cancel, we use (\ref{rnew}) to show that
$\us'(v)+\ds'(v)$ equals the desired value $s'(v)$ in the same way we used
(\ref{rupdown}) to show that the invented values $\us(v)$ and $\ds(v)$ sum to
$s(v)$.
\end{proof}

\section{Spy-good vs.~Spy-bad}\label{goodbadsec}

It is not true that all spy-good graphs are webbed trees.  Given $G$, let $G^k$
denote the graph defined by $V(G^k)=V(G)$ and $E(G^k)=\{uv\st d_G(u,v)\le k\}$.
The spies can simulate one round of the game on $G^k$ by playing $k$ rounds on
$G$.  Thus $\sigma(G^k,m,r)\le \sgmr$, as noted by Howard and Smyth~\cite{HS}.
This makes the square of a webbed tree spy-good, even though it is not
generally a webbed tree (consider $G=P_n$, for example).

Say that a spy strategy is {\it conformal} if at the end of each round the
number of spies at each vertex $v$ is at least $\FL{r(v)/m}$, where $r(v)$ is
the number of \revs\ there.  For any conformal spy strategy on $G$, the strategy
described above for $G^k$ is also conformal.  Another graph operation also
preserves the existence of conformal strategies.

\begin{prop}
Obtain $G'$ from a graph $G$ by expanding a vertex of $G$ into a clique.  If
$\FL{r/m}$ spies win $\RSG$ by a conformal strategy, then the same holds for
$G'$.
\end{prop}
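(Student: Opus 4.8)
The plan is to let the spies on $G'$ secretly run a shadow copy of the game on $G$, obtained by collapsing the expanded clique back to a single vertex, and then lift the shadow spies' moves to $G'$. Write $u$ for the vertex of $G$ that is expanded and $U=\{u_1,\dots,u_t\}$ for the clique replacing it in $G'$, and define $\pi\colon V(G')\to V(G)$ by $\pi(u_i)=u$ for each $i$ and $\pi(w)=w$ for $w\notin U$. Since every $u_i$ inherits exactly the neighbors of $u$ outside $U$ and $U$ is a clique, $\pi$ carries each legal move in $G'$ to a legal move or a non-move in $G$: an edge within $U$ projects to staying at $u$, an edge from $u_i$ to an outside neighbor projects to the corresponding edge at $u$, and all other edges are unchanged.

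First I would have the spies maintain, at the end of every round, a bijection between the real spies on $G'$ and the shadow spies on $G$ under which each real spy lies over its shadow, together with the guarantee that the shadow position is conformal on $G$. To execute a round, the spies push the revolutionaries' moves through $\pi$ to get a legal revolutionary move in the shadow game, apply the assumed conformal strategy to move the shadow spies, and then lift. A shadow spy landing at some $b\neq u$ pins its real spy to $b$, and this is a legal move because edges outside $U$ are preserved and every $u_j$ is adjacent to every neighbor of $u$. A shadow spy landing at $u$, on the other hand, leaves its real spy free to occupy \emph{any} vertex of $U$: such a spy either stayed at $u$ (so its real spy sat somewhere in $U$ and may move anywhere within the clique) or arrived from a neighbor $a\neq u$ (so its real spy sat at $a$, and $a\sim u_i$ for every $i$).

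The conformal invariant on $G'$ then follows vertex by vertex. For $w\notin U$ the real and shadow pictures agree, so the shadow's promise of at least $\FL{r(w)/m}$ spies transfers verbatim. Over the clique, the shadow places at least $\FL{(\sum_i r(u_i))/m}$ spies at $u$, which by the superadditivity $\FL a+\FL b\le\FL{a+b}$ is at least $\sum_i\FL{r(u_i)/m}$; since every spy lying over $u$ can be routed to any $u_i$, the spies distribute them so that each $u_i$ receives at least $\FL{r(u_i)/m}$, with any surplus parked arbitrarily in $U$. The same redistribution, now unconstrained by movement, handles the opening placement, and because the number of spies matches the shadow game throughout, $\FL{r/m}$ spies suffice.

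The main obstacle is exactly this redistribution inside $U$: one must check both that enough spies lie over $u$, which is the floor-superadditivity inequality, and that they are free to reach the specific clique vertices the invariant demands, which is the movement analysis showing every spy landing over $u$ can be sent to any $u_i$. Once the shadow framework is fixed, verifying that $\pi$ respects moves and that the bijection persists is routine bookkeeping.
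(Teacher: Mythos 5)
Your proposal is correct and matches the paper's own proof: both collapse the clique back to the original vertex via a projection, run the assumed conformal strategy in the imagined game on $G$, and lift spy moves using the facts that $\sum_i\FL{a_i}\le\FL{\sum_i a_i}$ and that every clique vertex shares the collapsed vertex's outside neighborhood. Your write-up is just a more explicit bookkeeping (the bijection and the case analysis of lifted moves) of the same argument.
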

\begin{proof}
Let $Q$ be the clique into which vertex $v$ of $G$ is expanded to form $G'$.
The spies play on $G'$ by imagining a game on $G$.  At each round, the \revs\
on $Q$ in $G'$ are collected onto $v$ in $G$, with $r(v)$ there after the
previous round and $r'(v)$ after the \revs\ move.  For other vertices, the
amounts before and after are as in the real game on $G'$.

Since $\sum\FL{a_i}\le \FL{\sum a_i}$, the spies on $v$ at the end of the round
in $G$ suffice to cover the $r'(v)$ \revs\ on $Q$ in $G$ and can move there,
since all vertices of $Q$ have the same neighbors outside $Q$ that $v$ has in
$G$.  Extra spies move to any vertex of $Q$.  Movements of spies from $v$ in
$G$ can also be matched by moves in the game on $G'$.  Other movements are the
same in $G$ and $G'$.  This produces a conformal strategy on $G'$.
\end{proof}

\begin{prop}
On a webbed tree $G$, the winning strategy in Theorem~\ref{webbed} is conformal.
\end{prop}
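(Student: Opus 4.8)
The plan is to verify the conformal condition directly from the invariant~(\ref{invar}) maintained throughout the proof of Theorem~\ref{webbed}, using nothing more than the superadditivity of the floor function. Recall that at the end of each round the strategy keeps exactly
$$
s(v)=\FL{\FR{w(v)}m}-\SM x{C(v)}\FL{\FR{w(x)}m}
$$
spies at each vertex $v$, where $w(v)=\SM u{D(v)}r(u)$. What must be shown is that $s(v)\ge\FL{r(v)/m}$ at the end of each round, where $r(v)$ is the current number of \revs\ at $v$; this is precisely the defining inequality of a conformal strategy.

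First I would record the identity $w(v)=r(v)+\SM x{C(v)}w(x)$, which holds because $D(v)$ is the disjoint union of $\{v\}$ with the sets $D(x)$ over $x\in C(v)$, so the \revs\ counted by $w(v)$ are exactly those at $v$ together with those in the subtrees rooted at the children. Dividing by $m$ and applying the superadditivity of the floor function, $\FL{a+b}\ge\FL{a}+\FL{b}$ (iterated over the children), gives
$$
\FL{\FR{w(v)}m}\ge\FL{\FR{r(v)}m}+\SM x{C(v)}\FL{\FR{w(x)}m}.
$$
Substituting this bound into the formula for $s(v)$ cancels the child terms and leaves $s(v)\ge\FL{r(v)/m}$, as required.

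There is essentially no real obstacle here: the entire content is the observation that the telescoping definition of $s(v)$, combined with floor superadditivity, already dominates $\FL{r(v)/m}$. The only point demanding any care is bookkeeping about timing — the quantities $r(v)$ and $w(v)$ in the argument are those holding at the end of a round, which is exactly when the invariant~(\ref{invar}) is in force and exactly the moment at which the conformal condition is imposed. Since the invariant is reestablished after every round (including the first) in the proof of Theorem~\ref{webbed}, the inequality $s(v)\ge\FL{r(v)/m}$ holds at the end of every round, and the strategy is therefore conformal.
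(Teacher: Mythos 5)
Your proof is correct and follows essentially the same route as the paper: both arguments observe that the invariant~(\ref{invar}) together with the identity $w(v)=r(v)+\SM x{C(v)}w(x)$ and the superadditivity of the floor function, $\sum\FL{a_i}\le\FL{\sum a_i}$, immediately yields $s(v)\ge\FL{r(v)/m}$ at the end of every round. The paper simply writes the spy count directly as $\FL{\FR{r(v)+\SM x{C(v)} w(x)}m}-\SM x{C(v)}\FL{\FR{w(x)}m}$ and invokes the same inequality, so your more explicit substitution is just a fuller exposition of the identical argument.
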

\begin{proof}
Let $T$ be a rooted spanning tree such that edges outside $T$ join siblings in
$T$.  After each round, the number of spies on vertex $v$ is given by
$$
\FL{\FR{r(v)+\SM x{C(v)} w(x)}m}-\SM x{C(v)} \FL{\FR{w(x)}m}.
$$
Since $\sum\FL{a_i}\le \FL{\sum a_i}$, the strategy is conformal.
\end{proof}

These results imply that graphs obtained from webbed trees by vertex
expansions and distance powers are spy-good.  For example, the square of
a path is spy-good.  This graph is not a webbed tree, since it is 2-connected
but has no dominating vertex (when it has at least six vertices).  On the
other hand, it is an interval graph, where an {\it interval graph} is a graph
representable by assigning each vertex $v$ an interval on the real line so that
vertices are adjacent if and only if their intervals intersect.  An interval
graph that is not a distance power and has no two vertices with the same
closed neighborhood is obtained from the square of an $8$-vertex path by adding
an edge joining the third and sixth vertices.

\begin{ques}
Which graphs are spy-good?
\end{ques}

We believe that all interval graphs are spy-good, even though
the class is not contained in the spy-good classes obtained above.

Although not all graphs are spy-good, Theorem~\ref{dom} yields good upper
bounds on $\sgmr$ for graphs with small dominating sets.
A {\it dominating set} in a graph $G$ is a set $S\esub V(G)$ such that every
vertex outside $S$ has a neighbor in $S$; the {\it domination number}
$\gamma(G)$ is the minimum size of a dominating set in $G$.

\begin{cor}\label{domnum}
$\sgmr\le\gamma(G)\FL{r/m}$ for any graph $G$.
\end{cor}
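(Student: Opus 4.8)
The plan is to reduce to Theorem~\ref{dom} by covering $G$ with dominated stars. Let $S=\{w_1,\dots,w_t\}$ be a minimum dominating set, so $t=\gamma(G)$. Since $S$ dominates $G$, each vertex lies in $N[w_i]$ for at least one $i$; choosing one such index for every vertex (and sending each $w_i$ to itself) gives a partition $V(G)=V_1\cup\dots\cup V_t$ with $w_i\in V_i\esub N[w_i]$. Every vertex of $V_i$ other than $w_i$ is then adjacent to $w_i$, so $w_i$ is a dominating vertex of the induced subgraph $G[V_i]$, which is exactly the setting of Theorem~\ref{dom}.

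I would then field $t$ squads of $\FL{r/m}$ spies, one per part, for the claimed total of $\gamma(G)\FL{r/m}$. Squad $i$ stays inside $V_i$ and runs the strategy of Theorem~\ref{dom} on $G[V_i]$, regarding as its revolutionaries precisely those currently on vertices of $V_i$. Since the stable positions of that strategy place at least $\FL{r(v)/m}\ge1$ spies on every vertex of $V_i$ carrying a meeting, and since the parts partition $V(G)$, after each round every meeting in $G$ is guarded by its own squad. Disjointness of the parts also legalizes the combined move: each spy belongs to exactly one squad, so it moves at most once per round.

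The main obstacle is that these local games are not truly independent: a revolutionary may cross from $V_j$ into $V_i$, which squad $i$ perceives as a revolutionary \emph{appearing} on a vertex of $V_i$ rather than arriving along a $G[V_i]$-edge, and I must verify that squad $i$ can still restore its position in a single round. The only step of Theorem~\ref{dom} sensitive to this is Hall's condition for the matching $H$ between the old meetings $X$ and the new meetings $Y$ inside $V_i$. Re-examining it, each revolutionary in a family $B\esub Y$ of new meetings either descends from an old $V_i$-meeting joined to $B$ in $H$ (hence lying in $N_H(B)\cap X$) or lay outside all $k=\sizeof{X}$ old $V_i$-meetings; the latter revolutionaries number at most $r-km$ throughout $G$, and, crucially, this count already absorbs every revolutionary that wandered in from another part. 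Thus $m\sizeof{B}\le m\sizeof{N_H(B)\cap X}+(r-km)$, which is verbatim the inequality of Theorem~\ref{dom}, so $\sizeof{N_H(B)}\ge\sizeof{B}$ and the squad's response is feasible. The remaining points are routine: a stable position is reachable after the initial placement, and $\FL{r/m}$ spies per part suffice because $\SM v{V_i}\FL{r(v)/m}\le\FL{r/m}$ by $\sum\FL{a_i}\le\FL{\sum a_i}$.
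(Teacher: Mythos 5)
Your proof is correct, and its skeleton---a minimum dominating set, $\FL{r/m}$ spies assigned to each dominating vertex, and Theorem~\ref{dom} run locally around each one---matches the paper's. But the mechanism for handling \revs\ that cross between regions is genuinely different. The paper covers $G$ by the (possibly overlapping) closed neighborhoods $N[u]$ for $u$ in the dominating set and keeps each local game a bona fide instance of $\RS(G_u,m,r,\FL{r/m})$ with exactly $r$ \revs\ at all times: missing \revs\ are imagined to sit at $u$, a real \rev\ entering $G_u$ at $v$ is simulated by an imagined \rev\ stepping from $u$ to $v$ (legal because $u$ dominates $G_u$), and a departing \rev\ is simulated by its tracker returning to $u$. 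This ``parking'' device lets Theorem~\ref{dom} be cited as a black box, with no need to revisit its proof. You instead partition $V(G)$ into sets $V_i\esub N[w_i]$ and allow \revs\ to appear and disappear in the local games, which forces you to re-open the proof of Theorem~\ref{dom} and check that Hall's Condition survives; your verification is right, since the bound of $r-km$ on \revs\ lying outside the $k$ old meetings of $X$ does absorb immigrants from other parts, and the edges of the auxiliary graph $H$ remain legal spy moves because both endpoints of any crossing lie in $V_i$ and $G[V_i]$ is induced. What each buys: your partition makes the ``every meeting is guarded'' and ``each spy moves at most once'' bookkeeping immediate, at the cost of re-proving the matching step; the paper's overlapping cover plus parking avoids any re-proof, at the small cost of the simulation argument. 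Note that the parking trick also works verbatim in your setting---a \rev\ entering $V_i$ at $v$ can be simulated by an imagined \rev\ moving from $w_i$ to $v$---which would have spared you the Hall's-Condition re-verification entirely.
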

\begin{proof}
Let $S$ be a smallest dominating set.  With each vertex $u\in S$, associate
$\FL{r/m}$ spies.  Let $G_u$ be the subgraph of $G$ induced by $N[u]$; it has
$u$ as a dominating vertex.  The spies associated with $u$ stay in $G_u$,
following the strategy of Theorem~\ref{dom} on $G_u$.  When there are fewer
than $r$ \revs\ in $G_u$, the spies imagine that the missing ones are at $u$.
When a real \rev\ comes to vertex $v$ in $G_u$ from outside $G_u$, a \rev\ in
the imagined game moves from $u$ to $v$ to perform its moves.  When the real
\rev\ leaves $G_u$, the \rev\ tracking it in the game on $G_u$ returns to $u$.
These moves are possible, since $u$ is a dominating vertex in $G_u$.  Since the
spies win each imagined game, the \revs\ in the real game never make an
unguarded meeting at the end of a round.
\end{proof}

As remarked in the introduction, Corollary~\ref{domnum} is of interest only
when $\gamma(G)\le m$, because otherwise the trivial upper bound $r-m+1$
is stronger.  When $\gamma(G)\le m$, the bound in Corollary~\ref{domnum} cannot
be improved.  To motivate the proof, we first present a simple construction
of spy-bad graphs.

A {\it split graph} is a graph whose vertices can be partitioned into a clique
and an independent set.  A {\it chordal graph} is a graph in which every cycle
of length at least $4$ has a chord; split graphs clearly have this property.
Recall that for fixed $r$ and $m$ a graph is spy-bad if the revolutionaries can
beat $r-m$ spies ($r-m+1$ spies trivially win).

\begin{prop}\label{split}
Given $r,m\in\NN$, there is a chordal graph $G$ (in fact a split graph)
such that $\sgmr=r-m+1$.
\end{prop}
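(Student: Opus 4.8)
The plan is to construct an explicit split graph on which the revolutionaries can beat $r-m$ spies, forcing $\sgmr = r-m+1$ (the trivial upper bound). The natural idea is to build a graph where the clique is where meetings can form, and the independent set supplies the revolutionaries with enough freedom that no placement of $r-m$ spies can simultaneously prevent a meeting and respond to the revolutionaries' moves. Concretely, I would take a large clique $K$ together with an independent set $I$ whose vertices each attach to $K$ in a controlled way, so that the revolutionaries start spread out and then threaten meetings in $K$ faster than $r-m$ spies can cover them.

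First I would describe the construction precisely. Let the clique $K$ have, say, $r-m+1$ vertices $u_1,\dots,u_{r-m+1}$ (matching the target number of spies), and let each $u_i$ also be adjacent to a private pendant-style independent vertex, or more robustly, arrange the independent set $I$ so that the $r$ revolutionaries can be positioned to threaten all $r-m+1$ clique vertices at once while remaining one step away. The key structural requirement is that $K\cup\{$some of $I\}$ form the clique part and the rest of $I$ is independent; this guarantees the graph is split (hence chordal). I would verify the degeneracy hypothesis $|V(G)|\ge r-m+1\ge\FL{r/m}\ge 1$ holds for the chosen sizes.

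Next I would give the revolutionaries' winning strategy against any $r-m$ spies. The idea is a short (one- or two-round) strategy: the revolutionaries initially occupy positions that present many potential meeting sites in the clique. Since there are only $r-m$ spies but $r-m+1$ clique vertices that can host meetings, by pigeonhole at least one clique vertex is unguarded at the critical moment. The revolutionaries then funnel $m$ of their number onto an unguarded clique vertex in a single move, using the adjacency of $I$ to $K$. The crucial counting is that $r$ revolutionaries suffice to create $m$-meetings threatening more sites than $r-m$ spies can cover, because $r = (r-m) + m$: the $r-m$ spies can pin at most $r-m$ revolutionaries, leaving $m$ free to converge on an uncovered vertex. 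I would make this precise by a Hall-type or direct pigeonhole argument showing the spies cannot both match the pinned revolutionaries and guard every threatened clique vertex.

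\textbf{The main obstacle} I anticipate is designing the attachment of $I$ to $K$ so that the revolutionaries genuinely have enough simultaneous, independent threats that the spies' response is provably inadequate — in a split graph the clique is highly connected, so a single spy sitting in $K$ is adjacent to all of $K$ and can be quite mobile, which helps the spies. The delicate point is ensuring the revolutionaries can threaten $r-m+1$ distinct clique vertices in a way that forces $r-m+1$ spy positions, defeating any reallocation of only $r-m$ spies. I expect this is handled by having the $m-1$ ``extra'' revolutionaries held in reserve (matching the trivial upper-bound intuition that $m-1$ revolutionaries can never complete a meeting), while the remaining $r-m+1$ each independently threaten a distinct clique vertex from a private neighbor in $I$; then any spy guarding one threatened vertex cannot guard another, and the pigeonhole gives the unguarded meeting. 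Checking that one move truly completes an $m$-meeting (not merduring an $(m-1)$-meeting) is where the size bookkeeping between $r$, $m$, and $r-m+1$ must be done carefully.
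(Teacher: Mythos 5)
There is a genuine gap, and it is fatal to the approach: you have the meetings forming \emph{in the clique}, but meetings at clique vertices can always be guarded. Remember that the spies move \emph{after} the revolutionaries in each round, and a spy stationed anywhere in the clique $K$ is adjacent to every other vertex of $K$. So when your $m$ free revolutionaries converge on an ``unguarded'' clique vertex $u$, any one spy sitting in $K$ simply steps onto $u$ before the round ends, and the meeting is guarded. Your pigeonhole ($r-m+1$ clique vertices versus $r-m$ spies) compares static placements, but guarding is dynamic: one clique spy can respond to a threat at \emph{any} clique vertex, so what matters is the number of \emph{simultaneous} meetings versus the number of spies able to respond. The revolutionaries can form at most $\FL{r/m}$ simultaneous meetings, and under the paper's standing assumption $\FL{r/m}\le r-m+1$ (in fact $\FL{r/m}\le r-m$ in the nondegenerate cases here), so $r-m$ spies parked in the clique cover every meeting the revolutionaries can ever make there. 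No attachment scheme for the independent set fixes this as long as the meeting sites are clique vertices.

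The paper's construction inverts your picture: the revolutionaries stand \emph{on} the clique and the meetings form \emph{at independent-set vertices}. Take a clique $Q$ of size $r$ and an independent set $S$ of size $\CH rm$ whose vertices have distinct $m$-subsets of $Q$ as their neighborhoods. A vertex $u_A\in S$ (with neighborhood $A$) can be guarded only by a spy already on $u_A$ or on one of the $m$ vertices of $A$ --- its neighborhood is small and specific, which is exactly what clique vertices lack. If $s'$ spies sit on $Q$, the $\CH{r-s'}m$ threats coming from $m$-sets of revolutionaries avoiding all spy positions are adjacent to no spy on $Q$, so each needs a dedicated spy placed on the corresponding vertex of $S$; but only $r-m-s'$ spies remain, and $\CH{r-s'}m>r-m-s'$ since $r-s'\ge m$. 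The revolutionaries then win in a single round. The essential point your proposal misses is that the number of threats must grow combinatorially (like $\CH{r-s'}m$) while each threat has only a bounded set of guarding positions; a linear number of threats at universally-adjacent vertices can never beat mobile spies.
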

\begin{proof}
Let $G_{m,r}$ be the split graph consisting of a clique $Q$ of size $r$ and an
independent set $S$ of size $\CH rm$, with the neighborhoods of the vertices in
$S$ being distinct $m$-sets in $Q$.  We show that $r-m$ spies cannot win.

The \revs\ initially occupy each vertex of $Q$.  Let $s'$ be the number of
vertices of $Q$ initially occupied by spies.  The number of threatened
meetings that spies on $Q$ are not adjacent to is $\CH{r-s'}m$.  Protecting
against such threats requires putting spies initially on the $\CH{r-s'}m$
vertices of $S$ corresponding to these $m$-sets, but only $r-m-s'$ remaining
spies are available, and $\CH{r-s'}m>r-m-s'$ when $r-s'\ge m$.
\end{proof}

Note that $\FR{r-m+1}{r/m}$ can be made arbitrarily large.  When $r=2m$, the
ratio exceeds $m/2$.  Letting $m$ also grow, we observe that $\sgmr$ cannot
be bounded by a constant multiple of $r/m$, even on split graphs.  Furthermore,
the strategy for \revs\ in Proposition~\ref{split} does not use any edges
within the clique, so the statement remains true also for the bipartite graph
obtained by deleting those edges.

When $m$ grows, the degrees of all vertices in $G_{m,r}$ also grow.  If the
degrees in the independent set are bounded, then the spies can do better.  We
state the next result without proof, because the proof is a bit technical and
the class of graphs is somewhat specialized.  The technique is as usual for
upper bounds: defining stable positions and showing that the spies can
reestablish a stable position after each round.  The proof will appear in the
thesis of the third author.

\begin{thm}
Let $G$ be a split graph with clique $Q$ and independent set $S$ in which each
vertex of $S$ has degree at most $d$.  If $m$ is a multiple of $d$, then 
$\sgmr\le d\CL{r/m}$.
\end{thm}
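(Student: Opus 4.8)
The plan is to use the stable-position method, as for Theorem~\ref{dom}, but with a two-part invariant reflecting the two kinds of meetings. Since $Q$ is a clique, every vertex of $Q$ dominates $Q$, so meetings forming \emph{inside} $Q$ behave exactly as in a graph with a dominating vertex and can be reguarded by the Hall/matching argument of Theorem~\ref{dom}: any spy sitting in $Q$ reaches any vertex of $Q$ in one step. The whole difficulty is concentrated on the independent set $S$: because $S$ is independent, a \rev\ at $v\in S$ can only stay at $v$ or move into $N(v)\esub Q$, so a meeting at $v$ can be guarded only by a spy already at $v$ or at one of the at most $d$ neighbors in $N(v)$.

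First I would record the budget count. A position in which every vertex $u$ carries exactly $\FL{r(u)/m}$ spies guards all current meetings and uses $\SM u{V(G)}\FL{r(u)/m}\le\FL{r/m}$ spies, far under the allowance $d\CL{r/m}$. The surplus of roughly $(d-1)\FL{r/m}$ spies is exactly what must be parked in $Q$ to answer next-round threats to $S$. Floors alone do not suffice because a \rev\ cluster can be invisible to this count yet still converge on an $S$-vertex: if each of the $d$ neighbors of some $v\in S$ holds $m-1$ \revs, then $v$ and all its neighbors carry no guarding spy, yet $d(m-1)\ge m$ \revs\ can meet at $v$ next round.

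This is where the hypotheses $d\mid m$ and $\deg(v)\le d$ enter, and I would define stability accordingly: besides the guarding spies, the spies maintain a reserve in $Q$ so that, viewing each threatened $S$-meeting as a demand for one spy and each reserve spy in $Q$ as a supplier able to serve its $S$-neighbors, Hall's condition holds after the \revs\ move. The key estimate mirrors the Hall computation in Theorem~\ref{dom}: fix a set $T$ of \emph{newly} threatened $S$-meetings (those not already guarded at the start of the round), each needing one fresh spy. The extra \revs\ they require, beyond those that were already meeting and hence already guarded at the same $S$-vertex, must have arrived from the clique region $N(T)\cap Q$; since a meeting needs $m=d\cdot(m/d)$ \revs\ collected from at most $d$ clique vertices, a reserve holding at least a $d/m$ fraction of the \revs\ in every clique region as spies supplies at least $\C T$ available spies in $N(T)\cap Q$, confirming Hall. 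The divisibility $d\mid m$ makes this fraction bookkeeping exact rather than lossy, and it is what lets the reserve total stay within $d\CL{r/m}$ rather than ballooning, as a naive per-vertex ceiling $d\CL{r(w)/m}$ would when the \revs\ are spread thin.

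The main obstacle, and the reason the authors defer the argument, is precisely reconciling the two demands on the reserve simultaneously: it must be large and well-placed enough that \emph{every} subset $T$ of $S$-threats satisfies Hall's condition, yet its total size must not exceed $d\CL{r/m}$ for \emph{any} \rev\ distribution, including those spreading \revs\ across many clique vertices. Making this precise requires phrasing the reserve as a flow/cut (equivalently, defect-Hall) condition on clique regions rather than a per-vertex count, verifying that this condition is both establishable after the first round and restorable after each subsequent round, and checking that reguarding $Q$-meetings by the Theorem~\ref{dom} strategy and reguarding $S$-meetings by the matching never ask one spy to move twice. These are exactly the floor/ceiling manipulations (the identity $\sum\FL{a_i}\le\FL{\sum a_i}$ together with the exactness furnished by $d\mid m$) that the technical argument is arranged to carry through.
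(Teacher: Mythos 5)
The paper itself contains no proof of this theorem: it is explicitly stated without proof (``the proof is a bit technical\dots{} will appear in the thesis of the third author''), so there is no argument of record to compare yours against, and your attempt must be judged on its own. On those terms it is a plan, not a proof. Your closing paragraph concedes exactly the content that would constitute the proof: a precise invariant (the ``flow/cut (defect-Hall) condition on clique regions''), a verification that it can be established after the initial placement and restored after every round, a check that reguarding $Q$-meetings and $S$-meetings never asks one spy to move twice, and the accounting that keeps the total at $d\CL{r/m}$. Identifying the general approach (stable positions plus a Hall-type matching argument, as in Theorem~\ref{dom}) and the role of $d\mid m$ is the easy part; the theorem's entire difficulty lives in the part you defer.

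Moreover, the one invariant you do state concretely does not work. You require the reserve to hold ``at least a $d/m$ fraction of the revolutionaries in every clique region,'' justified by the claim that the revolutionaries feeding a newly threatened meeting ``must have arrived from the clique region $N(T)\cap Q$.'' That claim fails, because revolutionaries can accumulate on an unguarded vertex of $S$ without ever constituting a meeting. Suppose $d<m$ and put $m-1$ revolutionaries at $v\in S$ (so $v$ gets $\FL{(m-1)/m}=0$ guarding spies) and a single revolutionary at one neighbor $w\in N(v)$. Your region condition demands only $(d/m)\cdot 1<1$ reserve spies in $N(v)$, i.e.\ possibly none; yet one revolutionary stepping from $w$ to $v$ creates a meeting that no spy anywhere in $G$ can reach, since the only vertices adjacent to $v$ are the at most $d$ vertices of $N(v)$. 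Reading your fraction with a ceiling instead does not rescue it: if $m>d^2$ and the revolutionaries sit one per vertex on $r/d$ pairwise disjoint neighborhoods $N(v_1),\dots,N(v_{r/d})$, the ceiling forces at least $r/d$ reserve spies even though no meeting is formable at any $v_i$ (each can gather at most $d<m$ revolutionaries), and $r/d$ exceeds the budget $d\CL{r/m}$. So the reserve requirement must be keyed to the number of meetings actually formable at a set of unguarded $S$-vertices --- a flow quantity depending jointly on the revolutionaries already sitting on those vertices and those in their neighborhoods --- and showing that such a requirement is simultaneously strong enough to guard, cheap enough to fit the budget, and restorable every round is precisely the missing argument.
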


A construction like that of Proposition~\ref{split} enables us to show that
Corollary~\ref{domnum} is nearly sharp.  When $t=m$, the upper and lower bounds
in this result are equal; when $m\mid r$, the difference between them is $t-1$.

\begin{thm}\label{domsharp}
Given $t,m,r\in\NN$ such that $t\le m\le r-m$, there is a graph $G$ with
domination number $t$ such that $\sgmr>t(r/m-1)$.
\end{thm}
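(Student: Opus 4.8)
The plan is to take the spy-bad split graph of Proposition~\ref{split} and perturb it so that its domination number drops from $r-m+1$ all the way down to $t$, while the spies are still forced to use nearly $t\FL{r/m}$ guards (the bound of Corollary~\ref{domnum}); since $t(r/m-1)\le m(r/m-1)=r-m<r-m+1$, any graph realizing $\gamma(G)=t$ together with the spy count I aim for will satisfy the claimed inequality. Concretely, I start from a clique $Q$ of size $r$ together with threat vertices realizing every $m$-subset of $Q$ (each joined to its subset), exactly as in Proposition~\ref{split}, and then adjoin \emph{hub} vertices, each adjacent to a prescribed collection of threat vertices but to no vertex of $Q$. A hub lets the spies dominate many threats at once, which is what pulls $\gamma$ down; the number of hubs and their adjacency pattern are the tuning parameters.

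First I would pin down the domination number. The hubs are to be chosen so that a minimum dominating set consists of $t-1$ efficiently placed vertices together with one clique vertex (which alone dominates $Q$), giving $\gamma(G)\le t$. For the matching lower bound I would argue by the standard reduction that any dominating set may be pushed onto clique and hub vertices without increasing its size, after which dominating the threat vertices becomes a hitting-set computation whose value is forced to be $t$. The design constraint here is to make each retained hub indispensable: omitting it should force one to re-dominate a family of $m$-subsets whose hitting set has size larger than $1$, so that no dominating set of size $t-1$ exists.

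For the lower bound on $\sgmr$ I would hand the \revs\ a strategy built around the observation that a hub is useless against a meeting held at a clique vertex, since a spy on a hub can reach only threat vertices, never a vertex of $Q$. Placing one \rev\ on each vertex of $Q$, the \revs\ threaten two incompatible kinds of meeting each round: meetings at clique vertices, which only clique-based guards can cover, and meetings at the $m$-subset threats, which the counting of Proposition~\ref{split} shows cannot all be pre-guarded from $Q$ alone. Because the two guard roles cannot be filled by a single spy, a Hall-type counting argument -- generalizing the one in Proposition~\ref{split}, but now tracking the reactive coverage supplied by the hubs -- should show that fewer than the target number of guards always leaves either a clique vertex carrying an unguarded meeting or a threat subset disjoint from every clique guard and unreachable by the available hubs. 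Tuning the hubs so that their total reactive capacity falls about $\FL{r/m}$ short of what is needed then yields $\sgmr>t(r/m-1)$.

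The main obstacle is precisely the tension that makes the result interesting: lowering $\gamma$ to $t$ forces the threat vertices to be dominated cheaply, that is, to be adjacent to the few hubs, and those very adjacencies let spies on the hubs guard threats reactively -- which is exactly the efficiency we must deny them. The delicate point, and the heart of the argument, is therefore the hub design: the hubs must dominate enough threats to bring $\gamma$ down to $t$, yet their reactive guarding power must fall short by the right margin so that the combined clique-meeting/threat-meeting pressure still costs the spies more than $t(r/m-1)$. I expect the boundary case $t=m$, where the bound must reproduce a genuinely spy-bad graph with $\sgmr=r-m+1$ but $\gamma=m$, to be the most stringent test of the construction, and the case $m\nmid r$ to require the usual floor-function care in the final counting.
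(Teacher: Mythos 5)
You have outlined a plan rather than a proof: the two load-bearing components---the exact hub adjacencies and the Hall-type counting---are deferred as ``tuning parameters,'' so nothing is actually established. Worse, the architecture you fixed (a clique $Q$ of size $r$, threat vertices for the $m$-subsets of $Q$, and hubs adjacent to threat vertices but to \emph{no} vertex of $Q$) cannot reach the bound $t(r/m-1)$ for large $r$, because of a double-duty effect that no tuning can remove. A meeting can form at a threat vertex only if all $m$ \revs\ of its $m$-set move there (the threat is adjacent to nothing else in $Q$); hence any attacked threat whose $m$-set contains a spy-occupied clique vertex can be guarded by that very spy, and distinct attacked threats have disjoint $m$-sets, so these guards can be assigned injectively. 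Consequently the only threats that genuinely stress the hubs are those formed entirely by uncovered \revs, of which there are at most $\FL{(r-x_0)/m}$ when $x_0$ spies sit on $Q$; meanwhile, since $Q$ is a clique, each of those $x_0$ spies can reach every vertex of $Q$, so $x_0=\FL{r/m}$ spies neutralize every possible clique meeting. Even granting the most favorable hub design (each threat adjacent to exactly one of $t-1$ hubs, every hub class containing a perfect matching of every large subset of $Q$), the spies withstand your one-round attacks with about
$$\FL{\FR rm}+(t-1)\FL{\FR{r-\FL{r/m}}m}\approx \FR{tr}m-\FR{(t-1)r}{m^2}$$
spies, which falls below $t(r/m-1)$ once $r$ exceeds roughly $tm^2/(t-1)$. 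The shortfall is structural, not a matter of floor-function care, and it is largest exactly in the regime where the theorem has content.

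The paper's construction removes precisely this obstruction by having no clique at all: it starts from $K_{t,r}$ with $R$ independent, so no meetings can ever form inside $R$, and the $t$-set $T$ serves simultaneously as the dominating set and as the full hub set (every $T$-vertex is adjacent to all of $R$). The key idea your sketch is missing is replication: each threat $A\subseteq R$ is blown up into $t$ copies forming a set $A'$, joined to $T$ by a perfect matching, so that each copy is guardable by exactly one vertex of $T$. The \revs\ then attack only copies matched to the $T$-vertex $v$ with fewest spies, giving $ts(v)\le s-s'$, where $s'$ counts spies on $R$; such spies earn no double duty (they only shrink the attack, at exchange rate $m/t\ge1$), and the count closes via $m(s(v)+1)\le r-s'(m/t)\le r-s'$, using $t\le m$. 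If you repair your design by joining every hub to all of $Q$ and then deleting the edges inside $Q$, you essentially arrive at the paper's graph.
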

\begin{proof}
First we construct a graph $G$.  Begin with a copy of $K_{t,r}$ having partite
sets $T$ of size $t$ and $R$ of size $r$.  Add an independent set $U$ of size
$t\CH rm$, grouped into sets of size $t$.  With each $m$-set $A$ in $R$,
associate one $t$-set $A'$ in $U$.  Make all of $A$ adjacent to all of $A'$,
and add a matching joining $A'$ to $T$ (see Figure~\ref{domshpfig}).  Note that
$T$ is a dominating set.

\begin{figure}[hbt]
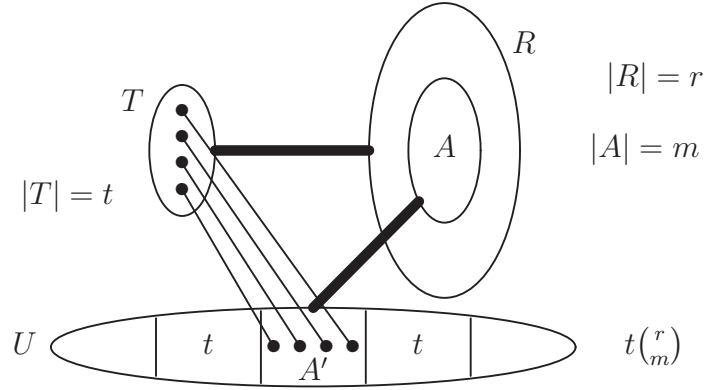

\gpic{
\expandafter\ifx\csname graph\endcsname\relax \csname newbox\endcsname\graph\fi
\expandafter\ifx\csname graphtemp\endcsname\relax \csname newdimen\endcsname\graphtemp\fi
\setbox\graph=\vtop{\vskip 0pt\hbox{%
    \special{pn 11}%
    \special{ar 824 772 172 343 0 6.28319}%
    \special{ar 2197 772 189 378 0 6.28319}%
    \special{ar 2197 772 395 772 0 6.28319}%
    \special{ar 1510 1802 1373 206 0 6.28319}%
    \special{pn 56}%
    \special{pa 996 772}%
    \special{pa 1802 772}%
    \special{fp}%
    \special{pa 2063 1039}%
    \special{pa 1510 1596}%
    \special{fp}%
    \graphtemp=.5ex\advance\graphtemp by 0.530in
    \rlap{\kern 0.565in\lower\graphtemp\hbox to 0pt{\hss $T$\hss}}%
    \graphtemp=.5ex\advance\graphtemp by 0.772in
    \rlap{\kern 2.197in\lower\graphtemp\hbox to 0pt{\hss $A$\hss}}%
    \graphtemp=.5ex\advance\graphtemp by 0.226in
    \rlap{\kern 2.613in\lower\graphtemp\hbox to 0pt{\hss $R$\hss}}%
    \graphtemp=.5ex\advance\graphtemp by 1.015in
    \rlap{\kern 0.222in\lower\graphtemp\hbox to 0pt{\hss ${|T|=t}$\hss}}%
    \graphtemp=.5ex\advance\graphtemp by 0.398in
    \rlap{\kern 3.300in\lower\graphtemp\hbox to 0pt{\hss ${|R|=r}$\hss}}%
    \graphtemp=.5ex\advance\graphtemp by 1.802in
    \rlap{\kern 0.000in\lower\graphtemp\hbox to 0pt{\hss $ U$\hss}}%
    \graphtemp=.5ex\advance\graphtemp by 1.802in
    \rlap{\kern 3.295in\lower\graphtemp\hbox to 0pt{\hss $t\CH rm$\hss}}%
    \special{pn 11}%
    \special{pa 687 1651}%
    \special{pa 687 1953}%
    \special{fp}%
    \special{pa 1236 1613}%
    \special{pa 1236 1991}%
    \special{fp}%
    \special{pa 1785 1613}%
    \special{pa 1785 1991}%
    \special{fp}%
    \special{pa 2334 1651}%
    \special{pa 2334 1953}%
    \special{fp}%
    \graphtemp=.5ex\advance\graphtemp by 0.978in
    \rlap{\kern 0.824in\lower\graphtemp\hbox to 0pt{\hss $\bu$\hss}}%
    \graphtemp=.5ex\advance\graphtemp by 0.841in
    \rlap{\kern 0.824in\lower\graphtemp\hbox to 0pt{\hss $\bu$\hss}}%
    \graphtemp=.5ex\advance\graphtemp by 0.704in
    \rlap{\kern 0.824in\lower\graphtemp\hbox to 0pt{\hss $\bu$\hss}}%
    \graphtemp=.5ex\advance\graphtemp by 0.566in
    \rlap{\kern 0.824in\lower\graphtemp\hbox to 0pt{\hss $\bu$\hss}}%
    \graphtemp=.5ex\advance\graphtemp by 1.802in
    \rlap{\kern 1.304in\lower\graphtemp\hbox to 0pt{\hss $\bu$\hss}}%
    \graphtemp=.5ex\advance\graphtemp by 1.802in
    \rlap{\kern 1.442in\lower\graphtemp\hbox to 0pt{\hss $\bu$\hss}}%
    \graphtemp=.5ex\advance\graphtemp by 1.802in
    \rlap{\kern 1.579in\lower\graphtemp\hbox to 0pt{\hss $\bu$\hss}}%
    \graphtemp=.5ex\advance\graphtemp by 1.802in
    \rlap{\kern 1.716in\lower\graphtemp\hbox to 0pt{\hss $\bu$\hss}}%
    \special{pa 824 978}%
    \special{pa 1304 1802}%
    \special{fp}%
    \special{pa 824 841}%
    \special{pa 1442 1802}%
    \special{fp}%
    \special{pa 824 704}%
    \special{pa 1579 1802}%
    \special{fp}%
    \special{pa 824 566}%
    \special{pa 1716 1802}%
    \special{fp}%
    \graphtemp=.5ex\advance\graphtemp by 1.802in
    \rlap{\kern 0.961in\lower\graphtemp\hbox to 0pt{\hss $t$\hss}}%
    \graphtemp=.5ex\advance\graphtemp by 1.940in
    \rlap{\kern 1.510in\lower\graphtemp\hbox to 0pt{\hss ${A'}$\hss}}%
    \graphtemp=.5ex\advance\graphtemp by 1.802in
    \rlap{\kern 2.060in\lower\graphtemp\hbox to 0pt{\hss $t$\hss}}%
    \graphtemp=.5ex\advance\graphtemp by 0.772in
    \rlap{\kern 3.244in\lower\graphtemp\hbox to 0pt{\hss $|A|=m$\hss}}%
    \hbox{\vrule depth2.008in width0pt height 0pt}%
    \kern 3.300in
  }%
}%
}
\caption{Sharpness of the domination bound\label{domshpfig}}
\end{figure}

To show that $\gamma(G)=t$, let $S$ be a smallest dominating set.  For each
$m$-set $A$ in $R$, the $t$ vertices in $A'$ are adjacent only to $A$ in $R$.
Thus if $\C{S\cap R}<t\le r-m$, then some $t$-set $A'$ in $U$ is
undominated by $S\cap R$.  Outside of $R$, the closed neighborhoods of the
vertices in $A'$ are pairwise disjoint, so $S$ needs $t$ additional vertices to
dominate them.  Hence $|S|\ge t$.

Now, we give a strategy for the \revs\ to win against $t(r/m-1)$ spies on $G$.
Let $s=\FL{t(r/m-1)}$.
The \revs\ initially occupy $R$, one on each vertex.  A spy on a vertex $u$ of
$U$ can protect all the same threats (and more) by locating at the neighbor of
$u$ in $T$ instead.  Hence we may assume (at least for the purpose of trying to
survive the next round) that no spies locate initially in $U$.

Let $v$ be a vertex of $T$ having the fewest initial spies, and let $s(v)$ be
the number of spies there.  The \revs\ will win by attacking the neighbors of
$v$.  Let $s'$ be the number of spies initially in $R$, so $s(v)\le (s-s')/t$.

The \revs\ want to form meetings at $s(v)+1$ neighbors of $v$ that are
neighbors of no other vertices with spies.  Let $R'$ be the set of vertices
in $R$ that do not have spies; note that $|R'|\ge r-s'$.  If
$|R'|\ge m(s(v)+1)$, then the \revs\ win as follows.  First, group vertices in
$R'$ into $s(v)+1$ sets of size $m$.  For each such set $A$, the \revs\ on $A$
move to the unique vertex $u_{A,v}$ in the associated subset $A'$ of $U$ that
is adjacent to $v$ in $T$.  For each such vertex, the only neighbor having a
spy is $v$, so the meetings cannot all be guarded and the \revs\ win.

It thus suffices to show that $r-s'\ge m(s(v)+1)$.  Since $v$ has the fewest
spies among vertices of $T$, we have $ts(v)\le s-s'\le t(r/m-1)-s'$.
Multiplying by $m/t$ and adding $m$ yields $m(s(v)+1)\le r-s'(m/t)\le r-s'$, as
desired, using $t\le m$ at the end.
\end{proof}

Although the construction in Theorem~\ref{domsharp} depends heavily on $m$,
it does not depend much on $r$.  Indeed, the construction works equally well
whenever the number of \revs\ is at most $r$, because the \revs\ can use the
strategy for a smaller number of \revs\ on the appropriate subgraph of the 
graph constructed for $r$ \revs.  The same comment applies to
Proposition~\ref{split}.

\section{Hypercubes and Retracts}\label{hypersec}
For $d\in\NN$, let $[d]=\{1,\ldots,d\}$.  The $d$-dimensional hypercube $Q_d$
is the graph with vertex set $\{v_S\st S\esub [d]\}$ such that $v_S$ and $v_T$
are adjacent when the symmetric difference of $S$ and $T$ has size $1$.  The
{\it weight} of the vertex $v_S$ is $|S|$.  For vertices of small weight, we
write the subscripts without set brackets.  We show first that $Q_d$ is spy-bad
for $m=2$ when $d\ge r$.  For larger $m$, we will later obtain a lower bound on
$\sigma(Q_d,m,r)$ using the same basic idea.

\begin{thm}\label{hypercube}
If $G=Q_d$ and $d\ge r$, then $\sgtr=r-1$.
\end{thm}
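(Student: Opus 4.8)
Since the spies win with $r-1$ spies by assigning one spy to each of $r-1$ distinct revolutionaries (the lone remaining revolutionary cannot form a meeting with $m=2$), we have $\sgtr\le r-1$ for free, and the entire task is the lower bound: I must give the revolutionaries a strategy that defeats $r-2$ spies. The plan is to start all $r$ revolutionaries at the weight-$1$ vertices $v_1,\dots,v_r$ (available since $d\ge r$). From this position the only one-step meetings are: the two revolutionaries at $v_i,v_j$ can meet at the weight-$2$ vertex $v_{\{i,j\}}$, and any two revolutionaries can meet at $v_\emptyset$. Thus the set of threatened sites is naturally identified with the edge set of a complete graph $K_r$ on the coordinate set $[r]$, together with the single extra site $v_\emptyset$.

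The first, and cleanest, ingredient is a reduction of the spies' one-round defensive burden. Because each spy moves once after the revolutionaries move, a spy can guard a newly formed meeting at $w$ only if its position at the \emph{start} of the round is within distance $1$ of $w$. Cataloguing guard-power by weight: a weight-$1$ spy at $v_c$ guards $v_\emptyset$ together with the entire star of edges through $c$; a weight-$3$ spy at $v_{\{c,d,e\}}$ guards the triangle of edges inside $\{c,d,e\}$ but \emph{not} $v_\emptyset$; a weight-$2$ spy guards a single edge; a weight-$0$ spy guards only $v_\emptyset$; a spy of weight $\ge 4$ guards nothing relevant. The decisive observation is that if the revolutionaries form a \emph{matching} in $K_r$ (pairwise-disjoint weight-$2$ meetings), then every spy guards at most one of these meetings, since a matching has at most one edge through any vertex and at most one edge inside any $3$-set; hence the spies answer a matching only if every one of its edges is reached by some spy. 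Consequently, if the revolutionaries can exhibit a single edge $v_{\{i,j\}}$ that no spy reaches, they win in one round, and therefore to survive the spies must reach-cover \emph{all} of $E(K_r)$.

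I would then analyze this covering requirement by counting. If $a$ spies sit at weight-$1$ star vertices on a coordinate set $C$, all edges inside $J:=[r]\setminus C$ must be covered by the remaining spies, each of which covers at most $3$ edges; comparing $\binom{|J|}{2}/3$ with the budget $r-2-a$ forces $|J|\le 4$. Thus, up to relabeling, any spy set surviving round one is essentially \emph{stars on $r-3$ (or $r-4$) coordinates together with a single weight-$3$ ``triangle'' spy} (or a small gadget on four coordinates). This rigid configuration is exactly what lets $r-2$ spies parry a one-round attack, so the matching argument alone does not finish the proof; the finishing blow must span several rounds.

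The heart of the argument, and the step I expect to be the main obstacle, is defeating the stars-plus-triangle configuration. Here I would exploit the asymmetry isolated above: the weight-$3$ spy $v_J$ is the \emph{only} spy able to reach the internal meetings $v_{\{x,y\}}$ with $x,y\in J$, it can guard only one of them per round, and it can never reach $v_\emptyset$; dually, $v_\emptyset$ can be guarded only by the star spies, and the revolutionaries can force a star spy up to weight $\ge 2$ (hence away from $v_\emptyset$) by threatening a meeting that only that star guards. The plan is to interleave internal-edge threats (pinning or displacing the triangle spy inside the internal subcube, which behaves like the base case $\sigma(Q_3,2,3)=2$) with threats at $v_\emptyset$ (forcing the committed star spies to choose), and to track a potential counting how many threatened sites lie outside the current reach of the $r-2$ spies, arguing by induction on $r$ that the revolutionaries can keep increasing this deficit until a meeting is left unguarded. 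Making this interleaving terminate in a win rather than allowing the spies to perpetually re-stabilize is the delicate point, and it is precisely where the hypothesis $d\ge r$ is needed, since it provides the room for the revolutionaries to relocate the active coordinates between rounds.
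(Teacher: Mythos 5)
Your first-round analysis is correct and matches the paper's: same initial placement at $v_1,\dots,v_r$, same identification of the threatened sites with $E(K_r)$ plus $v_\emptyset$, same guard-power catalogue by weight, and the counting $\binom{|J|}{2}\le 3(|J|-2)$ that pins the uncovered set down to $|J|\in\{3,4\}$. But there are two genuine gaps. First, you never eliminate $|J|=4$: your inequality is tight there ($6\le 3\cdot 2$), so the case survives your counting, yet it cannot actually survive round one — the two non-star spies must be weight-$3$ spies guarding two triangles of $K_4$, and any two triangles in $K_4$ share an edge, so they cover only five of the six edges. The paper disposes of this in one line; you instead leave it as a live ``small gadget'' configuration that your subsequent plan never addresses. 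Second, and decisively: against the stars-plus-triangle configuration (say spies at $v_{123}$ and $v_4,\dots,v_r$), you only gesture at a potential-function/induction scheme and explicitly defer what you call the delicate point. That deferred step is the entire content of the lower bound, and it is much simpler than your sketch suggests. The paper finishes with a concrete two-round forcing sequence: in round 1 the revolutionaries at $v_1,v_2$ (indices \emph{inside} the triangle spy's set) move to $v_\emptyset$; since $v_{123}$ is at distance $3$ from $v_\emptyset$, only a star spy can guard this meeting, so some spy vacates a vertex $v_j$ with $j\ge 4$ and is thereafter pinned at $v_\emptyset$ by the persisting meeting there. In round 2 the revolutionaries at $v_3$ and at the newly uncovered $v_j$ meet at $v_{3j}$: every spy other than the one from $v_j$ began round 1 at distance $3$ from $v_{3j}$ (for instance $|\{1,2,3\}\,\triangle\,\{3,j\}|=3$), so after its two available moves it cannot occupy $v_{3j}$, and the pinned spy sits at distance $2$. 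There is no ``perpetual re-stabilization'' to rule out and no induction on $r$: the game ends in two rounds.

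A further misconception: the hypothesis $d\ge r$ is not what lets the revolutionaries ``relocate the active coordinates between rounds.'' It is used only once, to place the $r$ revolutionaries on $r$ distinct weight-$1$ vertices initially; the winning attack never leaves the coordinates $\{1,2,3,j\}$.
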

\begin{proof}
The upper bound is trivial; we show that $r-2$ spies cannot win.
The \revs\ begin by occupying $\VEC v1r$, threatening meetings of size $2$ at
$\nul$ and at $\CH r2$ vertices of weight $2$.  Let $t$ be the number of \revs\
left uncovered by the initial placement of the spies.  Threats at $\CH t2$
vertices must be watched by spies not on vertices of weight $1$.  A spy at a
vertex of weight $2$ can watch one such threat; spies at vertices of weight $3$
can watch three of them.  Hence $s\ge (r-t)+\FR13\CH{t}2$ if the spies stop the
\revs\ from winning on the first round.  This yields $s\ge r-1$ if $t\ge5$ or
$t\le2$.

If $t=4$ and $s=r-2$, then the spies need to watch six threats at weight $2$
using two spies at vertices of weight $3$.  A spy at a vertex of weight $3$
watches the three pairs in its name.  The four uncovered \revs\ threaten 
meetings at six vertices of weight $3$ corresponding to the edges of the
complete graph $K_4$.  A spy at weight $3$ can watch three pairs corresponding
to a triangle.  Since the edges of $K_4$ cannot be covered with two triangles,
$r-2$ spies are not enough when $t=4$.

If $t=3$, then the counting bound yields $s\ge r-2$ for spies to avoid losing
on the first round.  If the initial placement of $r-2$ spies can watch all
immediate threats, then they must cover $r-3$ \revs\ at vertices of weight $1$
and occupy one vertex at weight $3$.  By symmetry, we may assume the spies
locate at $v_{123}$ and $\VEC v4r$.

In the first round, \revs\ at $v_1$ and $v_2$ move to $v_{\nul}$; the others
wait where they are.  To guard the meeting at $v_{\nul}$, a spy at some vertex
of weight $1$ must move there; let $v_j$ be the vertex from which a spy moves
to $v_{\nul}$.

In the second round, the \revs\ at $v_3$ and $v_j$ move to $v_{3j}$, winning.
The distance from each spy to $v_{3j}$ after round $1$ is at least $3$, except
for the spy at $v_{j}$, so no other spy could have moved after round 1 to
watch that threat.
\end{proof}

Extra spies on vertices of weight at least $5$ cannot prevent the \revs\ from
winning with the strategy given in the proof of Theorem~\ref{hypercube}.
This enables the \revs\ to win against somewhat fewer spies when $r$ is larger
than the dimension.

A {\it code} with length $d$ and distance $k$ is a set of vertices in $Q_d$
such that the distance between any two of them is at least $k$.  Let $A(d,k)$
denote the maximum size of a code with distance $k$ in $Q_d$, and let $B(d,k)$
be the number of vertices with distance less than $k$ from a fixed vertex in
$Q_d$.  Note that $B(d,k)=\SE i0{k-1} \CH di < d^{k-1}$ when $k>2$.
If $M<2^d/B(d,k)$, then any code of size $M$ having distance $k$ can
be extended by adding some vertex, so $A(d,k)\ge 2^d/d^{k-1}$ when $k>2$.

\begin{cor}\label{smallhyp}
If $d<r\le 2^d/d^7$, then $\sigma(Q_d,2,r)\ge (d-1)\FL{r/d}$.
\end{cor}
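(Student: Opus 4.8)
The plan is to exploit the locality of the \revs' strategy in Theorem~\ref{hypercube}, together with the remark that spies starting at weight at least $5$ cannot thwart it, by running $\FL{r/d}$ disjoint copies of that strategy around well-separated base points. First I would set $g=\FL{r/d}$ and partition $dg$ of the \revs\ into $g$ groups of size $d$, parking the remaining fewer than $d$ \revs\ idly. The other ingredient is a set of base points $b_1,\ldots,b_g$ in $Q_d$ forming a code with distance $9$; each group $i$ will translate the strategy of Theorem~\ref{hypercube} so that $b_i$ plays the role of the origin, occupying the $d$ neighbors of $b_i$ in $Q_d$. This is a legitimate translate because taking symmetric difference with the name of $b_i$ is an automorphism of $Q_d$ carrying $\nul$ to $b_i$, and the case $r=d$ of Theorem~\ref{hypercube} (which needs $d\ge d$) applies within each group.

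Such base points exist by the counting bound preceding the corollary. Since $k=9>2$, we have $A(d,9)\ge 2^d/d^8$, while the hypothesis $r\le 2^d/d^7$ gives $g\le r/d\le 2^d/d^8\le A(d,9)$. Hence a code with distance $9$ and at least $g$ vertices can be chosen, and I would select $b_1,\ldots,b_g$ from it.

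Next I would call a spy \emph{relevant} to group $i$ if it starts within distance $4$ of $b_i$. By the remark following Theorem~\ref{hypercube}, spies starting at weight at least $5$ (that is, at distance at least $5$ from $b_i$) cannot prevent group $i$ from winning with the translated strategy, so only relevant spies can affect group $i$'s play. Because the code has distance $9>2\cdot 4$, the triangle inequality shows that the radius-$4$ balls about distinct base points are disjoint, so each spy is relevant to at most one group. Consequently, if the spies number fewer than $(d-1)g$, then the relevant-spy counts sum to fewer than $(d-1)g$, so some group $i$ has at most $d-2$ relevant spies. Facing effectively at most $d-2$ spies, that group wins by Theorem~\ref{hypercube}, since $\sigma(Q_d,2,d)=d-1$. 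Taking the contrapositive yields $\sigma(Q_d,2,r)\ge (d-1)\FL{r/d}$.

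The main obstacle I anticipate is justifying rigorously that irrelevant spies cannot interfere even as spies move over the two rounds of the strategy; this rests entirely on reading the quoted remark correctly, namely that a spy initially at distance at least $5$ from $b_i$ can never reach in time the vertices where group $i$'s meetings are threatened, and on checking that the radius $4$ (hence the required code distance $9$) is precisely what the hypothesis $r\le 2^d/d^7$ affords. The simultaneous play of the groups causes no additional difficulty, since each group's \revs\ and its relevant spies remain inside its own radius-$4$ ball, and these balls are pairwise disjoint.
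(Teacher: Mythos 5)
Your proposal is correct and follows essentially the same route as the paper's proof: choose a distance-$9$ code of size at least $2^d/d^8$ (which exists since $r\le 2^d/d^7$ gives $\FL{r/d}\le 2^d/d^8\le A(d,9)$), run the two-round strategy of Theorem~\ref{hypercube} with $d$ \revs\ around each of $\FL{r/d}$ codewords, and observe that the radius-$4$ balls are disjoint and that spies starting outside such a ball are too far away to guard the meetings formed inside it, so some group faces at most $d-2$ effective spies and wins. Your extra care with the translation automorphism and the ``relevant spy'' bookkeeping only makes explicit what the paper asserts in one line.
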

\begin{proof}
Let $X$ be a code in $Q_d$ with distance $9$ and size at least $2^d/d^8$.  The
\revs\ devote $d$ \revs\ to playing the strategy in the proof of
Theorem~\ref{hypercube} at each of $\FL{r/d}$ vertices of $X$.  If the ball of
radius $4$ at any such vertex has fewer than $d-1$ spies in the initial
configuration, then the \revs\ win in that ball in two rounds, since any spy
initially outside that ball is too far away to guard a meeting formed at
distance $2$ from the central point in round $2$.

Since the code has distance $9$, the balls of radius $4$ are disjoint.  Hence
$(d-1)\FL{r/d}$ spies are needed to keep the \revs\ from winning within two
rounds.
\end{proof}

Theorem~\ref{hypercube} and Corollary~\ref{smallhyp} together imply
that at least $(d-1)\FL{r/d}$ spies are needed to win against $r$ \revs\ on
$Q_d$ unless $d<\log_2 r+7\log_2\log_2 r$.  That many spies may not be enough,
since three \revs\ easily defeat one spy on $Q_2$ by starting initially at
distinct vertices.  Although four \revs\ can threaten meetings at all eight
vertices of $Q_3$, two spies can watch all those meetings and survive the next
round.  It appears that $\sigma(Q_3,2,4)=2$, though we have not worked out a
complete strategy for two spies against four \revs.  We have no nontrivial
general upper bounds on $\sigma(Q_d,2,r)$ when $r>d$.
 
Next we consider the game on hypercubes when $m>2$.  Again we use the threats
made by \revs\ placed initially at vertices of weight 1.  However, for larger
$m$ we use a probabilistic argument instead of explicit counting.  The
probabilistic arguments are simpler and yield a stronger lower bound on
$\sigma(Q_d,m,r)$ than the counting arguments would, but we no longer
completely determine the threshold (and hence we separate this from the case
$m=2$).  Again $V(Q_d)=\{v_S\st S\esub[d]\}$, as specified as before
Theorem~\ref{hypercube}.


\begin{lem}\label{lem:prefix}
For $v\in V(Q_d)$, a vertex $u$ of weight $m$ is within distance $m-1$ of
$v$ if and only if $\C{u\cap v}\ge \FR{\C v+1}2$.
\end{lem}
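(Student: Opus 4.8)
The plan is to reduce the statement to a one-line computation using the fact that distance in $Q_d$ is Hamming distance. Concretely, for any two vertices $v_S$ and $v_T$, a shortest path must flip each coordinate lying in the symmetric difference $S \triangle T$ exactly once, and no coordinate outside $S \triangle T$ need ever be flipped, so the distance between them is exactly $|S \triangle T|$. Identifying each vertex with its defining subset of $[d]$ (as the paper does when writing $u \cap v$), I would write $u$ and $v$ for these sets, noting that the hypothesis ``$u$ has weight $m$'' means precisely $|u| = m$.

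Next I would expand the symmetric difference by inclusion–exclusion: $|u \triangle v| = |u| + |v| - 2|u \cap v| = m + |v| - 2|u \cap v|$. The condition that $u$ lie within distance $m-1$ of $v$ is thus $m + |v| - 2|u \cap v| \le m-1$. Subtracting $m$ from both sides and rearranging gives $2|u \cap v| \ge |v| + 1$, equivalently $|u \cap v| \ge (|v|+1)/2$. Since every manipulation is an equivalence of (in)equalities, this single chain establishes both directions of the ``if and only if'' simultaneously.

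The hard part will be essentially nonexistent: once the Hamming-distance formula is in hand, the lemma is a direct calculation. The only point worth flagging is that no case analysis on the parity of $|v|$ is required, since $|u \cap v|$ is an integer and the bound $(|v|+1)/2$ is exactly the half-integer threshold produced by the computation; when $|v|$ is even the inequality automatically forces $|u \cap v| \ge \lceil (|v|+1)/2 \rceil$, but the stated form already captures this, so I would leave the bound as written rather than splitting into parity cases.
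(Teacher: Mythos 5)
Your proof is correct and follows essentially the same approach as the paper's: identify hypercube distance with the size of the symmetric difference, expand $\C{u\triangle v}=\C u+\C v-2\C{u\cap v}$, and rearrange the inequality $d(u,v)\le m-1$ into $\C{u\cap v}\ge \FR{\C v+1}2$. The remark about parity, while harmless, is unnecessary, exactly as you suspected.
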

\begin{proof}
The distance between any two vertices is their symmetric difference.
Always the size of the symmetric difference is $\C u+\C v-2\C{u\cap v}$.  When
$\C u=m$, it follows that ${d_{Q_d}(u,v)\le m-1}$ is equivalent to
$\C{u\cap v}\ge \FR{\C v+1}2$.
\end{proof}


Our main tool for the game on $Q_d$ is a lemma about families of sets.

\begin{lem}\label{lem:prob}
Let $S$ be a set of at most $t$ vertices in $Q_t$, all having weight at least
$2$.  If $t\ge 38.73m$, then $Q_t$ has a vertex $w$ of weight $m$ such that
$d_{Q_t}(v,w)\ge m$ for all $v\in S$.
\end{lem}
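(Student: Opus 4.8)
The plan is to convert the distance requirement into a covering condition via Lemma~\ref{lem:prefix} and then to realize the desired $w$ as an independent set in an auxiliary hypergraph. By Lemma~\ref{lem:prefix}, a weight-$m$ vertex $w$ satisfies $d_{Q_t}(v,w)\ge m$ exactly when $\C{w\cap v}<\FR{\C v+1}2$, that is, when $\C{w\cap v}\le\FL{\C v/2}$. Viewing $w$ as an $m$-subset of $[t]$, this says that $w$ may contain at most half of each $v\in S$; equivalently, $w$ must avoid every \emph{majority subset} of each $v$, where a majority subset is a set $A\esub v$ with $\C A=\CL{(\C v+1)/2}$. So it suffices to find an $m$-subset of $[t]$ containing no majority subset of any $v\in S$.

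Accordingly, I would form the hypergraph $\mathcal H$ on vertex set $[t]$ whose edges are precisely the majority subsets $A\esub v$, taken over all $v\in S$. Every edge has size at least $2$, since each $v$ has weight at least $2$. An independent set of $\mathcal H$ (a set containing no edge) of size at least $m$ yields the required $w$, because any $m$-element subset of an independent set is again independent. Thus the lemma reduces to proving $\alpha(\mathcal H)\ge m$.

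To bound the independence number I would use the random-deletion method: retain each vertex independently with probability $p$ to form a set $R$, and then delete one vertex from every edge lying inside $R$. The surviving set is independent, so, writing $e_j$ for the number of edges of size $j$, its expected size is at least $pt-\sum_j e_jp^j$, whence $\alpha(\mathcal H)\ge pt-\sum_j e_jp^j$. A vertex $v$ of weight $k$ contributes $\CH k{\CL{(k+1)/2}}$ edges, each of size $j=\CL{(k+1)/2}$; here $j=2$ for $k\in\{2,3\}$, $j=3$ for $k\in\{4,5\}$, and so on, so the weight classes partition $\{k:k\ge2\}$ by the value of $j$. Letting $n_k$ count the weight-$k$ vertices of $S$ and $c_j=\CH{2j-1}j$, this gives $e_j\le c_j(n_{2j-2}+n_{2j-1})$ together with $\sum_j(n_{2j-2}+n_{2j-1})=\C S\le t$.

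The last step is to optimize $p$. Since $c_{j+1}/c_j=\FR{4j+2}{j+1}<4$, the quantities $c_jp^j$ strictly decrease in $j$ whenever $p\le\FR14$, so $\sum_j e_jp^j\le t\max_j c_jp^j=3tp^2$. Hence $\alpha(\mathcal H)\ge t(p-3p^2)$, and taking $p=\FR16$ gives $\alpha(\mathcal H)\ge t/12$; thus $\alpha(\mathcal H)\ge m$ as soon as $t\ge 12m$, so the hypothesis $t\ge 38.73m$ is more than enough. The step I expect to be the real obstacle is exactly this passage from per-set probabilities to an aggregate bound: the naive first-moment argument (choose $w$ uniformly and sum, over $v\in S$, the probability that $\C{w\cap v}$ reaches a majority) fails, because for weight-$2$ and weight-$3$ sets each such probability is of order $(m/t)^2$, and with up to $t$ of them the expected number of violations is of order $m^2/t$, which exceeds $1$ once $m$ is large relative to $\sqrt t$. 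The deletion argument succeeds because it uses the bound $\C S\le t$ only once, multiplied by $\max_j c_jp^j$, rather than summing it against large individual probabilities; in effect it exploits the heavy overlap among the majority subsets when they are regarded as hyperedges.
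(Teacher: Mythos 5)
Your proof is correct, and it takes a genuinely different route from the paper's. The paper also works with a random subset of $[t]$ (each index kept with probability $p\approx 0.0795$), but instead of repairing violations it shows that with positive probability the random set \emph{simultaneously} avoids every $v\in S$ and has size at least $m$: the avoidance events are down-sets in the subset lattice, so the FKG inequality gives $\pr\bigl[\bigcap_{v\in S}A_v\bigr]\ge q^t$ with $q\ge(1-p)^2(1+2p)$, while the Chernoff bound controls $\pr[\,\sizeof{I}<m\,]$; numerically balancing the two exponents is what produces the constant $38.73$. You replace this correlation-plus-concentration scheme with the deletion method on the hypergraph of majority subsets, which needs only linearity of expectation --- no FKG, no Chernoff, no numerical optimization --- and your bookkeeping is sound: the equivalence via Lemma~\ref{lem:prefix} is stated correctly, edges have size at least $2$ precisely because every $v\in S$ has weight at least $2$ (this is where that hypothesis enters), the bound $e_j\le\CH{2j-1}{j}(n_{2j-2}+n_{2j-1})$ with $\sum_j(n_{2j-2}+n_{2j-1})=\C S\le t$ is right, and the ratio $c_{j+1}/c_j=\FR{4j+2}{j+1}<4$ does make $c_jp^j$ decreasing for $p\le\FR14$, so $\sum_j e_jp^j\le 3tp^2$ and $p=\FR16$ gives an independent set of size at least $t/12$. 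Notably, your argument improves the constant: $t\ge 12m$ suffices in place of $t\ge 38.73m$, which would strengthen Theorem~\ref{hyplinear} to ``\revs\ win whenever $s\le r-12m$,'' improve the abstract's $r-39m$ to $r-13m$, and shrink the range where Theorem~\ref{hypercubem}'s bound $r-\FR34m^2$ is stronger from $m\le 52$ to roughly $m\le 16$. Your closing diagnosis is also accurate: the naive first-moment argument over a uniform weight-$m$ vertex fails because weight-$2$ and weight-$3$ members of $S$ each contribute violation probability of order $(m/t)^2$, summing to order $m^2/t$; both your deletion argument and the paper's FKG argument exist exactly to get around this, and yours does so more cheaply.
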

\begin{proof}
Fix $p\in(0,1)$, to be determined later.  Construct a random index set
$I\esub[t]$ by independently including each element of $[t]$ with probability
$p$.  In light of Lemma~\ref{lem:prefix}, for $v\in S$ we say that $I$
\emph{avoids} $v$ if $|v\cap I| < \frac{\sizeof{v} + 1}{2}$.  Our goal is to
show that with $p$ chosen appropriately, with positive probability $I$ avoids
all of $S$ and has size at least $m$.  The desired vertex $w$ can then be any
vertex of weight $m$ contained in such a set $I$.  Our first task is to obtain
a lower bound on $\pr[A_v]$, where $A_v$ is the event that $I$ avoids $v$.  

Let $\bin(n,p)$ denote a random variable having the binomial distribution with
$n$ trials and success probability $p$.  Let $B$ be the event that $2k+1$
trials yield $k$ successes in the first $2k-1$ trials plus two failures at the
end.  Let $B'$ be the event that $2k+1$ trials yield $k-1$ successes in the
first $2k-1$ trials plus two successes at the end.  Canceling common factors
yields $\pr[B]>\pr[B']$ if and only if $p<1/2$.  As a consequence,
$\pr[\bin(2k+1,p)<k+1]>\pr[\bin(2k-1,p)<k]$ when $p<1/2$.  Note also that
$\pr[\bin(2k-2,p)<k]\ge \pr[\bin(2k-1,p)<k]$.

Now let $k = \ceil{\frac{\sizeof{v} + 1}{2}}$, so $k \geq 2$ and
$|v| \in\{2k-2,2k-1\}$.  For the event that $I$ has fewer than $k$ elements of
$v$, our observations about the binomial distribution yield
\[\pr[A_v]\ge\pr[\bin(2k-1,p)<k]\ge\pr[\bin(3,p)<2]=(1-p)^2(1+2p).\]

Let $q = \min_v \pr[A_v]$.  Events of the form $A_v$ are down-sets in the
subset lattice.  By the FKG inequality (see Theorem 6.2.1 of Alon and
Spencer~\cite{AS}), such events are positively correlated when $p<1/2$, so
  \[ \pr\left[\bigcap_{v \in S}A_v\right] \geq q^t = e^{t\ln q}. \]
Now let $X = \sizeof{I}$.  For $m\le \alpha tp$ with $\alpha<1$, Chernoff's
Inequality yields
  \begin{align*}
    \pr[X < m] &= \pr[X - tp < m - tp] \le
     e^{-(m-tp)^2/(2tp)} = e^{-(1-\alpha)^2tp/2}.
  \end{align*}
Our goal is to show $\pr\left[\bigcap_{v \in S}A_v\right]>\pr[X < m]$,
which follows from
$$\ln[(1-p)^2(1+2p)] > -(1-\alpha)^2p/2.$$
With $\alpha=.324722$ and $p=.079532$, the strict inequality holds, and we
obtain $\alpha p\approx .0258259$.  Hence when $d\ge m/(\alpha p)\ge 38.73m$, 
some $m$-set avoids all vertices in $S$.
\end{proof}

Before we apply this lemma to the game on the hypercube, we prove a general
result that relates the game on a graph and its retracts.  The notion of 
retract appeared as early as Hell~\cite{H}, as a homomorphism fixing a
subgraph.  The variation from~\cite{NW} that we use becomes the homomorphism
version when loops are available at all vertices.

\begin{defin}\label{retrdef}
An induced subgraph $H$ of a graph $G$ is a \emph{retract} of $G$ if there is a
map $f\st V(G) \to V(H)$ such that (1) $f(v)=v$ for $v\in V(H)$, and
(2) $uv\in E(G)$ implies that $f(u)$ and $f(v)$ are equal or adjacent.
\end{defin}

Nowakowski and Winkler~\cite{NW} proved a theorem for the classical
cop-and-robber pursuit game that is analogous to our next result.
 
\begin{thm}\label{retract}
Let $H$ be a retract of a graph $G$. If the \revs\ win $\RS(H,m,r,s)$, then the
\revs\ win $\RS(G,m,r,s)$.  Equivalently, $\sigma(G,m,r) \geq \sigma(H,m,r)$.
\end{thm}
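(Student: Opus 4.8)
The plan is to have the \revs\ win the game on $G$ by simulating their winning strategy for $\RS(H,m,r,s)$, using the retraction $f\st V(G)\to V(H)$ to translate spy positions from $G$ into $H$. Throughout the real game on $G$, the \revs\ keep all of their tokens on vertices of $V(H)$; this is legitimate because $H$ is an induced subgraph of $G$, so every move available in the $H$-game (move to a neighbor in $H$, or stay put) is also available in $G$. In parallel they maintain an imagined game of $\RS(H,m,r,s)$ in which the positions of the \revs\ are exactly their real positions in $G$, while the position of each spy is the $f$-image of that spy's real position in $G$.

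First I would fix the correspondence between moves. For the initial placement, the \revs\ place in $G$ according to their winning $H$-strategy, and after the spies place in $G$ we record the imagined spy positions as their $f$-images; this is a legal initial spy placement in $H$. In each subsequent round the \revs\ consult the imagined spy positions, move in the imagined $H$-game as prescribed by their strategy, and copy these moves in $G$. Then the real spies move in $G$, and I would push these moves through $f$ to obtain the spies' moves in the imagined game. The key point is that a spy moving from $x$ to $x'$ in $G$ (where $x'=x$ or $xx'\in E(G)$) projects to $f(x),f(x')$, which by property~(2) of Definition~\ref{retrdef} are equal or adjacent in $H$; hence the projected moves are legal spy moves in $H$. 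Thus the imagined game is a bona fide play of $\RS(H,m,r,s)$ in which the \revs\ follow their winning strategy.

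Because that strategy wins against \emph{every} spy strategy on $H$, in particular it defeats the projected play: at the end of some round, $m$ \revs\ occupy a single vertex $v\in V(H)$ with no imagined spy at $v$. Having no imagined spy at $v$ means that no real spy sits at a vertex $x$ with $f(x)=v$. I would then invoke property~(1), namely $f(v)=v$: if a real spy were actually at $v$, it would count as an imagined spy at $f(v)=v$, a contradiction. So $v$ carries no real spy in $G$, while the \revs\ really are gathered there (their real and imagined positions coincide). Therefore the meeting at $v$ is unguarded in $G$ and the \revs\ win, giving $\sigma(G,m,r)\ge\sigma(H,m,r)$.

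The only real content is the pair of translations just described: that spy moves in $G$ always project to legal spy moves in $H$ (property~(2)), and that an unguarded meeting in the imagined $H$-game forces an unguarded meeting in $G$ (property~(1)). Everything else is bookkeeping to confirm that the simulation respects the alternation within each round (\revs\ move first, then spies), so that the \revs' choices in $G$ never depend on spy moves they have not yet observed. I expect that bookkeeping to be routine, with the one genuine subtlety being the direction of the final implication, where $f(v)=v$ is exactly what converts ``unguarded in the projection'' into ``unguarded in reality.''
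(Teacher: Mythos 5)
Your proposal is correct and is essentially the same proof as the paper's: the revolutionaries play their winning $H$-strategy inside $G$, simulate each real spy at $x$ by an imagined spy at $f(x)$ (legal by property~(2) of the retraction), and conclude from $f(w)=w$ (property~(1)) that the final unguarded meeting in the imagined game is also unguarded in $G$. The extra bookkeeping you flag is handled implicitly in the paper but adds nothing different in substance.
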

\begin{proof}
Let $f\st G \to H$ be as guaranteed in Definition~\ref{retrdef}.  The \revs\
play in $G$ by playing exclusively on $H$, using the map $f$ to play as if the
spies in $V(G)-V(H)$ were actually in $V(H)$. 

The \revs\ take initial positions as specified by their winning strategy on $H$.
They simulate a spy on $v \in V(G)$ by a spy on $f(v) \in V(H)$. Whenever a
spy can legally move from $u$ to $v$ in $G$, the definition of retract
guarantees that the simulated spy can move from $f(u)$ to $f(v)$ in $H$.
Therefore, the simulated spies always play legal moves in the imagined game.
The \revs\ play their winning strategy against the simulated spies in $H$ and
eventually form an uncovered meeting at some vertex $w$.  Since $f(w)=w$, the
absence of a simulated spy on $w$ means that there is no real spy on $w$, and
the revolutionaries have won the ``real game'' in $G$.
\end{proof}

\begin{thm}\label{hyplinear}
If $s\le r-38.73m$ and $d\ge r$, then the revolutionaries win $\RS(Q_d,m,r,s)$.
\end{thm}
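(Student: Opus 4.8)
The plan is to combine Lemma~\ref{lem:prob} with the two-round attacking strategy that drove the $m=2$ argument in Theorem~\ref{hypercube}, generalized to meetings of size $m$. First I would fix the spy count at $s = r - 38.73m$ (so that $r - s = 38.73m$, matching the threshold in Lemma~\ref{lem:prob}) and place the \revs\ initially at the weight-$1$ vertices $v_1,\ldots,v_r$; since $d \ge r$ these are distinct. These \revs\ threaten meetings of size $m$ at every vertex of weight $m$ whose name is an $m$-subset of $\{1,\ldots,r\}$: such a vertex is at distance $m-1$ from each of its $m$ constituent weight-$1$ \revs, so all $m$ of them can reach it after $m-1$ moves. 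The \revs\ will spend $m-1$ rounds converging on one such target vertex $w$, choosing $w$ so that the spies cannot assemble $m$ guardians there in time.

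The key step is selecting $w$. After the initial placement, let $S$ be the set of spy positions. A spy can contribute to guarding the eventual meeting at $w$ only if it starts within distance $m-1$ of $w$, since the meeting forms after $m-1$ rounds and a spy moves at most one step per round. By Lemma~\ref{lem:prefix}, a weight-$m$ vertex $w$ is within distance $m-1$ of a spy at $v$ precisely when $\C{w \cap v} \ge \FR{\C v + 1}{2}$; the \revs\ want a $w$ that is \emph{far} (distance $\ge m$) from as many spies as possible. The catch is that weight-$1$ and weight-$0$ spies are automatically close to many targets, so I would first argue that the \revs\ may assume no spy sits at a vertex of weight $0$ or $1$ — a spy at such a low-weight vertex can be relocated to a weight-$\ge 2$ vertex that dominates a superset of the threats it was guarding, exactly the reduction used at the start of Theorem~\ref{domsharp}'s and Theorem~\ref{hypercube}'s arguments. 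After this reduction, the remaining at-most-$s$ spies all have weight $\ge 2$, but Lemma~\ref{lem:prob} is stated for at most $t$ vertices in $Q_t$, so I must restrict attention to a coordinate set of size $t \ge 38.73m$ and to the at-most-$t$ spies that could possibly interfere.

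The mechanism is a \emph{retract}: working inside the subcube $Q_t$ spanned by the first $t$ coordinates (with $t = \lceil 38.73m \rceil \le r \le d$), I would use Theorem~\ref{retract}, since $Q_t$ is a retract of $Q_d$ via the map deleting coordinates outside $[t]$. It therefore suffices to prove the \revs\ win on $Q_t$ against $s$ spies. On $Q_t$ the number of spies relevant to a weight-$m$ target is at most $t$ (indeed at most $s \le t$ after the reduction), so Lemma~\ref{lem:prob} applies and yields a weight-$m$ vertex $w$ with $d_{Q_t}(v,w) \ge m > m-1$ for every spy position $v \in S$. The \revs\ then march the $m$ weight-$1$ \revs\ whose indices lie in $w$ toward $w$ over $m-1$ rounds, while holding the others fixed as ongoing threats; since every spy starts at distance $\ge m$ from $w$, no spy can reach $w$ within $m-1$ moves, so the meeting at $w$ is unguarded when it forms.

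The main obstacle I anticipate is bookkeeping the reduction to weight-$\ge 2$ spies together with the bound on the \emph{number} of interfering spies needed to invoke Lemma~\ref{lem:prob}: the lemma requires at most $t$ vertices, and one must check that after discarding spies too far away to matter (and after the low-weight relocation), the surviving set genuinely has size at most $t$ and lives in $Q_t$. A secondary subtlety is verifying that the \revs' $m-1$-round convergence is robust against spies that move \emph{toward} $w$ during those intermediate rounds — but this is handled by the distance bookkeeping, since a spy starting at distance $\ge m$ is still at distance $\ge 1$ after $m-1$ steps, hence cannot occupy $w$ at the decisive moment.
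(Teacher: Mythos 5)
Your skeleton follows the paper's proof---\revs\ at $v_1,\ldots,v_r$, threats at the weight-$m$ vertices, projection onto a subcube via Theorem~\ref{retract}, and Lemma~\ref{lem:prob} to produce a target no spy can reach in $m-1$ rounds---but the two steps you yourself flag as delicate are exactly where the argument breaks, and both are false as stated. The relocation of low-weight spies fails: by Lemma~\ref{lem:prefix}, a spy at the weight-$1$ vertex $v_i$ guards exactly the $\CH{r-1}{m-1}$ threatened weight-$m$ vertices whose sets contain $i$, whereas a spy at $v_{\{i,j\}}$ guards only those containing both $i$ and $j$, and a spy of weight $3$ or more misses every threat whose set contains $i$ but no other element of the spy's name. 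So no weight-$\ge 2$ vertex guards a superset of the threats guarded by $v_i$; sitting directly on the \revs\ is the spies' strongest opening and must be accounted for, not assumed away. (Neither Theorem~\ref{hypercube} nor Theorem~\ref{domsharp} performs such a relocation; Theorem~\ref{hypercube} counts the \revs\ that spies cover.)

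Second, fixing $t=\lceil 38.73m\rceil$ in advance cannot work. Your inequality $s\le t$ fails once $r$ is large (roughly $r>77.46m$), since $s=r-38.73m$ while $t\approx 38.73m$; worse, the statement you reduce to---that the \revs\ win on $Q_t$ against $s$ spies---is simply false for large $r$: with $s\ge t$ the spies can occupy all $t$ weight-$1$ vertices of that fixed subcube and guard every threat there (and with $s\ge 2^t$ they occupy every vertex of it), so no strategy confined to $Q_{\lceil 38.73m\rceil}$ can win. The missing idea is that the subcube must be chosen \emph{after} the spies place themselves: let $t$ be the number of \revs\ \emph{not covered} by a spy, and let $S$ be the set of spies of weight at least $2$. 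The $\CH{t}{m}$ threats over the uncovered coordinates can be guarded only by $S$, because a weight-$1$ spy relevant to them would be sitting on an uncovered \rev\ (a contradiction) and a weight-$0$ spy is at distance exactly $m$ from every threat. Since the $r-t$ covered \revs\ each consume a distinct spy, $|S|\le s-(r-t)\le t-38.73m$, which simultaneously forces $t\ge 38.73m$ and supplies the hypothesis $|S|<t$ of Lemma~\ref{lem:prob} in the subcube of the uncovered coordinates. With that variable-$t$ counting in place, your retract step and the final $(m-1)$-round convergence argument are sound and coincide with the paper's.
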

\begin{proof}
The \revs\ initially occupy $v_{1}, \ldots, v_{r}$.
The \revs\ threaten meetings after $m-1$ steps at $\CH rm$ vertices of weight
$m$.  The vertices of weight $m$ protected by a spy at $v_{i}$ are
precisely those whose corresponding sets contain $i$.  Let $t$ be the number of
\revs\ uncovered after the initial placement of spies.  By symmetry, we may
assume that the uncovered \revs\ are at $v_{1}, \ldots, v_{t}$.  Let
$S$ be the set of spies initially on vertices having weight at least $2$; only
such spies can protect vertices in the set of $\CH tm$ vertices of weight $m$
above uncovered \revs.  Note that $0\le |S|\le s-(r-t)\le t-38.73m$, and hence
$t\ge 38.73m$.


Every subcube of $Q_d$ is a retract of $Q_d$, by projection.  Hence by
Theorem~\ref{retract}, we may assume that the spies in $S$ are all in $Q_t$.
We can therefore apply Lemma~\ref{lem:prob}.  With $t\ge 38.73m$ and
$|S|\le t-38.73m< t$, some vertex of weight $m$ in $Q_t$ is too far from $S$ to
be reached by any spy within $m-1$ rounds, and the \revs\ win.
\end{proof}

Although $|S|\le t-38.73m$ in Theorem~\ref{hyplinear} while
Lemma~\ref{lem:prob} allows $|S|\le t$, generalizing the lemma to vary
$|S|$ in terms of $t$ does not noticeably strengthen the application.

When $t\ge 2m$, an explicit counting bound on the number of vertices of weight
$m$ in $Q_t$ that are within distance $m-1$ of a given vertex of $S$ leads
to the following theorem.

\begin{thm}\label{hypercubem}
If $d\ge r \geq m\ge 3$ and $s\le r-\FR34m^2$, then the revolutionaries win
$\RS(Q_d,m,r,s)$, so $\sigma(Q_d,m,r)> r-\FR34m^2$.
\end{thm}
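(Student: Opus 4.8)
The plan is to mirror the proof of Theorem~\ref{hyplinear}, replacing the probabilistic existence result (Lemma~\ref{lem:prob}) with an explicit counting bound that is sharper when $t$ is only modestly larger than $m$. As before, the \revs\ begin at $v_1,\ldots,v_r$, threatening meetings of size $m$ after $m-1$ rounds at the $\CH rm$ vertices of weight $m$ in $Q_d$; a spy initially at $v_i$ protects exactly those weight-$m$ vertices whose index set contains $i$. Letting $t$ be the number of uncovered \revs\ (which we may assume sit at $v_1,\ldots,v_t$) and $S$ the set of spies on vertices of weight at least $2$, only spies in $S$ can guard the $\CH tm$ threatened weight-$m$ vertices lying above uncovered \revs. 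Since every subcube is a retract of $Q_d$ by projection, Theorem~\ref{retract} lets me assume $S\subseteq Q_t$, so I work entirely inside $Q_t$ and need only exhibit one weight-$m$ vertex of $Q_t$ that is at distance at least $m$ from every vertex of $S$.

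The heart of the argument is therefore a counting step: I would bound, for a single vertex $v\in S$, the number of weight-$m$ vertices $w$ of $Q_t$ that it can \emph{block}, i.e. with $d_{Q_t}(v,w)\le m-1$. By Lemma~\ref{lem:prefix}, $w$ is blocked by $v$ exactly when $\C{w\cap v}\ge\FR{\C v+1}2$. I would use this to write the number of blocked weight-$m$ vertices as a sum over the possible overlap sizes $j\ge\CL{(\C v+1)/2}$ of $\CH{\C v}{j}\CH{t-\C v}{m-j}$, and bound this crudely but effectively; the worst case is when $\C v$ is as small as possible (weight $2$), and the bound should come out to something like a constant times $t^{m-1}$ per spy, with the constant and the lower-order structure arranged so the total blocking capacity of $|S|<r-\FR34m^2$ spies falls short of $\CH tm$. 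Summing over $v\in S$ and comparing with $\CH tm$ then yields, for $s\le r-\FR34m^2$ (equivalently $|S|\le t-\FR34m^2$ together with $t\ge\FR34m^2\ge 2m$), the existence of an unblocked weight-$m$ vertex, which the \revs\ reach in $m-1$ rounds to win.

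The hard part will be calibrating the counting inequality so that the clean threshold $\FR34m^2$ emerges. The threatened set has size $\CH tm$, while each spy blocks on the order of $\CH{t}{m-1}$ vertices (since forcing a large overlap with a fixed small set $v$ costs roughly one factor of $t$), so the crossover between ``spies can cover everything'' and ``some threat survives'' occurs when $|S|$ is about $\CH tm/\CH t{m-1}\approx (t-m+1)/m$; turning this ratio estimate into the stated additive bound $r-\FR34m^2$ uniformly for all $t\ge 2m$ (not just asymptotically) is where the delicate binomial manipulation lives, and I expect the constant $\FR34$ to drop out of bounding the tail sum $\sum_{j\ge\lceil(\C v+1)/2\rceil}\CH{\C v}{j}\CH{t-\C v}{m-j}$ in the weight-$2$ worst case. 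Everything else—the retract reduction, the use of Lemma~\ref{lem:prefix}, and the two-phase winning move by the \revs—transfers verbatim from the proofs of Theorem~\ref{hyplinear} and Corollary~\ref{smallhyp}, so the entire novelty is this explicit overlap count replacing the FKG/Chernoff argument of Lemma~\ref{lem:prob}.
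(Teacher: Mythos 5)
Your overall plan---the initial placement from Theorem~\ref{hyplinear}, the retract reduction into $Q_t$, and the replacement of Lemma~\ref{lem:prob} by an explicit count of the weight-$m$ vertices within distance $m-1$ of each spy---is exactly the route the paper indicates: the paper states only that ``an explicit counting bound'' valid for $t\ge 2m$ yields the theorem, and it omits the details as longer and more technical than the probabilistic argument. But your proposal stops precisely where that proof begins. The counting lemma is the entire content of this theorem beyond Theorem~\ref{hyplinear}, and you never prove it; you defer it to a ``delicate binomial manipulation'' that you expect to produce the constant $\FR34$. That deferral is a genuine gap, and the heuristics you offer for how the calibration will go are wrong in two concrete ways.

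First, the worst case is not weight $2$ but weight $3$. A spy at a weight-$2$ vertex blocks only the $\CH{t-2}{m-2}$ weight-$m$ sets containing both of its elements, whereas by Lemma~\ref{lem:prefix} a spy at a weight-$3$ vertex $v$ blocks every weight-$m$ set meeting $v$ in at least two elements, namely $3\CH{t-3}{m-2}+\CH{t-3}{m-3}$ of them---at least $1.5$ times as many when $t\ge 2m$, and nearly $3$ times as many for large $t$. (The paper's own Theorem~\ref{hypercube} already exhibits this for $m=2$: a weight-$2$ spy watches one threat, a weight-$3$ spy watches three.) Second, per-spy blocking is $\Theta(t^{m-2})$, not $\Theta(t^{m-1})$ as in your $\CH tm/\CH t{m-1}$ crossover estimate: forcing overlap at least $2$ with a small fixed set costs two factors of $t$, not one, so the crossover is quadratic, $|S|\approx\FR{t(t-1)}{3m(m-1)}$, rather than linear $\approx (t-m+1)/m$. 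This is not cosmetic. With a linear crossover the argument cannot close at all, since for large $t$ the available $|S|\le t-\FR34m^2$ vastly exceeds $t/m$, and no amount of careful manipulation rescues that. With the correct count, the needed inequality $\bigl(t-\FR34m^2\bigr)\cdot 3\CH{t-2}{m-2}<\CH tm$ reduces (after dividing by $\CH{t-2}{m-2}$) to a quadratic inequality in $t$ whose discriminant is negative for $m\ge 3$ exactly because the constant is at least $\FR34$. That is where $\FR34$ comes from: the factor $3$ of the weight-$3$ worst case. Your prediction that it emerges from the weight-$2$ case is inconsistent with this---weight $2$ alone would calibrate to the constant $\FR14$, which this counting argument cannot support.
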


Theorem~\ref{hypercubem} is stronger than Theorem~\ref{hyplinear} when
$m\le 52$.  We omit the proof, because the proofs of this counting lemma and
theorem are longer and more technical than those of Lemma~\ref{lem:prob} and
Theorem~\ref{hyplinear}, and because we believe that the \revs\ may win against
as many as $r-2m$ spies.

As in Theorem~\ref{hypercube}, the \revs\ in Theorem~\ref{hyplinear} play
locally, winning by staying within distance $m$ of a fixed vertex.  Hence with
general meeting size $m$ we can apply the same coding theory argument as in
Corollary~\ref{smallhyp}.  Given a code with distance $4m-1$, the balls of
radius $2m-1$ are disjoint.  Any vertex with distance more than $2m-1$ from the
central point has distance more than $m-1$ from the threatened meetings and
cannot reach them in $m-1$ turns, which is the number of rounds the \revs\
need to win in the strategy of Theorem~\ref{hyplinear}.  We thus have the
following.

\begin{cor}
If $d < r\le 2^d/d^{4m}$, then $\sigma(Q_d, m, r) > (d-38m)\FL{r/d}$.
\end{cor}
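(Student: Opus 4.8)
The plan is to mirror the proof of Corollary~\ref{smallhyp}, replacing the two-round strategy of Theorem~\ref{hypercube} by the $(m-1)$-round strategy of Theorem~\ref{hyplinear}. The \revs\ split into $\FL{r/d}$ groups of $d$, each group seizing the $d$ neighbors of a chosen code vertex and then running the strategy of Theorem~\ref{hyplinear} in its own neighborhood; spreading the centers far apart keeps these local games from interfering. Everything rests on two ingredients: a code that is simultaneously large enough and spread out enough, and a locality statement guaranteeing that a spy far from a center cannot reach any meeting the local group produces.

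First I would supply the coding input. Since $4m-1>2$, the greedy bound recorded just before Corollary~\ref{smallhyp} gives $A(d,4m-1)\ge 2^d/d^{4m-2}$. The hypothesis $r\le 2^d/d^{4m}$ then gives
\[
\FL{r/d}\le \FR rd\le \FR{2^d}{d^{4m+1}}\le \FR{2^d}{d^{4m-2}}\le A(d,4m-1),
\]
so I may fix a code $X$ of distance $4m-1$ and designate $\FL{r/d}$ of its vertices as centers. Any two centers lie at distance at least $4m-1>2(2m-1)$, so the balls of radius $2m-1$ about the centers are pairwise disjoint.

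Next I would set up the local games and establish locality. At each center $c$ the \revs\ occupy the $d$ vertices adjacent to $c$ and play as in Theorem~\ref{hyplinear} with $r=d$ (legitimate since $d\ge d$); after $m-1$ rounds they form a meeting of size $m$ at a vertex of weight $m$ relative to $c$, hence at distance $m$ from $c$. The crucial timing point is that a spy at distance greater than $2m-1$ from $c$ is, by the triangle inequality, at distance greater than $(2m-1)-m=m-1$ from every weight-$m$ vertex, so it cannot reach such a meeting within the $m-1$ rounds the \revs\ use. Thus only spies inside the ball of radius $2m-1$ about $c$ can guard a meeting formed there, and by disjointness each spy lies in at most one such ball.

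Finally I would finish by a pigeonhole count as in Corollary~\ref{smallhyp}. Applying Theorem~\ref{hyplinear} inside the relevant subcube (reached by the retract reduction of Theorem~\ref{retract} used in its proof, together with Lemma~\ref{lem:prob}), the group at $c$ wins once the ball of radius $2m-1$ about $c$ holds few enough spies. Since the $\FL{r/d}$ balls are disjoint, a spy set too small to keep every ball above this threshold leaves some group victorious, yielding $\sigma(Q_d,m,r)>(d-38m)\FL{r/d}$. I expect the main obstacle to be twofold: verifying the locality step rigorously---that the meeting supplied by Lemma~\ref{lem:prob} is simultaneously out of reach of the local spies (by the lemma) and of the distant spies (by the triangle-inequality bound)---and tracking the constant, since Theorem~\ref{hyplinear} furnishes a per-ball winning threshold of $d-38.73m$, so obtaining the clean $38m$ in the statement requires carrying that constant through the pigeonhole (equivalently, reading $38m$ as the rounded form of the $38.73m$ that the strategy actually delivers).
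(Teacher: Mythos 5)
Your proposal is correct and is essentially the paper's own argument: the paper derives this corollary, with no separate proof, from exactly the ingredients you assemble---a distance-$(4m-1)$ code whose balls of radius $2m-1$ are disjoint, the locality of the strategy in Theorem~\ref{hyplinear} (meetings sit at distance $m$ from the center and are reached in $m-1$ rounds, so any spy farther than $2m-1$ from the center is irrelevant), and a pigeonhole over the $\FL{r/d}$ disjoint balls, with your greedy-code size computation filling in a detail the paper leaves implicit. The constant worry you flag is genuine but is a defect of the paper's statement rather than of your argument: since the per-ball threshold furnished by Theorem~\ref{hyplinear} is $d-38.73m$, what the argument actually proves is $\sigma(Q_d,m,r)>(d-38.73m)\FL{r/d}$, which does \emph{not} imply the stated bound with $38m$ (rounding $38.73$ \emph{down} strengthens the claim), so the statement should properly read $38.73m$, or be weakened to $39m$ as the paper's introduction does for Theorem~\ref{hyplinear} itself.
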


Finally, the hypercube result applies to more general cartesian products via
the notion of retract.  For $U\esub V(G)$, we use $G[U]$ to denote the subgraph
of $G$ induced by $U$.

\begin{cor}
Let $G = G_1\cart\cdots \cart G_d$, where $G_1, \ldots, G_d$ are graphs with at
least one edge.  If the revolutionaries win $\RS(Q_d, m, r, s)$, then the
revolutionaries win $\RS(G, m, r, s)$.
\end{cor}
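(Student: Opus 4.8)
The plan is to realize $Q_d$ as a retract of $G$ and then quote Theorem~\ref{retract}. The one structural fact I need is that each factor $G_i$ contains an edge, so I would begin by fixing an edge $a_ib_i\in E(G_i)$ for every $i$. Let $H$ be the subgraph of $G$ induced by all vertices $(\VEC x1d)$ with $x_i\in\{a_i,b_i\}$ for each coordinate $i$. Two vertices of this set are adjacent in $G$ exactly when they agree in all but one coordinate $i$ and differ there; but the only two admissible values in coordinate $i$ are $a_i$ and $b_i$, which form an edge of $G_i$, so adjacency in $H$ is equivalent to differing in a single coordinate. Identifying $a_i$ with ``absent'' and $b_i$ with ``present'' then gives an isomorphism $H\cong Q_d$.

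Next I would build the retraction coordinatewise. For each $i$, let $f_i\colon V(G_i)\to\{a_i,b_i\}$ fix $a_i$ and $b_i$ and send every other vertex of $G_i$ to $a_i$, and set $f=(\VEC f1d)$. Condition~(1) of Definition~\ref{retrdef} is immediate, since $f$ fixes every value lying in some $\{a_i,b_i\}$. For condition~(2), take an edge $uv$ of $G$: its endpoints differ in exactly one coordinate $i$, with $u_iv_i\in E(G_i)$, and agree elsewhere. Then $f_j(u_j)=f_j(v_j)$ for every $j\ne i$, so $f(u)$ and $f(v)$ can disagree only in coordinate $i$, where both entries lie in $\{a_i,b_i\}$. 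If those entries coincide, then $f(u)=f(v)$; otherwise they are $a_i$ and $b_i$, so $f(u)$ and $f(v)$ differ along the single edge $a_ib_i$ and are adjacent in $H$. Thus $H$ is a retract of $G$.

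With this in place the conclusion follows at once: a winning revolutionary strategy for $\RS(Q_d,m,r,s)$ is, via the isomorphism $H\cong Q_d$, a winning strategy for $\RS(H,m,r,s)$, and Theorem~\ref{retract} promotes it to a winning strategy for $\RS(G,m,r,s)$. I do not anticipate a genuine obstacle here. The whole proof rests on the single observation that a Cartesian product of graphs each having an edge contains a hypercube of the matching dimension as a retract, and every verification needed is the routine coordinatewise check carried out above; the hypothesis ``at least one edge'' in each factor is exactly what guarantees the required edge $a_ib_i$ and hence the full-dimensional subcube.
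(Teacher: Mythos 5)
Your proposal is correct and follows essentially the same route as the paper: fix an edge in each factor, take the induced subgraph on the product of those edge pairs (which is a copy of $Q_d$), define a coordinatewise retraction onto it, verify the two conditions of Definition~\ref{retrdef}, and invoke Theorem~\ref{retract}. The only (immaterial) difference is that your map sends vertices outside $\{a_i,b_i\}$ to $a_i$, whereas the paper's map sends vertices other than $v_i$ to $w_i$; both choices yield a valid retraction by the same verification.
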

\begin{proof}
By Theorem~\ref{retract}, it suffices to show that $G$ contains a retract
isomorphic to $Q_d$.  Select $v_iw_i \in E(G_i)$ for each $i$, and let
$U = \{v_1,w_1\}\times \cdots \times \{v_d,w_d\}$.  Note that $G[U] \cong Q_d$. 

To define $f\st V(G)\to U$, first define $g_i\st V(G_i) \to \{v_i, w_i\}$ by
setting  $g_i(x) =v_i$ if $x=v_i$ and $g_i(x)=w_i$ otherwise.
Now let $f(\VEC x1d) = (g_1(x_1), \ldots, g_d(x_d))$.  Clearly $f$ fixes $U$.
If $xy\in E(G)$, then there exists exactly one $i$ such that $x_i \neq y_i$;
without loss of generality, $x_i \neq v_i$. If also $y_i \neq v_i$, then
$g_i(x_i) = g_i(y_i) = w_i$, so $f(x) = f(y)$.

On the other hand, if $y_i = v_i$, then $g_i(x_i) = w_i$ and $g_i(y_i) = v_i$
while $g_j(x_j) = g_j(y_j)$ for all $j \neq i$, so $f(x) f(y)\in E(G[U])$ since
$w_i v_i\in E(G)$. Therefore $f$ satisfies the conditions in
Definition~\ref{retrdef}, and $G[U]$ is a retract of $G$ isomorphic to $Q_d$.
\end{proof}

\section{Random Graphs}\label{randsec}

In the Erd\H{o}s--Renyi binomial model $G(n,p)$, the vertex set is $[n]$, pairs
of vertices occur as edges independently with probability $p$, and we say that
an event occurs {\it almost surely} if its probability tends to $1$ as
$n\to\infty$.

When the graph is randomly generated and there are not too many \revs, the
\revs\ can play a strategy like that in Proposition~\ref{split} to defeat $r-m$
spies: the \revs\ occupy vertices so that no matter where the spies are placed,
any $m$ uncovered vertices can meet at some vertex adjacent to no spy.
When the number of \revs\ is larger, also the allowed number of spies is larger;
the \revs\ no longer can find such a placement, and the number of spies 
needed is only a fraction of $r$.

Our main task in this section is to show that for constant edge-probability $p$,
these two situations for the number of \revs\ are surprisingly close together,
differing only by a constant factor.  In particular, when $r<\ln2\ln n$ the
\revs\ almost always win agains $r-m$ spies, and when $r>cm\ln n$ almost
always $cr/m$ spies can win, where $c$ is any constant greater than $4$.  The
argument in the first setting also yields results when $p$ depends on $n$.

Independently, Mitsche and Pra\l at~\cite{MP} have proved that for $G$ in
$G(n,p)$, almost surely $\sgmr\le \FR rm+2(2+\sqrt 2+\epsilon)\log_{1/(1-p)}n$;
here $p$ can depend on $n$ (they also obtain conditions under which $r-m+1$
spies are needed).  Their upper bound is sharp within an additive constant, but
also they require $r$ to grow faster than $(\log n)/p$.  In comparison to our
method, they use more intricate structural characteristics of the random graph
and a more complex strategy for the spies.  Our strategy for the spies is like
that used elsewhere in this paper: introduce a notion of ``stable position''
that keeps the meetings covered, and show that the spies can maintain a stable
position.

First we consider the range where $r-m+1$ spies are needed.  Motivated by Alon
and Spencer~\cite{AS}, we say that $G$ has the \emph{$r$-extension property} if
for any disjoint $T,U \subset V(G)$ with $\C T + \C U \leq r$, there is a
vertex $x \in V(G)$ adjacent to all of $T$ and none of $U$.  We first show
why this property makes the game easy for the \revs.

\begin{prop}
If a graph $G$ satisfies the $r$-extension property, and $m\le r'\le r$,
then $G$ is spy-bad for $r'$ \revs\ and meeting size $m$.
\end{prop}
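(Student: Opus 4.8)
The plan is to mimic the argument of Proposition~\ref{split}, but to let the \revs\ choose their attack \emph{after} seeing the spy placement, so that no counting over all $m$-sets is needed: a single well-chosen target suffices. First I would have the \revs\ occupy $r'$ distinct vertices, which is possible because the $r$-extension property forces $G$ to be large. Indeed, starting from any vertex and repeatedly applying the property with $U=\emptyset$ and $T$ equal to the set of vertices chosen so far (legal as long as $|T|\le r$), each witness is adjacent to all previous choices and is therefore a new vertex; hence $G$ has more than $r\ge r'$ vertices.

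Now the spies place their $r'-m$ spies; let $P$ be the set of vertices carrying a spy, so $|P|\le r'-m$. Since the \revs\ sit on $r'$ distinct vertices, at least $r'-|P|\ge m$ of them are spy-free; let $A$ be a set of $m$ such \revs. The decisive step is to produce a target vertex $w$ that the spies cannot guard even after moving in the coming round, that is, a vertex $w$ adjacent to every vertex of $A$ and satisfying $N[w]\cap P=\varnothing$. For this I apply the $r$-extension property with $T=A$ and $U=P$: these are disjoint (the vertices of $A$ carry no spy) and $|T|+|U|=m+|P|\le m+(r'-m)=r'\le r$, so the hypotheses $m\le r'\le r$ are exactly what make the application legal. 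The property returns a vertex $w$, which we take outside $T\cup U$, adjacent to all of $A$ and to none of $P$; hence $w\notin P$ and no neighbor of $w$ lies in $P$, i.e.\ $N[w]\cap P=\varnothing$.

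To finish, in the first round the $m$ \revs\ of $A$ all move to $w$ (legal, since $w$ is adjacent to each of them), forming a meeting there. Because $N[w]\cap P=\varnothing$, no spy starts on $w$ or on a neighbor of $w$, so no spy can reach $w$ when the spies move; the meeting at $w$ is therefore unguarded at the end of the round and the \revs\ win. Thus $r'-m$ spies lose, giving $\sigma(G,m,r')\ge r'-m+1$, and together with the trivial upper bound $\sigma(G,m,r')\le r'-m+1$ this shows that $G$ is spy-bad for $r'$ \revs\ and meeting size $m$. The one point that requires care---and the crux of the argument---is that guarding is prevented not merely by $w\notin N(P)$ but by $w\notin N[P]$; this is precisely why it matters that the extension witness is a genuinely new vertex, lying outside $P$ as well as outside $N(P)$, and it is where the choice of $U=P$ and the budget $m+|P|\le r$ pay off.
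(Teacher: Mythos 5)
Your proof is correct and takes essentially the same route as the paper's: place the $r'$ revolutionaries on distinct vertices, and once the spies have placed, apply the $r$-extension property with $T$ a set of uncovered revolutionaries and $U$ the spy-occupied vertices to obtain a meeting vertex that no spy can reach, winning in the first round. Your version is in fact slightly more careful than the paper's, which takes $T$ to be \emph{all} uncovered revolutionaries (so that $\C T+\C U$ could exceed $r$ if spies sit on vertices without revolutionaries); restricting $T$ to exactly $m$ uncovered revolutionaries, as you do, keeps the budget at $m+\C U\le r'\le r$ and makes the application of the extension property airtight.
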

\begin{proof}
The $r'$ \revs\ initially occupy any set of $r'$ vertices in $G$.  To see that
$r'-m$ spies cannot prevent them from winning on the first round, let $U$ be the
set occupied by the spies, and let $T$ be the set occupied by uncovered \revs.
The \revs\ on $T$ win by moving to the vertex $x$ guaranteed by the 
$r$-extension property.
\end{proof}

Alon and Spencer~\cite[Theorem~10.4.5]{AS} present the result below for
constant $r$, but the proof holds more generally.

\begin{thm}
Let $\epsilon = \min\{p, 1-p\}$, where $p$ is a probability that depends on $n$.
If $r = o\left(\frac{n\epsilon^r}{\ln n}\right)$ and $n\epsilon^r \to \infty$,
then $G(n,p)$ almost surely has the $r$-extension property (and hence is
spy-bad for all $m$ and $r'$ with $m\le r'\le r$).
\end{thm}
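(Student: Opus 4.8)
The plan is a standard first-moment (union-bound) argument ranging over all candidate pairs $(T,U)$. First I would fix disjoint sets $T,U\esub V(G)$ with $\C T+\C U\le r$ and estimate the probability that they witness a failure of the extension property, namely that no vertex $x$ outside $T\cup U$ is adjacent to all of $T$ and to none of $U$. For a single such $x$, the relevant random edges are exactly those joining $x$ to $T\cup U$, so the event that $x$ is \emph{good} has probability $p^{\C T}(1-p)^{\C U}\ge \epsilon^{\C T+\C U}\ge \epsilon^r$, using $p\ge\epsilon$, $1-p\ge\epsilon$, and $\epsilon\le 1$. Since the edge sets joining distinct candidate vertices to $T\cup U$ are disjoint, the goodness events are independent across the $n-\C T-\C U$ candidates, and the probability that $(T,U)$ fails is exactly $(1-p^{\C T}(1-p)^{\C U})^{n-\C T-\C U}\le(1-\epsilon^r)^{n-r}$.

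Next I would union-bound over all pairs. The number of disjoint ordered pairs $(T,U)$ with $\C T+\C U\le r$ is at most $\SE k0r\CH nk 2^k$, since one chooses the $k$ vertices of $T\cup U$ and then labels each as lying in $T$ or $U$; this is crudely at most $(r+1)(2n)^r$. Hence the probability that the $r$-extension property fails is at most $(r+1)(2n)^r(1-\epsilon^r)^{n-r}$.

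Finally, I would take logarithms and apply $1-x\le e^{-x}$ to bound the failure probability by $\exp\big(\ln(r+1)+r\ln(2n)-\epsilon^r(n-r)\big)$. Here the linear term should dominate. Since $\epsilon\le 1$ forces $\epsilon^r\le 1$, the hypothesis $r=o(n\epsilon^r/\ln n)$ already yields $r=o(n/\ln n)$, so $r/n\to 0$ and $\epsilon^r(n-r)=(1-o(1))\epsilon^r n$. Rewriting the same hypothesis as $r\ln n=o(n\epsilon^r)$ shows the whole counting contribution $\ln(r+1)+r\ln(2n)=O(r\ln n)$ is $o(\epsilon^r n)$. Thus the exponent equals $-(1-o(1))\epsilon^r n$, which tends to $-\infty$ because $n\epsilon^r\to\infty$; the failure probability therefore tends to $0$, so $G(n,p)$ almost surely has the $r$-extension property, and the parenthetical spy-badness conclusion follows at once from the preceding proposition.

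I expect the only delicate point to be the bookkeeping in this last step: confirming that the polynomially many pairs, contributing $O(r\ln n)$ to the exponent, are genuinely swamped by the extension-failure exponent $-\epsilon^r n$. Both hypotheses on $r$ and on $n\epsilon^r$ must be invoked, and the $n$-dependence of $p$ (and hence of $\epsilon$) has to be carried through the asymptotics correctly rather than treated as constant as in the fixed-$r$ version of Alon and Spencer.
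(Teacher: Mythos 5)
Your proposal is correct and follows essentially the same route as the paper's proof: a union bound over all pairs $(T,U)$, independence of the extension events across candidate vertices $x$ (disjoint edge pairs), the bound $(1-\epsilon^r)^{n-r}\le e^{-\epsilon^r(n-r)}$, and the observation that the hypotheses force the counting term $O(r\ln n)$ to be swamped by $-\epsilon^r n\to-\infty$. The only cosmetic difference is your count $(r+1)(2n)^r$ of ordered pairs versus the paper's $3^r\binom{n}{r}\le(3n)^r$; both are harmless polynomial factors.
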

\begin{proof}
Let $G$ be distributed as $G(n,p)$.  Given $T,U \subset V(G)$ with
$\C T+\C U\le r$, write $t=\C T$ and $u=\C U$.  For $x\in V(G)-(T\cup U)$, let
$A_{T,U,x}$ be the event that $x$ is adjacent to all of $T$ and none of $U$;
note that $\pr[A_{T,U,x}] = p^t(1-p)^u \geq \epsilon^r$.

Let $A_{T,U}$ be the event that $A_{T,U,x}$ fails for all $x\in V(G)-(T\cup U)$.
The events $A_{T,U,x}$ for different $x$ are determined by disjoint sets of 
vertex pairs, so
$\pr[A_{T,U}] \leq (1-\epsilon^r)^{n-r} \leq e^{-\epsilon^r(n-r)}$.

The $r$-extension property fails if and only if some event of the form
$A_{T,U}$ occurs.  Hence it suffices to show that the probability of their
union tends to $0$.  There are $3^r$ ways to form $T$ and $U$ within a fixed
$r$-set of vertices, since a vertex can be in either set or be omitted, and
there are $\CH nr$ sets of size $r$.  Hence the union consists of at most 
$(3n)^r$ events, each of whose probability is at most $e^{-\epsilon^r(n-r)}$.
We compute
$$
(3n)^re^{-\epsilon^r(n-r)} = e^{r\ln(3n) - \epsilon^r(n-r)}
= e^{r\ln 3 + r\ln n - \epsilon^r(n-r)}.
$$
Since $\epsilon\le 1/2$, the condition $r=o\left(\FR{n\epsilon^r}{\ln n}\right)$
implies $r = o(n)$, so the exponent is dominated by $-n\epsilon^r$ and tends to
$-\infty$.  Thus the bound on the probability of lacking the $r$-extension
property tends to $0$, and $G(n,p)$ almost surely satisfies this property.
\end{proof}

In particular, when $p$ is constant, $G(n,p)$ is almost surely spy-bad for
$r\ge m$ when $r\le c\ln n$, where $c<\ln(1/\epsilon)$.  Similarly, when $r$ is
constant, $G(n,p)$ is almost surely spy-bad when $p$ tends to $0$ more slowly
than $1/n^{1/r}$.  With $p\le 1/2$, the key condition is $np^r\to\infty$.

Now we confine our attention to the realm of constant edge-probability $p$
and consider well-known properties of the random graph that enable the spies to
do well.  For every vertex, the expected degree is $p(n-1)$, and for any two
vertices the expected size of their common neighborhood is $p^2(n-2)$.
Moreover, these random variables are so highly concentrated at their
expectations that almost always the degrees of \emph{all} vertices and the
sizes of common neighborhoods of \emph{all} pairs are within constant factors
of their expected values.  We begin by stating this formally; the proofs are
standard and straightforward using the Chernoff Bound.  We treat $G$ as a
sample from the model $G(n,p)$.

\begin{lem}\label{randeg}
Fix $p$ and $\gamma$ with $0<\gamma<p<1$.  In the random graph model $G(n,p)$,
almost surely $(p-\gamma)n<d(v)<(p+\gamma)n$ and
$(p^2 - \gamma^2)n<\sizeof{N(v)\cap N(w)}<(p^2+\gamma^2)n$
for all $v,w\in V(G)$.
\end{lem}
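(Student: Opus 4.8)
The plan is to prove this as a routine concentration argument using the Chernoff Bound followed by a union bound over all relevant vertices and pairs. The statement asserts that almost surely, \emph{every} vertex has degree close to its expectation $pn$ and \emph{every} pair of vertices has common-neighborhood size close to $p^2n$; the strategy is to bound the failure probability for a single vertex (or pair), then sum over the $n$ vertices (or $\binom{n}{2}$ pairs) and show this total tends to $0$.

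First I would handle the degree condition. For a fixed vertex $v$, the degree $d(v)$ is a sum of $n-1$ independent indicator random variables, each equal to $1$ with probability $p$, so $d(v)$ is distributed as $\bin(n-1,p)$ with mean $(n-1)p$. By the Chernoff Bound, the probability that $d(v)$ deviates from its mean by more than a linear amount $cn$ is exponentially small in $n$; concretely, $\pr[\sizeof{d(v)-(n-1)p}\ge \delta n]\le 2e^{-cn}$ for some constant $c>0$ depending on $\delta$ (hence on $\gamma$). Since $\gamma>0$ is a fixed constant, one can absorb the $O(1)$ gap between $(n-1)p$ and $pn$ into the slack and still obtain the bounds $(p-\gamma)n<d(v)<(p+\gamma)n$ for all large $n$. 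A union bound over the $n$ vertices multiplies the failure probability by $n$, yielding at most $2ne^{-cn}\to 0$.

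Next I would treat the common-neighborhood condition analogously. For fixed distinct $v,w$, each other vertex $x$ lies in $N(v)\cap N(w)$ independently with probability $p^2$, so $\sizeof{N(v)\cap N(w)}$ is distributed as $\bin(n-2,p^2)$ with mean $(n-2)p^2$. Again Chernoff gives an exponentially small bound $2e^{-c'n}$ on the probability of a linear-size deviation, and a union bound over the $\binom{n}{2}<n^2$ pairs gives failure probability at most $2n^2e^{-c'n}$, which still tends to $0$ since the exponential dominates the polynomial factor. Taking the union of the two (still $o(1)$) failure events completes the argument: almost surely all the stated inequalities hold simultaneously.

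I expect no genuine obstacle here, which is presumably why the authors describe the proofs as ``standard and straightforward.'' The only points requiring a little care are bookkeeping ones: choosing the Chernoff slack so that the constant difference between $pn$ and $(n-1)p$ (resp.\ $p^2n$ and $(n-2)p^2$) is comfortably swallowed by the $\gamma n$ (resp.\ $\gamma^2 n$) margin, and confirming that the polynomial union-bound factors ($n$ and $n^2$) are overwhelmed by the exponential decay $e^{-cn}$. Both are immediate once one writes the Chernoff exponent as a positive constant times $n$, so the whole proof reduces to invoking the Chernoff Bound twice and summing.
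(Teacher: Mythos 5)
Your proof is correct and follows exactly the route the paper indicates: the paper gives no written proof for this lemma, remarking only that ``the proofs are standard and straightforward using the Chernoff Bound,'' which is precisely your Chernoff-plus-union-bound argument (noting, as you do, that the events $\{x\in N(v)\cap N(w)\}$ over distinct $x$ involve disjoint edge pairs and are hence independent, so the count is binomial). The bookkeeping points you flag --- absorbing the $O(1)$ difference between $pn$ and $(n-1)p$ into the $\gamma n$ slack, and the polynomial union-bound factors being dominated by exponential decay --- are handled correctly.
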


\begin{lem}\label{lem:alwayscommon}
Fix $p$ and $\gamma$ with $0<\gamma<p<1$.  In the random graph model $G(n,p)$,
almost surely $\frac{\sizeof{N(v) \cap N(w)}}{\sizeof{N(v)}} \geq p-\gamma$
for all $v,w\in V(G)$.
\end{lem}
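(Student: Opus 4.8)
The plan is to derive this directly from Lemma~\ref{randeg}, exploiting the factorization $p^2-\gamma^2=(p-\gamma)(p+\gamma)$. First I would invoke Lemma~\ref{randeg} with the \emph{same} $p$ and $\gamma$ given here: almost surely, simultaneously for every vertex $v$ and every pair $v,w$, one has both $\sizeof{N(v)}=d(v)<(p+\gamma)n$ and $\sizeof{N(v)\cap N(w)}>(p^2-\gamma^2)n$. All of the remaining argument then takes place deterministically on this single almost-sure event, so no further probabilistic estimates are needed.

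On that event, fix any $v,w$. If $w=v$ then the ratio equals $1>p-\gamma$, so assume $w\ne v$. Combining the lower bound on the numerator with the upper bound on the denominator (both positive, since $\gamma<p$), I would estimate
$$
\frac{\sizeof{N(v)\cap N(w)}}{\sizeof{N(v)}}
>\frac{(p^2-\gamma^2)n}{(p+\gamma)n}
=\frac{(p-\gamma)(p+\gamma)}{p+\gamma}=p-\gamma.
$$
Since $v$ and $w$ were arbitrary, this holds for all pairs on the almost-sure event, which is exactly the assertion of the lemma.

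There is no genuine obstacle here; the statement is an immediate corollary once one notices that dividing the concentrated common-neighborhood count by the concentrated degree cancels the factor $p+\gamma$. The only point worth flagging is that the two bounds drawn from Lemma~\ref{randeg} use a common tolerance $\gamma$, so the lower estimate $p^2-\gamma^2$ on the common neighborhood and the upper estimate $p+\gamma$ on the degree are precisely matched and their quotient is exactly $p-\gamma$, with no slack lost. (Had the two bounds been stated with different tolerances, one would simply carry the resulting constant through; with a single $\gamma$ the clean form $p-\gamma$ falls out automatically.)
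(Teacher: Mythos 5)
Your proposal is correct and is essentially identical to the paper's own proof: both divide the almost-sure lower bound $(p^2-\gamma^2)n$ on common neighborhoods by the almost-sure upper bound $(p+\gamma)n$ on degrees from Lemma~\ref{randeg} and cancel the factor $p+\gamma$. Your explicit treatment of the case $v=w$ is a minor extra care the paper omits, but it changes nothing substantive.
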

\begin{proof}
Using the lower bound on common neighborhood size and the upper bound on
degree from Lemma~\ref{randeg}, almost surely 
$\FR{|N(v)\cap N(w)|}{|N(v)|}\ge\FR{(p^2-\gamma^2)n}{(p+\gamma)n}=p-\gamma$
for all $v,w\in V(G)$.
\end{proof}

\begin{defin}
For $q \in (0,1)$, a graph $G$ is \emph{$q$-common} if
$\frac{\sizeof{N(v) \cap N(w)}}{\sizeof{N(v)}} \geq q$ for all $v,w \in G$.
\end{defin}

We develop a strategy for spies that will be successful on $q$-common graphs
under certain conditions.  In a game position, we need to distinguish players
occupied in forming or covering meetings from those who are not.  These notions
will also be important for spy strategies on complete multipartite or bipartite
graphs.
 
\begin{defin}\label{stablenot}
Given a game position, say that $m$ specified \revs\ in a meeting and one spy
covering them are \emph{bound}.  After designating the bound players for all
vertices hosting meetings, the remaining spies and \revs\ are \emph{free}.
A vertex having at least $m$ \revs\ has exactly $m$ bound \revs.

For a vertex subset $U$, let $r_U$ and $\hat r_U$ denote the total number of
\revs\ and number of free \revs\ on $U$.  Similarly, let $s_U$ and $\hat s_U$
denote the total number of spies and number of free spies on $U$.
Write $\hat r$ and $\hat s$ for $\hat r_{V(G)}$ and $\hat s_{V(G)}$.
A game position is \emph{stable} if (1) all meetings are covered, and
(2) $\hat s_{N[v]}\ge \hat r/m$ for all $v\in V(G)$.
\end{defin}

As in Section~\ref{spygoodsec}, the name \emph{stable} is motivated by 
permitting the game to continue.

\begin{lem}\label{lem:cover}
On any graph $G$, if the position at the beginning of a round is stable, then
the spies can respond to cover all meetings at the end of the round.
\end{lem}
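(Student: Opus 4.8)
The plan is to reduce the lemma to a single application of Hall's Theorem, matching the meetings present at the end of the round to spies able to reach them. After the revolutionaries move, let $W$ denote the set of vertices hosting a meeting, that is, vertices holding at least $m$ revolutionaries. A spy occupying a vertex $u$ at the start of the round can move onto a meeting vertex $w$ precisely when $u\in N[w]$. I would therefore form a bipartite graph with parts $W$ and the set of all spies, joining $w$ to a spy exactly when that spy's start-of-round location lies in $N[w]$. A matching saturating $W$ then yields a legal response: the matched spies move to distinct meeting vertices, each making a single move, so every end-of-round meeting acquires a spy.

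To check Hall's condition, I would fix a nonempty $W'\subseteq W$ (the empty case is trivial) and set $U=\bigcup_{w\in W'}N[w]$. The set of spies adjacent to $W'$ is exactly the set of spies lying in $U$, so I must show $s_U\ge|W'|$. Write $j$ for the number of meetings in the stable starting position whose host vertex lies in $U$. Each such meeting carries exactly one bound spy at that vertex, so $s_U=\hat s_U+j$. On the revolutionary side, each meeting of $W'$ holds at least $m$ revolutionaries at the end, and each such revolutionary sat in $U$ at the start (a revolutionary reaching $w$ came from $N[w]\subseteq U$); hence at least $m|W'|$ revolutionaries occupied $U$ at the start. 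Subtracting the $mj$ bound revolutionaries sitting at the $j$ host vertices in $U$ gives $\hat r_U\ge m(|W'|-j)$.

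Now I would invoke stability. Choosing any $v\in W'$, we have $N[v]\subseteq U$, so condition (2) of stability gives
$$\hat s_U\ge \hat s_{N[v]}\ge \frac{\hat r}{m}\ge \frac{\hat r_U}{m}\ge |W'|-j,$$
using $\hat r\ge\hat r_U$ in the middle. Therefore $s_U=\hat s_U+j\ge|W'|$, confirming Hall's condition and producing the saturating matching.

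I expect the main obstacle to be the mismatch between the \emph{local} stability hypothesis, which controls free spies only within a single closed neighborhood $N[v]$, and Hall's condition, which concerns the typically much larger union $U$. The resolution is that one may afford the crude bound $\hat s_U\ge\hat s_{N[v]}$ for a single $v\in W'$, because the \emph{global} free-revolutionary count $\hat r$ already dominates the local count $\hat r_U$, while $\hat r_U$ itself is forced to be at least $m(|W'|-j)$ once bound and free revolutionaries are separated. The bound-player bookkeeping---that the $j$ host vertices in $U$ contribute exactly $j$ bound spies and $mj$ bound revolutionaries---is precisely what makes the $j$ terms cancel and deliver the clean inequality $s_U\ge|W'|$.
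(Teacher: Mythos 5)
Your proof is correct and follows essentially the same route as the paper's: an auxiliary bipartite graph between end-of-round meeting vertices and spies, Hall's Theorem verified via the bound/free bookkeeping (the $j$ bound spies and $mj$ bound revolutionaries at start-of-round meeting vertices inside $N_G[W']$), and stability invoked at a single vertex $v\in W'$ to bound the free spies. The only cosmetic difference is that you pass through the local count $\hat r_U$ before relaxing to the global $\hat r$, whereas the paper uses $\hat r$ directly; the inequality chain is otherwise identical.
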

\begin{proof}
Let the notation in Definition~\ref{stablenot} refer to the counts at the
beginning of round $t$, in a stable position.  Let $X$ be the set of distinct
vertices hosting meetings after the \revs\ move in round $t$.  Let $Y$ be the
set of spies.  Define an auxiliary bipartite graph $H$ with partite sets $X$
and $Y$.  For $x \in X$ and $y \in Y$, put $xy\in E(H)$ if spy $y$ can reach
$x$ from its position at the start of round $t$, being adjacent to $x$ or
already there.  If some matching in $H$ covers $X$, then the spies can move in
round $t$ to cover all the meetings.

It suffices to show that $H$ satisfies Hall's~Condition for a matching that
covers $X$.  Consider $S\esub X$.  If $N_G[S]$ contains $b$ vertices that
hosted meetings at the start of round $t$, then $|S|\le \frac{\free{r}+mb}{m}$
meetings, because \revs\ who were in meetings not in $N_G[S]$ cannot reach
$S$ in one move.  On the other hand, every free spy at a vertex of $N_G[S]$
can reach $S$ in one move, as can every spy bound to a meeting in $S$.
Choosing $x\in S$, we have
$$|N_H(S)|\ge \hat s_{N[v]}+b\ge \hat r/m+b\ge |S|.$$
Hence Hall's Condition is satisfied and the matching exists.
\end{proof}

The next lemma provides the second half of what the spies need to do.

\begin{lem}\label{lem:makegood}
Let $G$ be a $q$-common graph with $n$ vertices, and fix $\epsilon > 0$.
Given a position in $\RSG$ such that (1) all meetings are covered,
(2) $\free{s} \geq \frac{1+\epsilon}{q}\frac{\free{r}}{m}$, and
(3) $\free{s} \geq \frac{\ln n}{2(1 - 1/(1 + \epsilon))^2q^2}$,
the free spies can move to produce a stable postion.
\end{lem}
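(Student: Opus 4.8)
The plan is to use the probabilistic method: I will exhibit a \emph{randomized} move for the free spies and show that with positive probability the resulting position is stable, which forces some deterministic move to work. Since the bound spies already cover all meetings and can simply stay put, condition~(1) of stability is preserved for free, and the entire task reduces to relocating the $\hat s$ free spies so that $\hat s_{N[v]}\ge\hat r/m$ holds at every vertex $v$. (If $\hat r=0$ this is vacuous, so I may assume $\hat r\ge 1$.)

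First I would have each free spy, currently sitting at some vertex $u$, move to a vertex chosen uniformly at random from $N(u)$, with all $\hat s$ choices independent; these are simultaneously legal single moves. The point of the $q$-common hypothesis is precisely to control the landing spot: for any fixed target vertex $v$, the probability that a free spy at $u$ moves into $N(v)$ equals $\frac{|N(u)\cap N(v)|}{|N(u)|}\ge q$, by the definition of $q$-common applied to the pair $(u,v)$. As $N(v)\subseteq N[v]$, each free spy independently lands in $N[v]$ with probability at least $q$, for every $v$ simultaneously.

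Next, fix $v$ and let $Z_v$ count the free spies landing in $N(v)$, a sum of $\hat s$ independent indicators each of mean at least $q$, so $\ex[Z_v]\ge q\hat s$. Hypothesis~(2) gives $\hat r/m\le q\hat s/(1+\epsilon)$, so the gap between the mean and the target threshold is at least $q\hat s\bigl(1-\tfrac1{1+\epsilon}\bigr)$. Feeding this gap into Hoeffding's inequality (the additive Chernoff bound $e^{-2t^2/\hat s}$ for independent summands in $[0,1]$) yields
\[
\pr\Bigl[Z_v<\tfrac{\hat r}{m}\Bigr]\le \exp\!\Bigl(-2q^2\bigl(1-\tfrac1{1+\epsilon}\bigr)^2\hat s\Bigr)\le \tfrac1n,
\]
where the last step is exactly hypothesis~(3). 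A union bound over the $n$ vertices then makes the probability that \emph{some} constraint fails at most $1$, and strictly below $1$ because the tail estimates are not tight; hence some outcome achieves $\hat s_{N[v]}\ge Z_v\ge\hat r/m$ at every vertex at once. Together with the stationary bound spies, this is a stable position.

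The main obstacle is aligning the exponent of the tail bound with the precise constant in hypothesis~(3): the appearance of $q^2$ (rather than $q$) is what dictates the additive form $e^{-2t^2/\hat s}$ evaluated at the gap $t\ge q\hat s\bigl(1-\tfrac1{1+\epsilon}\bigr)$, and the factor $\bigl(1-\tfrac1{1+\epsilon}\bigr)^2$ must be tracked from condition~(2) through the Hoeffding step. The remaining things to check are purely bookkeeping: that the random moves are each a single legal step, and that by never re-binding the relocated spies they all stay free, so that the free-spy counts $\hat s_{N[v]}$ dominate the $Z_v$ produced by the random experiment.
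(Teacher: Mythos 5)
Your proposal is correct and follows essentially the same route as the paper's proof: each free spy moves to a uniformly random neighbor, the $q$-common hypothesis lower-bounds the probability of landing in $N[v]$ by $q$, the additive Chernoff--Hoeffding bound with gap $\bigl(1-\frac{1}{1+\epsilon}\bigr)q\free{s}$ combines with hypothesis~(3) to give failure probability at most $1/n$ per vertex, and a union bound over the $n$ vertices finishes the argument. Even the delicate point you flag at the end---needing the union bound to come out strictly below $1$---is resolved the same way in the paper, which states the Chernoff bound with strict inequality, so your parenthetical justification matches the paper's treatment.
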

\begin{proof}
We prove that if each free spy moves to a uniformly random vertex in the 
neighborhood of its current position, then with positive probability a stable
position is produced.

For $v\in V(G)$, let $X_v$ be the number of spies in $N[v]$ after the frees
spies move.  Since $G$ is $q$-common, each free spy lands in $N[v]$ with
probability at least $q$.  Also, these events for individual spies are
independent, so $X_v$ is a sum of $\free{s}$ independent indicator variables,
each with success probability at least $q$.  By the Chernoff Bound,
$\pr[X_v - \ex[X_v] < -a] < e^{-2a^2/\free{s}}$ for any positive $a$.  Since
$\ex[X_v]\ge q\free{s}$, taking $a = \left(1-\FR 1{1+\epsilon}\right)q\free{s}$
yields
\[
\pr\left[X_v < \frac{1}{1+\epsilon}q\free{s} \right]
< e^{-2\left(1 - \frac{1}{1+\epsilon}\right)^2q^2\free{s}}
= e^{-\ln n} =1/n,
\]
where the simplification of the exponent uses hypothesis~(3).

Since $G$ has $n$ vertices, with positive probability each vertex receives
at least $\frac{1}{1+\epsilon}q\free{s}$ free spies in its neighborhood.
By condition (2), this quantity is at least $\hat r/m$.  Hence there is
some move by the free spies after which each closed neighborhood has at least 
$\hat r/m$ free spies, making the position stable.
\end{proof}

\begin{thm}\label{commonwin}
Let $G$ be a $q$-common graph with $n$ vertices, and fix $\epsilon > 0$.
If $s \geq \frac{1+\epsilon}{q}\frac{r}{m}$ and 
$s \geq \frac{r}{m} + \frac{\ln n}{2(1 - 1/(1 + \epsilon))^2q^2}$, then the
spies win $\RS(G,m,r,s)$.
\end{thm}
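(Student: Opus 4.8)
The plan is to give the spies a strategy that maintains a \emph{stable} position (Definition~\ref{stablenot}) at the end of every round; since a stable position has all meetings covered, this wins the game for the spies. The two ingredients are already in hand: Lemma~\ref{lem:cover} lets the spies re-cover all meetings after the \revs\ move, and Lemma~\ref{lem:makegood} lets the free spies reposition to restore the neighborhood condition $\hat s_{N[v]}\ge\hat r/m$. The only real work is to fuse these into a single legal spy move per round and to check that the hypotheses of Lemma~\ref{lem:makegood} survive when the global counts $r$ and $s$ are replaced by the free counts $\hat r$ and $\hat s$.

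For the initial round, the \revs\ create at most $\FL{r/m}$ meetings. I would place one spy on each meeting, establishing condition~(1), and then place every remaining (free) spy at a uniformly random neighbor of a single fixed vertex $z$. Since $G$ is $q$-common, for each $v$ a spy placed in $N(z)$ lands in $N[v]$ with probability at least $|N(z)\cap N(v)|/|N(z)|\ge q$; thus the Chernoff-plus-union-bound computation in the proof of Lemma~\ref{lem:makegood} applies verbatim, since it uses only this per-spy bound. With positive probability every closed neighborhood then receives at least $\hat r/m$ free spies, so a stable position exists and the spies can adopt it.

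For the maintenance step, suppose the position at the start of round $t$ is stable, and let the \revs\ move, creating $b$ meetings with $b\le\FL{r/m}$. I would split the spies' single response into two parts on disjoint sets: the \emph{coverers} move to cover the $b$ meetings, which is possible by Lemma~\ref{lem:cover} (whose hypothesis is exactly the stability at the start of the round), while the \emph{free} spies move to random neighbors as in Lemma~\ref{lem:makegood}. Because each spy lies in exactly one group and moves once, this is a legal round; condition~(1) is secured by the coverers and condition~(2) by the free spies, so the new position is stable. To apply Lemma~\ref{lem:makegood} to the free spies I must verify its hypotheses with $\hat r=r-mb$ and $\hat s=s-b$. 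Hypothesis~(2) reduces to $\frac{1+\epsilon}{q}\,b\ge b$, which holds because $q<1<1+\epsilon$; hypothesis~(3) reduces to $s-b\ge\frac{\ln n}{2(1-1/(1+\epsilon))^2q^2}$, which follows from $b\le\FL{r/m}\le r/m$ together with the second hypothesis $s\ge\frac rm+\frac{\ln n}{2(1-1/(1+\epsilon))^2q^2}$.

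The step I expect to be the main obstacle is precisely this combination: a single spy move must simultaneously re-cover all meetings and restore the neighborhood condition, so I must be sure that reserving one spy per meeting still leaves the free spies satisfying both hypotheses of Lemma~\ref{lem:makegood}. The verification above shows the slack is exactly right --- hypothesis~(2) survives because $(1+\epsilon)/q>1$, and hypothesis~(3) survives because the reserved coverers number at most $\FL{r/m}$, which is absorbed by the additive $r/m$ term built into the lower bound on $s$.
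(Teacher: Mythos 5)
Your proposal is correct and follows essentially the same route as the paper's proof: a two-phase response each round (one group of spies covers all meetings via Lemma~\ref{lem:cover}, the disjoint group of free spies restores condition~(2) via Lemma~\ref{lem:makegood}), with the same arithmetic verification that the free counts $\hat s \ge s-\FL{r/m}$ and $\hat r$ still satisfy the lemma's hypotheses because $(1+\epsilon)/q>1$ absorbs the bound players. Your initial-placement device of imagining all free spies at a fixed vertex $z$ is exactly the paper's ``imagine being at an arbitrary vertex'' step, so there is nothing further to reconcile.
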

\begin{proof}
If they can producing a stable position via the initial placements, the spies
use the following strategy in each subsequent round to produce a stable
position.  In Phase~1, they cover all meetings by moving the fewest possible
spies.  In Phase~2, they move the spies who are then free to produce a stable
position.

Since every spy moved in Phase~1 covers a meeting (by the condition of moving
the fewest spies), this strategy never moves a spy twice in one round.  Since
the position at the beginning of the round is stable, Lemma~\ref{lem:cover}
implies that spies can move to cover all meetings.  Hence Phase~1 can be
performed.  (Also, in the initial placement the spies can start by covering all
meetings, since $s\ge r/m$.)

If $\hat s$ is now large enough to satisfy the hypotheses of 
Lemma~\ref{lem:makegood}, then the free spies can complete Phase~2.
This argument is also used to complete the initial placement: after covering
the initial meetings, the free spies imagine being at an arbitrary vertex, and
then Lemma~\ref{lem:makegood} guarantees that they can ``move'' (that is, be
placed) to satisfy the neighborhood requirement for stability.

Consider the position after Phase~1; all meetings are covered.  Since at most
$r/m$ spies can be bound, the second assumed lower bound on $s$ yields
$\free{s}~\ge~s - \frac{r}{m} ~\ge~ \FR{\ln n}{2(1-1/(1+\epsilon))^2q^2}$.

Finally, we use the given lower bound $s\le \FR{1+\epsilon}q\FR rm$ to obtain
the needed lower bound $\hat s\le \FR{1+\epsilon}q\FR{\hat r}m$ that completes
the hypotheses of Lemma~\ref{lem:makegood}.  Let $\mtg r$ denote the number
of bound \revs\ at the start of the round.  Since $q<1<1+\epsilon$, we have
$$
\free{s} ~=~ s-\frac{\mtg{r}}{m} 
~\ge~ \frac{1+\epsilon}{q}\frac{r}{m} - \frac{1+\epsilon}{q}\frac{\mtg{r}}{m}
~=~ \frac{1+\epsilon}{q}\frac{\free{r}}{m}.
$$
We have shown that Phase~1 and Phase~2 can be completed to maintain a stable
position after each round.
\end{proof}

\begin{thm}\label{thm:random}
Fix $p$ and $q$ with $0<q<p<1$.  In the random graph model $G(n,p)$, almost
always $G$ has the following property for all $m \in \NN$: if
$s\ge\FR{1+\epsilon}q\FR rm$ and
$s\ge\FR rm + \FR{\ln n}{2(1-1/(1+\epsilon))^2q^2}$,
then the spies win $\RS(G,m,r,s)$.
\end{thm}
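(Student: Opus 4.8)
The plan is to combine the two immediately preceding results. Theorem~\ref{commonwin} already proves the game-theoretic conclusion—that the spies win $\RS(G,m,r,s)$ under exactly the two displayed hypotheses on $s$—for \emph{any} $q$-common graph. Lemma~\ref{lem:alwayscommon} already establishes that the random graph almost surely has the neighborhood ratio that defines $q$-commonness. So the whole theorem reduces to feeding the output of the lemma into the hypothesis of the theorem.

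Concretely, I would first set $\gamma = p - q$. Since $0<q<p<1$, we have $0<\gamma<p<1$, so Lemma~\ref{lem:alwayscommon} applies with this choice of $\gamma$ and yields that almost surely $\FR{|N(v)\cap N(w)|}{|N(v)|}\ge p-\gamma = q$ for all $v,w\in V(G)$. By the definition of $q$-common, this says precisely that $G$ is almost surely $q$-common. On the almost-sure event that $G$ is $q$-common, I would then invoke Theorem~\ref{commonwin}: whenever $s\ge\FR{1+\epsilon}q\FR rm$ and $s\ge\FR rm+\FR{\ln n}{2(1-1/(1+\epsilon))^2q^2}$, the spies win, and this is exactly the property asserted in the statement.

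The one point requiring care is the order of quantifiers: the theorem claims that almost surely $G$ has the winning property \emph{for all} $m\in\NN$ simultaneously. This is legitimate because $q$-commonness is a single structural property of $G$ that mentions neither $m$ nor $r$ nor $s$. Thus Lemma~\ref{lem:alwayscommon} produces one almost-sure event, and on that event Theorem~\ref{commonwin} may be applied deterministically for every $m$ (and every admissible $r,s$) without any further probabilistic loss. Consequently there is no genuine obstacle here beyond recording this quantifier interchange; the proof is essentially a two-line composition of Lemma~\ref{lem:alwayscommon} (with $\gamma=p-q$) and Theorem~\ref{commonwin}.
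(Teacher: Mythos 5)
Your proposal is correct and is essentially identical to the paper's own proof: the paper likewise deduces from Lemma~\ref{lem:alwayscommon} that $G$ is almost surely $q$-common and then applies Theorem~\ref{commonwin} deterministically on that event. Your explicit choice $\gamma = p-q$ and the remark about the quantifier over $m$ are just careful spellings-out of what the paper leaves implicit.
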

\begin{proof}
By Lemma~\ref{lem:alwayscommon}, almost always $G$ is $q$-common.
By Theorem~\ref{commonwin}, the spies win in the given parameter range on every
$q$-common graph.
\end{proof}

Since $1/q>1$, the next hypotheses imply the hypotheses of
Theorem~\ref{thm:random}.

\begin{cor}
For $p,q,G$ as above, almost surely $G$ has the following property for all
$m\in\NN$: if $s\ge\frac{1+\epsilon}{q}\FR rm$ and
$r\ge \FR{(1+\epsilon)^2 m\ln n}{2\epsilon^3q}$,
then the spies win $\RS(G,m,r,s)$.
\end{cor}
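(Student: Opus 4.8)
The plan is to derive this corollary directly from Theorem~\ref{thm:random} by verifying that the two hypotheses stated here imply the two hypotheses of that theorem. The first hypothesis, $s\ge\FR{1+\epsilon}q\FR rm$, is identical in both statements, so no work is needed there. The entire task is to show that the assumed lower bound on $r$, combined with this first hypothesis, forces the second hypothesis of Theorem~\ref{thm:random}, namely $s\ge\FR rm+\FR{\ln n}{2(1-1/(1+\epsilon))^2q^2}$.

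First I would simplify the awkward factor in the denominator: since $1-\FR1{1+\epsilon}=\FR\epsilon{1+\epsilon}$, the target second hypothesis reads $s\ge\FR rm+\FR{(1+\epsilon)^2\ln n}{2\epsilon^2q^2}$. Next, from the first hypothesis I would split off the main term, writing $s\ge\FR rm+\bigl(\FR{1+\epsilon}q-1\bigr)\FR rm$, so it suffices to bound the surplus $\bigl(\FR{1+\epsilon}q-1\bigr)\FR rm$ below by $\FR{(1+\epsilon)^2\ln n}{2\epsilon^2q^2}$.

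Here the one structural fact I would invoke is $q<1$ (which holds because $q<p<1$): it gives $\FR{1+\epsilon}q-1=\FR{1+\epsilon-q}q>\FR\epsilon q$. Feeding in the assumed lower bound $r\ge\FR{(1+\epsilon)^2m\ln n}{2\epsilon^3q}$ then yields $\FR\epsilon q\FR rm\ge\FR{(1+\epsilon)^2\ln n}{2\epsilon^2q^2}$, exactly the surplus required, with the extra factor of $\epsilon$ in the numerator of the $r$-bound cancelling one power of $\epsilon^3$ in its denominator. Chaining these inequalities shows that the second hypothesis of Theorem~\ref{thm:random} holds, and applying that theorem completes the proof.

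The argument is pure algebra with no genuine obstacle; the only point demanding care is tracking the direction of each inequality and remembering to use $q<1$ rather than the weaker $q<p$ when passing from $\FR{1+\epsilon-q}q$ to $\FR\epsilon q$. I would present the final deduction as a single displayed chain of inequalities so that the cancellation of the $\epsilon$ and $(1+\epsilon)^2$ factors is transparent, and then simply cite Theorem~\ref{thm:random}.
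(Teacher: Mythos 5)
Your proposal is correct and is essentially the paper's own argument: the paper's entire justification is the one-line remark preceding the corollary that ``Since $1/q>1$, the next hypotheses imply the hypotheses of Theorem~\ref{thm:random},'' and your algebra (rewriting $1-\frac{1}{1+\epsilon}=\frac{\epsilon}{1+\epsilon}$, splitting $s\ge\frac{r}{m}+\bigl(\frac{1+\epsilon}{q}-1\bigr)\frac{r}{m}$, and using $q<1$ to get the surplus $\frac{\epsilon}{q}\cdot\frac{r}{m}\ge\frac{(1+\epsilon)^2\ln n}{2\epsilon^2q^2}$) is exactly the verification that remark leaves implicit.
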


In particular, for the random graph with $p=1/2$, setting $\epsilon=1$ and
letting $q$ approach $1/2$ from below yields the following simply-stated
corollary.
\begin{cor}
Almost every graph $G$ has the following property for all $m\in \NN$
and $c>4$: if $s \geq c\frac{r}{m}$ and $r \geq cm\ln n$, then the spies win
$\RS(G,m,r,s)$.
\end{cor}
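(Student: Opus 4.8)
The plan is to deduce this statement from the preceding corollary by specializing its parameters, as the surrounding text suggests. In that corollary the constants $p$, $q$, and $\epsilon$ are fixed; here I would take $p=\FR12$ and $\epsilon=1$. With $\epsilon=1$ the coefficient $\FR{1+\epsilon}q$ multiplying $\FR rm$ becomes $\FR2q$, and the coefficient $\FR{(1+\epsilon)^2}{2\epsilon^3q}$ multiplying $m\ln n$ also reduces to $\FR2q$. Thus both thresholds collapse to the single constant $\FR2q$, so the preceding corollary reads: almost surely, for all $m$, if $s\ge\FR2q\FR rm$ and $r\ge\FR2q m\ln n$, then the spies win. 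As $q\uparrow\FR12$, the constant $\FR2q$ decreases toward $4$, which is exactly why the threshold $c>4$ appears.

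The first step is a monotonicity observation. For fixed $m$, if the displayed implication holds for some constant $c_0$, then it holds for every $c\ge c_0$: the hypotheses $s\ge c\FR rm$ and $r\ge cm\ln n$ are only more restrictive for larger $c$, so they still imply the corresponding hypotheses for $c_0$, and hence a spy win. Consequently, proving the implication for all real $c>4$ is equivalent to proving it for all $c$ in the countable sequence $c_k=4+\FR1k$ with $k\in\NN$, since any $c>4$ satisfies $c\ge c_k$ for some $k$.

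The second step treats each $c_k$ separately. Because $c_k>4$, I can choose $q_k\in\bigl[\FR2{c_k},\FR12\bigr)$, so that $\FR2{q_k}\le c_k$ and $0<q_k<p=\FR12$, as the preceding corollary requires. Applying that corollary with $p=\FR12$, $\epsilon=1$, and $q=q_k$ produces an almost-sure event $E_k$ on which, for all $m$, the inequalities $s\ge c_k\FR rm\ge\FR2{q_k}\FR rm$ and $r\ge c_k m\ln n\ge\FR2{q_k}m\ln n$ force a spy win. Intersecting the events $E_k$ over all $k\in\NN$ — a countable intersection of probability-one events, hence again probability one — yields a single almost-sure event on which the $c_k$-implication holds for every $k$ and every $m$ simultaneously. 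By the monotonicity of the first step, on this event the implication holds for every real $c>4$ and every $m\in\NN$, which is the claim.

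The main obstacle is conceptual rather than computational: the statement quantifies over a continuum of values $c>4$, whereas the preceding corollary guarantees an almost-sure property only for one fixed $q$, and hence one fixed constant $\FR2q$, at a time. The crux is therefore the reduction to a countable dense family $\{c_k\}$ via the monotonicity observation, after which the routine countable intersection of almost-sure events finishes the argument. By contrast, the arithmetic of substituting $\epsilon=1$ and $p=\FR12$ and letting $q\uparrow\FR12$ is immediate and should be handled in one line.
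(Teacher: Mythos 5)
Your first paragraph (specializing to $p=\tfrac12$, $\epsilon=1$, so both thresholds become $2/q$, which decreases to $4$ as $q\uparrow\tfrac12$) is exactly the paper's derivation, and under the paper's intended reading --- $c$ fixed \emph{before} the randomness, then choose $q\in[2/c,\tfrac12)$ and quote the preceding corollary --- that paragraph already \emph{is} the whole proof. The gap is in your second and third paragraphs. In this paper ``almost surely'' does not mean probability $1$ on a fixed probability space: Section 5 defines an event to occur almost surely if its probability tends to $1$ as $n\to\infty$. So each $E_k$ is a sequence of events with $\Pr[E_k]\to 1$ for each \emph{fixed} $k$, and the step ``a countable intersection of probability-one events is again probability one'' is not available; in general $\Pr[E_k]\to1$ for every $k$ does not imply $\Pr[\bigcap_k E_k]\to 1$. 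Worse, the inference demonstrably breaks here: the only witness the preceding corollary gives you for $E_k$ is the event that $G$ is $q_k$-common with $q_k=2/c_k\uparrow\tfrac12$; these witness events are nested downward, and their intersection is the event that $G$ is $\tfrac12$-common, which has probability tending to $0$ (for a fixed pair $v,w$, the ratio $|N(v)\cap N(w)|/|N(v)|$ drops below $\tfrac12$ with probability roughly $\tfrac12$, since it compares two essentially independent identically distributed binomials). So the event you construct is not shown to be almost sure, and the witnesses you actually have for it are almost surely false.

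If you do want the literal quantifier order (one almost-sure event good for all $c>4$ simultaneously), the statement is true, but it needs a different idea than intersecting over $c_k$: return to Theorem~\ref{commonwin} and optimize $\epsilon$ per triple $(m,r,s)$ rather than freezing $\epsilon=1$. Writing $a=sm/r$ and $b=r/(m\ln n)$ (both exceed $4$ whenever the hypotheses hold for some $c>4$), the two hypotheses of Theorem~\ref{commonwin} amount to $q\ge\frac{1+\epsilon}{a}$ and $q\ge\frac{1+\epsilon}{\epsilon\sqrt{2(a-1)b}}$; choosing $\epsilon=a/\sqrt{2(a-1)b}$, both hold once $q\ge\frac1a+\frac1{\sqrt{2(a-1)b}}$, and for $a,b>4$ this quantity is below $\frac14+\frac1{\sqrt{24}}<0.46$. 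Hence the single almost-sure event ``$G$ is $0.46$-common'' (Lemma~\ref{lem:alwayscommon}) covers every $m$ and every $c>4$ at once: one fixed $q$ bounded away from $\tfrac12$, no limit $q\uparrow\tfrac12$, and no intersection of events. Neither the paper nor your proposal carries this out, but this (or simply adopting the paper's weaker quantifier order) is what a complete argument requires.
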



For sparse graphs, as $p\to 0$, we also need $q\to 0$, and the needed number
of \revs\ to apply our method grows at a faster rate than $m\ln n$.
Hence for sparse graphs we do not obtain the conclusion that the ranges
for $r$ where the needed number of spies behaves like $cr/m$ or like $r-m$
are close together.

\section{Complete $k$-partite Graphs}\label{multisec}

In this section we obtain lower and upper bounds on $\sgmr$ when $G$ is a
complete $k$-partite graph.  The lower bound requires partite sets large enough
so that the \revs\ can always access as many vertices in each part as they
might want (enough to ``swarm'' to distinct vertices there that avoid all the
spies).  The upper bounds apply more generally; they do not require large
partite sets, and they require only a spanning $k$-partite subgraph (if there
are additional edges within parts, then spies will be able to follow \revs\
along them when needed).

\begin{defin}
A complete $k$-partite graph $G$ is \emph{$r$-large} if every part has at least
$2r$ vertices.  At the \revs' turn on such a graph, an \emph{$i$-swarm} is a
move in which the \revs\ make as many new meetings of size $m$ as possible in
part $i$.  All \revs\ outside part $i$ move to part $i$, greedily filling
uncovered partial meetings to size $m$ and then making additional meetings of
size $m$ from the remaining incoming \revs.  When $G$ is $r$-large, sufficient
vertices are available in part $i$ to permit this.
\end{defin}

\begin{thm}\label{kpartlower}
Let $G$ be an $r$-large complete $k$-partite graph.  If $k\ge m$, then 
$\sgmr\ge \FR k{k-1}\FR{k\FL{r/k}}{m+c}-k$, where $c=1/(k-1)$.  When
$k\mid r$ the bound simplifies to $\FR k{k-1}\FR r{m+c}-k$.
\end{thm}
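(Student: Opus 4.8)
The plan is to prove this lower bound by giving the \revs\ a strategy that wins in a single $i$-swarm against any placement of $s$ spies whenever $s$ is below the stated threshold. Write $g=\FL{r/k}$, so that the claimed bound is $\sgmr\ge \FR{k^2g}{(k-1)m+1}-k$ (using $c=\FR1{k-1}$, since $\FR k{k-1}\cdot\FR{kg}{m+c}=\FR{k^2g}{(k-1)m+1}$). First I would have the \revs\ occupy, in each of the $k$ parts, a set of $g$ distinct vertices with a single \rev\ each; call these $kg$ vertices \emph{tokens}. This uses only $kg\le r$ \revs, and $r$-largeness guarantees room; the remaining fewer-than-$k$ \revs\ may be placed as extra tokens and can only help. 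After the spies place, for each part $i$ let $c_i$ be the number of tokens in part $i$ carrying a spy and $e_i$ the number of remaining spies in part $i$, so that $s=\sum_i(c_i+e_i)$.

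Next I would analyze an $i$-swarm. When the \revs\ swarm to part $i$, the $(k-1)g$ \revs\ in the other parts pour in; since $k\ge m$ we have $(k-1)g\ge(m-1)g$, enough incoming \revs\ to fill each of the $g-c_i$ \emph{unguarded} tokens of part $i$ up to size $m$ and then assemble the surplus incomers into additional meetings. Thus the swarm produces $\FL{(kg-c_i)/m}$ meetings in part $i$, all at vertices carrying no spy (the $c_i$ guarded tokens are left untouched). The crucial point is that every spy sitting in part $i$—on a guarded token or not—cannot reach any of these meeting vertices, because two vertices of one part are non-adjacent; hence only the $s-c_i-e_i$ spies \emph{outside} part $i$ can respond. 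So if the spies survive the swarm to part $i$, then $\FL{(kg-c_i)/m}\le s-c_i-e_i$.

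I would then show this cannot hold for all $i$ at once when $s$ is too small, which is exactly what lets the \revs\ pick a winning target. Using $\FL{(kg-c_i)/m}\ge \FR{kg-c_i-(m-1)}m$ and clearing denominators, the survival inequality for part $i$ becomes $(m-1)c_i+me_i\le ms-kg+(m-1)$. Summing over the $k$ parts and substituting $\sum_i e_i=s-\sum_i c_i$ collapses the left side to $ms-\sum_i c_i$, giving $\sum_i c_i\ge k^2g-(k-1)ms-k(m-1)$. Since $\sum_i c_i\le s$, this forces $s[(k-1)m+1]\ge k^2g-k(m-1)$, i.e.\ $s\ge \FR{k^2g-k(m-1)}{(k-1)m+1}$, and the one-line check $m-1\le(k-1)m+1$ shows the right side is at least $\FR{k^2g}{(k-1)m+1}-k$. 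Hence if $s$ is below the claimed bound the spies cannot survive every swarm, so some part $i^*$ hosts more fresh meetings than there are spies outside it, and the \revs\ win by swarming there.

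The main obstacle is the accounting in the second step: correctly capturing the spies' genuine dilemma—guarding a token removes one potential meeting but permanently traps that spy inside the target part, while leaving the token free keeps the spy mobile but exposes the token—and packaging this trade-off, together with the floor corrections, into the single summed inequality that produces the exact denominator $(k-1)m+1$ and the clean additive slack $-k$. I would also verify carefully the two structural uses of the hypotheses: $k\ge m$ is precisely what guarantees enough incomers to fill all unguarded tokens, and $r$-largeness is what always lets the \revs\ find spy-free vertices to host the new meetings.
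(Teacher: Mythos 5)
Your proposal is correct and follows essentially the same route as the paper: the revolutionaries spread $\FL{r/k}$ per part, each potential $i$-swarm yields a necessary per-part inequality for spy survival, and summing over the $k$ parts (using $\sum_i c_i\le s$ and the floor estimate) produces exactly the stated bound. The only difference is organizational: by tracking guarded tokens $c_i$ and remaining spies $e_i$ separately, you fold into one uniform inequality what the paper handles via a preliminary ``spies sit on \revs'' reduction followed by a two-case split (some part having more spies than occupied vertices, versus none).
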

\begin{proof}
We may assume that $k\mid r$, since otherwise the \revs\ can play the strategy
for the next lower multiple of $k$, ignoring the extra \revs.

Let $t=r/k$.  The \revs\ initially occupy $t$ distinct vertices in each part.
Let $s_i$ be the initial number of spies in part $i$.  We may assume that they
cover $\min\{s_i,t\}$ distinct \revs, since each vertex of part $i$ has the
same neighborhood, and within part $i$ these are the best locations.
We compute the number of spies needed to avoid losing by a swarm on round $1$.

{\bf Case 1:} {\it $s_i>t$ for some $i$.}
If the \revs\ swarm to part $i$, then all \revs\ previously in part $i$ are
covered, so new meetings consist entirely of incoming \revs\ and are not
coverable by spies from part $i$.  Since $(k-1)t$ \revs\ arrive, at least
$\FL{(k-1)t/m}$ spies must arrive from other parts to cover the new meetings.
Thus 
$$
s\ge s_i+\FL{\FR{(k-1)t}m}\ge t\left(1+\FR{k-1}m\right)=\FR{k-1+m}k\,\FR rm.
$$

{\bf Case 2:} {\it $s_i\le t$ for all $i$.}
For each $i$, part $i$ has $t-s_i$ partial meetings.  Since $s_i\ge 0$, an
$i$-swarm is guaranteed to fill them if $(k-1)t\ge t(m-1)$, which holds when
$k\ge m$.  Hence the new meetings include all \revs\ except the $s_i$ covered
by spies in part $i$ before the swarm.  Spies from other parts must cover
$\FL{(r-s_i)/m}$ new meetings in part $i$.  Summing $s-s_i\ge(r-s_i-m+1)/m$
over all parts yields $(k-1+1/m)s\ge k(r-m+1)/m$, so
$$
s\ge \FR{k(r-m+1)}{m(k-1)+1} > \FR k{k-1}\FR r{m+c}-k.
$$

The lower bound in Case 2 is smaller (better for spies) than the lower bound
in Case 1, so the spies will prefer to play that way.  The lower bound in 
Case 2 is thus a lower bound on $\sgmr$.
\end{proof}

%

As in Section~\ref{randsec}, our strategy for spies maintains a ``stable
position'', defined by invariants ensuring that the spies can cover all
meetings and reestablish a stable position.  Indeed, for complete multipartite
graphs the notion of stable position is very similar to what it was in the
random graph.

\begin{defin}\label{stablemult}
Define bound and free \revs\ and spies as in Definition~\ref{stablenot}.
Let $\hat r_i$ and $\hat s_i$ denote the numbers of free \revs\ and free spies
in part $i$ in the current position of a game on a complete $k$-partite graph.
Let $\hat r$ and $\hat s$ denote the total numbers of free \revs\ and free
spies.  A game position is \emph{stable} if (1) all meetings are covered, and
(2) $\hat s-\hat s_i\ge \hat r/m$ for each part $i$.
\end{defin}

Since the neighborhood of a vertex in a complete multipartite graph consists
of all the partite sets not containing it, for such a graph $G$ the condition
for a stable position is the same as it was in Section~\ref{randsec}.

\begin{lem}\label{stable}
Let $G$ be a graph having a spanning complete $k$-partite subgraph $G'$.  If
the position at the start of round $t$ is stable for $G'$, then the \revs\
cannot win in the current round on $G$.  (As always, assume $s\ge \FL{r/m}$.)
\end{lem}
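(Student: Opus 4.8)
The plan is to reduce to Lemma~\ref{lem:cover}, which holds on an arbitrary graph, by showing that a position that is stable for $G'$ in the sense of Definition~\ref{stablemult} is automatically stable for $G$ in the sense of Definition~\ref{stablenot}. The bound and free players are designated identically in the two definitions, so the totals $\hat r$ and $\hat s$ agree and condition~(1) (all meetings covered) is literally the same statement in both settings. Only condition~(2) needs to be transferred from the partite form $\hat s - \hat s_i \ge \hat r/m$ to the neighborhood form $\hat s_{N[v]} \ge \hat r/m$.

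The transfer rests on a comparison of closed neighborhoods. Fix $v \in V(G)$ and let $i$ be the part of $G'$ containing $v$. Since $G'$ is a spanning subgraph of $G$, we have $N_{G'}[v] \subseteq N_G[v]$, and since free-spy counts only increase on larger vertex sets, $\hat s_{N_G[v]} \ge \hat s_{N_{G'}[v]}$. In the complete $k$-partite graph $G'$ the vertex $v$ is adjacent to exactly the vertices outside part $i$, so $V(G) \setminus (\text{part } i) \subseteq N_{G'}[v]$ and hence $\hat s_{N_{G'}[v]} \ge \hat s - \hat s_i$. Combining these with condition~(2) of Definition~\ref{stablemult} yields
$$
\hat s_{N_G[v]} \;\ge\; \hat s_{N_{G'}[v]} \;\ge\; \hat s - \hat s_i \;\ge\; \frac{\hat r}{m},
$$
which is exactly condition~(2) of Definition~\ref{stablenot} for the graph $G$. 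Thus the position is stable for $G$, and Lemma~\ref{lem:cover} tells us the spies can move (each at most once, via the matching in that lemma) so that every meeting is covered at the end of the round; hence the revolutionaries cannot win this round.

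The only point requiring care --- and where I would focus attention --- is the influence of the edges of $G$ that lie inside the parts of $G'$. These edges hand the revolutionaries extra moves within a part, which is precisely the worry flagged in the introduction to this section. They are neutralized for free: the same edges enlarge each closed neighborhood from $N_{G'}[v]$ to $N_G[v]$, and the resulting inequality $\hat s_{N_G[v]} \ge \hat s_{N_{G'}[v]}$ is exactly what absorbs the additional revolutionary mobility. Because Lemma~\ref{lem:cover} is phrased wholly in terms of $N_G$ and never refers to the partition, no separate handling of intra-part moves is needed once stability has been transferred to $G$; the standing hypothesis $s \ge \FL{r/m}$ is used only as it already is within that lemma.
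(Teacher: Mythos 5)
Your proof is correct, and it takes a genuinely different (and cleaner) route than the paper's. The paper does not invoke Lemma~\ref{lem:cover} as a black box; it reruns the Hall's-condition argument inside the multipartite setting: for a set $S$ of vertices hosting new meetings, it splits into the case where $S$ meets at least two parts of $G'$ (then every spy can reach $S$, so $|N_H(S)|=s\ge\FL{r/m}\ge|S|$, which is exactly where the standing assumption $s\ge\FL{r/m}$ is used) and the case where $S$ lies in a single part $i$ (then the bound spies at the old meetings in $N_{G'}[S]$ together with the $\hat s-\hat s_i$ free spies outside part $i$ verify Hall's Condition). You instead prove a stability transfer --- the bound/free designation depends only on the position and not on the graph, and $\hat s_{N_G[v]}\ge\hat s_{N_{G'}[v]}\ge \hat s-\hat s_i\ge \hat r/m$ for $v$ in part $i$ --- so that stability in the sense of Definition~\ref{stablemult} for $G'$ implies stability in the sense of Definition~\ref{stablenot} for $G$, and Lemma~\ref{lem:cover} then applies verbatim to $G$. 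Your route makes rigorous the paper's informal remark preceding the lemma (that for a complete multipartite graph the two stability conditions agree), strengthens it to an inequality valid for any supergraph, absorbs the intra-part edges of $G$ automatically since they only enlarge closed neighborhoods, and avoids duplicating the matching argument. What the paper's rerun buys is self-containedness and an explicit view of where $s\ge\FL{r/m}$ enters; in your reduction that hypothesis plays no visible role, because the single-vertex estimate $\hat s_{N[v]}\ge\hat r/m$ in Lemma~\ref{lem:cover} handles all sets $S$ uniformly, with no case distinction between sets lying in one part and sets spanning several.
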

\begin{proof}
We follow the argument of Lemma~\ref{lem:cover} and hence summarize the
steps.  The designation of and notation for free and bound players is as of the
start of round $t$.  Let $X$ be the set of distinct vertices hosting meetings
after the \revs\ move in round $t$, let $Y$ be the set of spies, and let $H$ be
the bipartite graph $H$ with partite sets $X$ and $Y$ that encodes which spies
can move to cover which meetings.  

We show that $H$ satisfies Hall's~Condition for a matching that covers $X$.
Consider $S\esub X$.  Note that $|X|\le\FL{r/m}\le s$.  If $S$ has vertices
from more than one partite set in $G'$, then $|N_H(S)| = s \ge |X|$.

If $S$ has vertices only from part $i$ in $G'$, then we may assume that no
vertices of $S$ correspond to old meetings, since they would remain covered by
their bound spies.  Let $p$ be the number of vertices in $N_{G'}[S]$ hosting
meetings at the start of round $t$.  By stability, these vertices have bound
spies, which lie in $N_H(S)$.  Stability also guarantees
$\hat{s} - \hat{s_i} \geq \hat{r}/m$, and all of the free spies counted by
$\hat{s} - \hat{s_i}$ are also in $N_H(S)$.  No spy is both free and bound, so
$|N_H(S)| \geq p + \hat{r}/m$.  On the other hand, the number of \revs\ that
can be used to make meetings in $S$ is at most $\hat r+pm$, since only $m$
\revs\ at a vertex having a meeting are bound; the rest are free.  Hence
$|S|\le\FR{\hat{r}+pm}{m}\le |N_H(S)|$, as desired.
\end{proof}

\begin{thm}\label{kpartupper}
If a graph $G$ has a spanning complete $k$-partite subgraph, then
$\sgmr\le \CL{\FR k{k-1}\FR{r}{m}}+k$.
\end{thm}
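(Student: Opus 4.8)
The plan is to reuse the stable-position machinery of this section. By Lemma~\ref{stable}, it suffices to show that $s=\CL{\FR k{k-1}\FR rm}+k$ spies can establish a position stable for the spanning complete $k$-partite subgraph $G'$ after their initial placement, and can reestablish such a position at the end of every round; maintaining stability forever forces the \revs\ never to win. As in the proof of Theorem~\ref{commonwin}, I would handle each round in two phases: in Phase~1 the spies cover all meetings while moving the fewest possible spies, and in Phase~2 the spies that are then free move to restore condition~(2) of Definition~\ref{stablemult}. Because all the spy moves I use lie in $G'\subseteq G$, having extra intra-part edges in $G$ only helps.

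Phase~1 is immediate from the tools already in place. Since the position at the start of the round is stable, the Hall's-Condition argument inside the proof of Lemma~\ref{stable} (identical to Lemma~\ref{lem:cover}) produces a matching of spies to the new meeting-vertices; covering each of the $b$ meetings uses exactly $b$ spies, which is possible because $b\le\FL{r/m}\le s$. As the spies moved in Phase~1 are exactly the bound spies and Phase~2 moves only free spies, no spy moves twice, and the bound spies staying put keep all meetings covered throughout Phase~2. The same device handles the initial placement: the spies first drop one spy on each of the at most $\FL{r/m}$ meeting-vertices and treat the remaining $s-b$ spies as free.

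The heart of the argument is Phase~2, and here I would exploit the defining feature of a complete $k$-partite graph: a free spy at a vertex of part $j$ may stay (remaining in part $j$) or cross a $G'$-edge into any part $i\ne j$, and since several spies may share a vertex, a free spy can be routed to \emph{any} part. Hence every target distribution $(\hat s_1,\dots,\hat s_k)$ with $\sum_i\hat s_i=\hat s$ is realizable by legal moves, and Phase~2 succeeds precisely when some such distribution meets condition~(2), namely $\hat s_i\le \hat s-\CL{\hat r/m}$ for all $i$. Placing $\hat s$ free spies into $k$ parts so that each part receives at most $C:=\hat s-\CL{\hat r/m}$ is possible if and only if $kC\ge\hat s$, i.e.\ $(k-1)\hat s\ge k\CL{\hat r/m}$.

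It then remains to verify this inequality. After Phase~1 there are $\hat s=s-b$ free spies and $\hat r=r-mb$ free \revs, so $\CL{\hat r/m}=\CL{r/m}-b$, and the requirement becomes $(k-1)(s-b)\ge k(\CL{r/m}-b)$, i.e.\ $(k-1)s\ge k\CL{r/m}-b$; since $b\ge0$ the extremal case is $b=0$, where we need $(k-1)s\ge k\CL{r/m}$. For $s=\CL{\FR k{k-1}\FR rm}+k$ we get $(k-1)s\ge \FR{kr}m+k(k-1)\ge\FR{kr}m+k\ge k\CL{r/m}$, using $k\ge2$, so the condition holds for every admissible $b$ and Phase~2 always succeeds. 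I expect the main obstacle to be this integrality bookkeeping — pushing floors and ceilings through the reduction and confirming that $b=0$ is worst — together with the simple but essential point that the per-part balancing I want is genuinely achievable by spy moves, not merely as a counting statement.
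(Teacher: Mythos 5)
Your proof is correct and follows essentially the same route as the paper's: maintain stability for $G'$ round by round, with Phase~1 covering all meetings via Lemma~\ref{stable} and Phase~2 redistributing the free spies across the $k$ parts. The only difference is bookkeeping in Phase~2 --- the paper spreads the free spies equally (each $\hat s_i$ within $1$ of $\hat s/k$) and verifies $\hat s\ge \FR k{k-1}\FR{\hat r}m+k$ using $s-\hat s=(r-\hat r)/m$, whereas you use the exact feasibility condition $(k-1)\hat s\ge k\CL{\hat r/m}$ and eliminate the number of bound spies $b$ directly; both come down to the same arithmetic.
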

\begin{proof}
Let $G'$ be the specified subgraph, and let $s=\CL{\FR k{k-1}\FR{r}{m}}+k$.
It suffices to show that $s$ spies can produce a stable position at the end of
each round.  First, after the \revs\ have moved, the spies cover all newly
created meetings, moving the fewest possible spies to do so.
By Lemma~\ref{stable}, the spies can do this since the previous round ended in
a stable position (also, $s\ge\FL{r/m}$ guarantees that the spies can do this
in the initial position).

Next, the spies that are now free distribute themselves equally among the $k$
parts of $G'$.  More precisely, with $\hat s$ being the total number of free
spies after the new meetings are covered and $\hat s_i$ being the number of
them in part $i$, we have $|\hat s_i-\hat s/k|<1$ for all $i$.

It suffices to show that this second step produces a stable position.
In order to have $\hat s-\hat s_i\ge \hat r/m$ for all $i$, it suffices to
have $\hat s_j\ge \hat r/[m(k-1)]$ for each $j$.  Since the free spies are
distributed equally, it suffices for the average to be big enough:
$\hat s/k\ge \hat r/[m(k-1)]+1$.  Multiplying by $k$, we require
$\hat s\ge \FR k{k-1}\FR{\hat r}m+k$.

We are given $s\ge \FR k{k-1}\FR{r}m+k$.  The number of bound \revs\ is exactly
$m$ times the number of bound spies; hence $s-\hat s=(r-\hat r)/m$.  Subtracting
this equality from the given inequality yields
$$
\hat s\ge \FR 1{k-1}\FR rm + \FR{\hat r}m+k \ge \FR k{k-1}\FR{\hat r}m+k,
$$
where the last inequality uses $r\ge \hat r$.  We now have the inequality that
we showed suffices for a stable position.
\end{proof}

\section{Complete Bipartite Graphs}\label{bicliqsec}
Finally, let $G$ be an $r$-large bipartite graph.  We give lower and upper
bounds on $\sgmr$ for fixed $m$.  The lower bounds use strategies for the
\revs\ that win after one or two rounds, while the upper bounds use more
delicate strategies for the spies (maintaining invariants that prevent the
\revs\ from winning on the next round).

Since the lower bounds are much easier, we start with them, but first we
compare all the bounds in Table 1.  When $3\mid m$, the lower bound is roughly
$\FR32 r/m$.  We believe that this is the asymptotic answer when $3\mid m$.
When $3\nmid m$, the \revs\ cannot employ this strategy quite so efficiently,
which leaves an opening for the spies to do better.  Indeed, for $m=2$, the
answer is roughly $\FR 75 r/m$, a bit smaller.  For larger $m$, the relative
value of this advantage diminishes, and we expect the leading coefficient to
tend to $3/2$ as $m\to\infty$.


\begin{table}[hbt]
\begin{center}
\caption{Bounds on $\sgmr$}\label{table}
\smallskip
\begin{tabular}{c|c|c|l}
Meeting size&Lower bound&Upper bound&References\\
\hline
$2$&$\bipmtwo$&$\bipmtwo$ &Theorems \ref{revm2} and \ref{spym2}$\hispace$\\
$3$&$\FL{r/2}$&$\FL{r/2}$&Theorems \ref{bipartm3} and \ref{spym3}\\
$m\in\{4,8,10\}$&$\FR15\FL{\FR{7r}m-\FR{13}2}$&$
$&Corollary~\ref{evenm}$\hispace$\\
$m$&$\bipmgen$&$\left(1+\FR1{\sqrt3}\right)\FR rm+1$
   &Corollary \ref{revmgen}; Theorem \ref{spymgen}
\end{tabular}
\end{center}
\end{table}

We first motivate the lower bounds by giving simple strategies for the \revs\
when $m\in\{2,3\}$.  Henceforth call the partite sets $X_1$ and $X_2$.

\begin{examp}\label{simprev}
\rm
Initially place $\FL{r/2}$ \revs\ in $X_1$ and $\CL{r/2}$ \revs\ in $X_2$.
Regardless of where the spies sit, swarming \revs\ can form at least
$\FL{(r-1)/(2m)}$ new meetings on either side that can only be covered by
spies from the other side, so the initial placement must satisfy
$s_1\ge \FL{(r-1)/(2m)}$ and $s_2\ge \FL{r/(2m)}$, where $s_i$ is the number of
spies in $X_i$.

However, the uncovered \revs\ can also be used to form meetings.  If $m=2$,
then the \revs\ can form $\FL{(r-s_i)/2}$ meetings when swarming to $X_i$,
so the spies lose unless $s_{3-i}\ge\FL{(r-s_i)/2}$ for both $i$.  Summing
the inequalities yields $s_1+s_2\ge 2(r-1)/3$.

For $m=3$, considering only $r$ of the form $4k$, where $k\in\NN$, we show
that the \revs\ win against $2k-1$ spies.  Initially there are $2k$ \revs\ in
each part, on distinct vertices.  We may assume $s_1\le s_2$, so $s_1\le k-1$.
Since there are only $2k-1-s_1$ spies in $X_2$, there are at least $s_1+1$
uncovered \revs\ in $X_2$.  Since $s_1\le k-1$, we can use $2(s_1+1)$ \revs\
from $X_1$ to form meetings of size $3$ with the uncovered \revs\ in $X_2$.
Since only $s_1$ spies are available to cover these meetings, the spies lose.
\looseness-1

Thus $\sghr\ge r/2$ when $4\mid r$.  However, when $r=4k+2$, the \revs\ cannot
immediately win against $2k$ spies by this construction.  With $2k+1$ \revs\
in each part and $k$ spies sitting on \revs\ in each part, swarming \revs\ can
only make $k$ new meetings in either part, which can be covered by the spies.
\qed
\end{examp}

The symmetric strategy in Example~\ref{simprev} is optimal when $m=3$ and
$4\mid r$.  However, when $m=2$ and when $m=3$ with $r=4k+2$, the \revs\ can
do better using an asymmetric strategy that takes advantage of moving away
from spies.  When $m=3$ and $r=4k+2$, this other strategy just increases the
threshold by $1$, to the value $\FL{r/2}$ that we will show is optimal for all
$r$.  For $m=2$, however, the better strategy increases the leading term from
$2r/3$ to $7r/10$.

Recall that the partite sets are $X_1$ and $X_2$ and that a vertex (or meeting)
is \emph{covered} if there is a spy there.  Say that a spy is \emph{lonely}
when at a vertex with no \rev.

\begin{thm} \label{revm2}
If $G$ is an $r$-large complete bipartite graph, then
$\sgtr \ge \bipmtwo$.
\end{thm}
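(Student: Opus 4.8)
The plan is to establish the lower bound by exhibiting a strategy for the \revs\ that defeats $\bipmtwo-1$ spies; equivalently, I will show that any spy placement surviving the \revs' attacks must use at least $\bipmtwo$ spies, so that $\sgtr\ge\bipmtwo$. Writing $s_i$ for the number of spies in part $X_i$, the starting point is the one-round \emph{swarm} bound already isolated in Example~\ref{simprev}: swarming to $X_i$ creates $\FL{(r-s_i)/2}$ new meetings that, in a complete bipartite graph, can be guarded only by spies crossing in from the \emph{other} part, so survival forces $s_{3-i}\ge\FL{(r-s_i)/2}$ for each $i$. These two inequalities alone yield only the symmetric bound $\approx 2r/3$, so the whole content of the theorem is the asymmetric, two-round refinement that raises the leading coefficient to $7/10$.

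First I would fix an \emph{asymmetric} initial placement, putting unequal numbers of \revs\ on distinct vertices of $X_1$ and $X_2$, with the exact split chosen only at the end to optimize the resulting floor/ceiling expression. The structural fact used throughout is that a meeting in part $X_i$ can be guarded only by a spy that ends the round in $X_i$; hence in every round the spies must \emph{partition} themselves between the two parts, and a spy committed to guarding in one part is unavailable in the other. The \revs\ exploit this over two rounds: in round~1 they swarm toward the part whose defense is thin, forcing the spies to pull spies across to cover, and because the \revs\ ``move away'' rather than re-forming the same meetings, the spies that crossed are left out of position, effectively lonely for the next round. In round~2 the \revs\ mount a two-sided threat—re-forming meetings using the reserve they held back on one side while the \revs\ freed from the first-round meetings threaten the other side—so that the spies face a Hall-type covering problem whose deficit is governed by how many spies were stranded in round~1.

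The heart of the argument is then a case analysis on the spies' distribution $(s_1,s_2)$, together with the placement of any lonely spies and the parity of $r$, giving in each case an explicit \rev\ continuation that wins by the end of round~2. Each case contributes a linear inequality in $s_1$ and $s_2$; combining these with the swarm inequalities and minimizing $s_1+s_2$ over the feasible region, the worst case for the \revs\ balances at an asymmetric split (with $s_1>s_2$) and value $\tfrac{7}{10}r$. Carrying the floors and ceilings through this optimization—and choosing the initial split so that the parities line up—converts the bound $\tfrac{7}{10}r$ into the exact expression $\bipmtwo$.

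I expect the main obstacle to be the tightness of the two-round accounting: the improvement over the one-round bound $2r/3$ is only about $r/30$, so the argument leaves essentially no slack, and the ``moving away'' maneuver must be engineered to strand precisely the right number of spies, with no double counting of a spy that guards in round~1 and again in round~2. A secondary but fiddly difficulty is the exact floor/ceiling bookkeeping needed to match $\bipmtwo$ rather than merely its leading term; I would handle the small-$r$ and boundary cases ($s_i$ exceeding the number of \revs\ in $X_i$, odd leftover \revs) separately, using $r$-largeness to guarantee enough fresh vertices for every swarm.
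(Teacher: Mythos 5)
There is a genuine gap: your write-up is a plan whose decisive steps are all deferred rather than carried out. You promise ``a case analysis on the spies' distribution $(s_1,s_2)$,'' an ``explicit \rev\ continuation'' in each case, a family of linear inequalities, and an optimization that ``balances at $\tfrac{7}{10}r$,'' but none of these are exhibited — and they are exactly where the theorem lives. Nothing in the proposal identifies which inequalities the spies are forced to satisfy, or why their combination yields $5(s_1+s_2)\ge\FL{7r/2}-3$, which is the inequality one must actually reach to conclude $\sgtr\ge\bipmtwo$.

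Moreover, the concrete choices you do commit to point away from a working strategy. Splitting the \revs\ between both parts at the start essentially re-creates the symmetric situation of Example~\ref{simprev}, which caps out near $2r/3$; the gain to $7r/10$ comes precisely from placing \emph{all} $r$ \revs\ on distinct vertices of $X_1$. That placement forces at least $\FL{r/2}$ spies into $X_1$ (else the \revs\ win immediately by pairing up on spy-free vertices of $X_2$), and then in round 1 the \revs\ move $\FL{r/2}$ of their number \emph{off the covered vertices of $X_1$} onto distinct vertices of $X_2$. Note that no meetings are formed in round 1 at all, so there are no ``\revs\ freed from the first-round meetings''; and the stranded spies are not ``the spies that crossed'' but the ones left behind in $X_1$ guarding abandoned vertices: since at most $s_2$ spies can leave $X_1$, at least $\FL{r/2}-s_2$ spies in $X_1$ remain lonely, so the number $c$ of covered \revs\ in $X_1$ after the spies respond satisfies $c\le s_1-\FL{r/2}+s_2$. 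The two round-2 threats — swarming $X_1$ forces $s_2\ge\FL{(r-c)/2}$, swarming $X_2$ forces $s_1\ge\FL{(r-s_2)/2}$ — combined with the stranding inequality (add twice the second bound to the first) give $5(s_1+s_2)\ge\FL{7r/2}-3$, hence the claimed bound. Without the specific initial placement, the stranding inequality, and this explicit linear combination, the value $\bipmtwo$ does not follow from what you have written.
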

\begin{proof}
We present a strategy for the \revs\ and compute the number of spies needed to
resist it.  The \revs\ start at $r$ distinct vertices in $X_1$.  In response,
at least $\FL{r/2}$ spies must start in $X_1$, since otherwise the \revs\ can
next make $\FL{r/2}$ meetings at uncovered vertices in $X_2$ and win.

In the first round, $\FL{r/2}$ \revs\ move from $X_1$ to $X_2$, occupying
distinct vertices.  They leave from vertices of $X_1$ that are covered by spies
(as much as possible), so after they move at least $\FL{r/2}$ spies in $X_1$
are lonely.  Now the spies move; let $s_i$ be the number of spies in $X_i$
after they move (for $i\in\{1,2\}$).  Let $c$ be the number of \revs\ in $X_1$
that are now covered by spies.  Since at most $s_2$ spies leave $X_1$, there
remain at least $\FL{r/2}-s_2$ lonely spies in $X_1$.  We conclude that
$c\le s_1-\FL{r/2}+s_2$.

In round 2, the \revs\ have the opportunity to swarm to $X_1$ or $X_2$.  Since
there are $\FL{r/2}$ \revs\ in $X_2$, there are at most $\FL{r/2}+1$ uncovered
\revs\ in $X_1$ (on distinct vertices), so swarming \revs\ can make meetings
with all but at most 1 uncovered \rev\ in $X_1$.  The \revs\ can therefore make
$\FL{(r-c)/2}$ new meetings in $X_1$.  These meetings can only be covered by
spies moving from $X_2$, so the spies lose unless $s_2\ge\FL{(r-c)/2}$.

If the \revs\ swarm to $X_2$, then the new meetings there can only be covered
by spies coming from $X_1$.  At most $s_2$ \revs\ in $X_2$ are covered by
spies.  Since $\CL{r/2}$ \revs\ come from $X_1$, they can make meetings with all
uncovered \revs\ in $X_2$, so the spies lose unless $s_1\ge \FL{(r-s_2)/2}$.

Adding twice the lower bound on $s_1$ to the lower bound on $s_2$ (with
$c\le s_1-\FL{r/2}+s_2$),
$$
s_2+2s_1\ge \FR{\FL{3r/2}-s_1-s_2-1}2+r-s_2-1.
$$
The inequality simplifies to $5(s_1+s_2)\ge \FL{7r/2}-3$, as desired.
\end{proof}

The general lower bound in Corollary~\ref{revmgen} uses the formula for $m=3$,
which we study first.  The key is that $r/2-1$ spies are not enough when
$r\equiv 2\mod 4$; we first sketch the idea in an easy case.  Suppose that
$r=4k+2\equiv 6\mod 12$.  The \revs\ start at distinct vertices in $X_1$.
Suppose that all $s$ spies start in $X_1$ and that there are enough of them
to win.  In round 1, $2r/3$ \revs\ move to $X_2$, leaving the spies in $X_1$
lonely.  Let $s_2$ be the number of spies that move to $X_2$ after round 1,
leaving $s_1$ spies in $X_1$.  The \revs\ in $X_2$ now can make $r/3$ meetings
with the remaining $r/3$ \revs\ in $X_1$, so $s_2\ge r/3$.  Since
$s_2\le 2k=r/2-1$, at least $r/6+1$ \revs\ remain uncovered in $X_2$.  The
remaining $r/3$ \revs\ in $X_1$ can make meetings with $r/6$ of them in round 2.
Hence $s_1\ge r/6$, and $s=s_1+s_2\ge r/2$.

The initial placement only requires $r/3$ spies in $X_1$, not $r/2$.  We must
allow for initial placement of $x$ spies in $X_2$, where $0\le x\le r/6$.
The $x$ spies originally in $X_2$ can move to $X_1$ in round 1 and cover \revs\
there; this prevents the \revs\ from threatening as many meetings by a swarm to
$X_1$.  In response, fewer than $2r/3$ \revs\ move to $X_2$ in round 1, and
yet we can guarantee more threatened meetings in the swarm to $X_2$.

\begin{thm}\label{bipartm3}
If $G$ is an $r$-large complete bipartite graph, then $\sghr\ge\FL{r/2}$.
\end{thm}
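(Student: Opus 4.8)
The plan is to show that $\FL{r/2}-1$ spies lose, by exhibiting a two‑round strategy for the \revs. By Example~\ref{simprev} the bound already holds when $4\mid r$ (there the symmetric strategy beats $\FL{r/2}-1$ spies), and since $\sghr$ is nondecreasing in $r$ (extra \revs\ never help the spies, as an unused \rev\ co‑located with another only creates meetings that the spy strategy already guards), the odd cases follow by monotonicity from the even value immediately below them. So it suffices to treat $r\equiv 2\pmod 4$, say $r=4k+2$, and to show that the \revs\ defeat $s=2k=\FL{r/2}-1$ spies. The \revs\ begin at $r$ distinct vertices of $X_1$; let $x$ be the number of spies placed in $X_2$ and $s-x$ the number in $X_1$.

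First I would record the immediate one‑round threat: since a covered \rev\ is still free to move, all $r$ \revs\ can swarm to $X_2$ and form $\FL{r/3}$ meetings on fresh vertices, and these can be guarded only by spies crossing from $X_1$. Hence survival forces $s-x\ge\FL{r/3}$, i.e. $x\le s-\FL{r/3}$. The structural point used throughout is that no player can move within a part: in particular a spy sitting in $X_1$ with no \rev\ on it cannot relocate onto a \rev\ elsewhere in $X_1$, so once the \revs\ vacate a covered vertex of $X_1$ the spy there becomes permanently ``lonely'' as far as guarding $X_1$ is concerned.

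In round~$1$ the \revs\ move enough of their number from $X_1$ to $X_2$ (leaving the spy‑occupied vertices first) to make every spy in $X_1$ lonely; the \revs\ remaining in $X_1$ then sit uncovered at distinct vertices, and those sent across occupy distinct vertices as well, so no meeting exists and round~$1$ is survived however the spies reply. Write $s_1,s_2$ for the numbers of spies in $X_1,X_2$ after that reply, and let $c$ be the number of \revs\ in $X_1$ they cover. Since only spies arriving from $X_2$ can land on a \rev\ in $X_1$, we have $c\le x$. Now I examine the two possible swarms in round~$2$. Swarming everybody to $X_1$ yields $\FL{(r-c)/3}$ unguarded meetings there (the \revs\ already in $X_1$ cannot move, but incoming \revs\ complete them or build fresh ones), guardable only by spies crossing from $X_2$; using $c\le x$ this forces $s_2\ge\FL{(r-c)/3}\ge\FL{(r-x)/3}$. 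Swarming everybody to $X_2$ yields $\FL{(r-s_2)/3}$ unguarded meetings, guardable only from $X_1$, forcing $s_1\ge\FL{(r-s_2)/3}$.

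Adding the last two inequalities gives $s=s_1+s_2\ge\FL{(r-s_2)/3}+s_2$, a quantity increasing in $s_2$; feeding in the lower bound $s_2\ge\FL{(r-x)/3}$ together with the largest admissible $x=s-\FL{r/3}$ then yields $s\ge\FL{r/2}$, contradicting $s=\FL{r/2}-1$. I expect the main obstacle to be the floor bookkeeping: the clean identities $a=2r/3$, $s_2\ge r/3$, $s_1\ge r/6$ hold exactly when $r\equiv 6\pmod{12}$, while the residues $r\equiv 2,10\pmod{12}$ require carrying the $\lfloor\cdot\rfloor$'s through the final inequality to confirm the answer is still $\FL{r/2}$. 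The only other delicate point is the rigorous justification of the lonely‑spy mechanism and of $c\le x$: one must verify that spies left in $X_1$ genuinely cannot be reused to guard the round‑$2$ meetings, which is exactly where the ``no movement within a part'' feature of the complete bipartite graph does the work.
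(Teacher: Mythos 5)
You have correctly identified the skeleton of the argument (reduce to $r=4k+2$, start all \revs\ in $X_1$, make the $X_1$-spies lonely in round 1, then play the two round-2 swarms off against each other), but the two threat counts you extract are not simultaneously valid, and this is a genuine gap rather than floor bookkeeping. Since all \revs\ sit on distinct vertices, a vertex hosting a new meeting in a swarm to $X_j$ contains at most one \rev\ beforehand, and \revs\ already in $X_j$ cannot relocate within $X_j$; so every new meeting consumes at least \emph{two} \revs\ arriving from $X_{3-j}$. Writing $p$ for the number of \revs\ you sent to $X_2$ in round 1, the swarm back to $X_1$ therefore makes at most $\FL{p/2}$ meetings and the swarm to $X_2$ at most $\FL{(r-p)/2}$, so these two swarms can never force more than $\FL{p/2}+\FL{(r-p)/2}\le\FL{r/2}$ spies in total. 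Your claimed bounds $s_2\ge\FL{(r-c)/3}\ge\FL{(r-x)/3}$ and $s_1\ge\FL{(r-s_2)/3}$ sum to roughly $14r/27>r/2$, so at least one of them is never actually forced, no matter how $p$ is chosen; the slack in your final arithmetic ($17r/33$ versus $r/2$) is exactly this overcount. Concretely, with your prescription ($p$ just large enough to make all $X_1$-spies lonely, so $p\approx s-x$), the first bound fails: for $r=10$, $s=4$, $x=1$, $p=3$, the spies respond by sending the $X_2$-spy to cover one \rev\ in $X_1$ and one lonely spy to cover one \rev\ in $X_2$, giving $s_1=3$, $s_2=1$, $c=c_2=1$; the swarm to $X_1$ then makes only $\FL{3/2}=1$ meeting (not $\FL{(r-c)/3}=3$), the swarm to $X_2$ makes $3$, and four spies survive both swarms even though the theorem says four spies lose.

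What rescues the argument in the paper is precisely the piece your proposal leaves unspecified: the exact size of $p$. The paper takes $p=2(r-x-j)/3$, where $j\equiv r-x\pmod 3$, which balances the two supplies so that the $X_1$-swarm forces $s_2\ge p/2=(r-x-j)/3$ (here the incoming pairs, not the coverage count, are the binding constraint) and the $X_2$-swarm forces $s_1\ge\min\{(r-p)/2,\,p-(2k-x)\}$. These sum to exactly $r/2$ when $j\in\{0,1\}$, with no slack at all, and for $j=2$ they sum only to $r/2-1$, so the paper needs a further argument there: equality would require all $2k-x$ spies of $X_1$ to cross and cover \revs\ in $X_2$, leaving only $x$ spies to guard the $X_2$-swarm, which contradicts the initial constraint $x<r/6$. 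So the residue analysis is not optional floor-chasing: with the correct threat values the count is tight, one residue class genuinely falls short of $r/2$, and the tuned choice of $p$ is the heart of the proof.
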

\begin{proof}
Since $\FL{r/2}=\FL{(r+1)/2}$ when $r$ is even, and having an extra \rev\ 
cannot reduce $\sigma$, it suffices to prove the lower bound when $r$ is even.
Example~\ref{simprev} proves it when $4\mid r$, so only the case $r=4k+2$
remains.  We show that $4k+2$ \revs\ can win against $2k$ spies.  Suppose that
the spies can survive for two full rounds after the initial placement.

The \revs\ start at $r$ distinct vertices of $X_1$, so at least $\FL{r/3}$
spies must start in $X_1$.  Let $x$ be the initial number of spies in $X_2$,
with $2k-x$ spies in $X_1$.  Since $X_1$ contains at least $\FL{r/3}$ spies,
$x\le \CL{(2k-2)/3}=\CL{r/6}-1$.  Define $j$ by $r-x\equiv j\mod 3$ with
$j\in\{0,1,2\}$.
In round 1, $p$ \revs\ move to $X_2$, where $p={2(r-x-j)/3}$.  Note that
$p\ge 2k-x$, so all spies in $X_1$ are now lonely.  The number of \revs\
remaining in $X_1$ is $r-p$, which equals ${(r+2x+2j)/3}$. 

Let $s_i$ be the number of spies in $X_i$ after the spies respond in round 1.
Since at most $x$ spies move from $X_2$ to $X_1$ in round 1, the number of
uncovered \revs\ in $X_1$ is now at least ${(r-x+2j)/3}$.  With
$p={2(r-x-j)/3}$, there are enough \revs\ in $X_2$ to threaten meetings at
${(r-x-j)/3}$ vertices in $X_1$ with \revs\ who remained there.  Hence
$s_2\ge {(r-x-j)/3}$.

Now consider a swarm to $X_2$ in round $2$.  Since there were $2k-x$ spies in
$X_1$ initially, the number who moved to $X_2$ and covered \revs\ after round
$1$ is at most $2k-x$.  Hence round $2$ starts with at least $p-2k+x$ uncovered
\revs\ in $X_2$.  The $r-p$ \revs\ remaining in $X_1$ move in pairs to generate
meetings with uncovered \revs\ in $X_2$.  Note that $r-p=(r+2x+2j)/3$ and
$p-2k+x=(r+2x+6-4j)/6$.  The number of meetings that can be made in $X_2$ (and
can only be covered by the $s_1$ spies in $X_1$) depends on $j$.

When $j=0$, the number of meetings made is $(r+2x)/6$, so $s_1\ge (r+2x)/6$,
and we obtain $s_2+s_1\ge \FR{r-x}3+\FR{r+2x}6=r/2$.  When $j=1$, the \revs\
can make $p-2k+x$ meetings in the swarm; hence $s_1\ge (r+2x+2)/6$, and we
obtain $s_2+s_1\ge \FR{r-x-1}3+\FR{r+2x+2}6=r/2$.  Finally, when $j=2$,
the same computation yields only $s\ge \FR{r-x-2}3+\FR{r+2x-2}6=r/2-1$.
However, equality holds only if all $2k-x$ spies initially in $X_1$ move to
$X_2$ in round $1$ to cover \revs.  Only $x$ spies remain in $X_1$ to guard
the swarm to $X_2$ that makes $(r+2x-2)/6$ meetings.  The inequality
$x\ge(r+2x-2)/6$ requires $x\ge (r-2)/4$, but guarding the initial position
required $x<r/6$.  
\end{proof}

\begin{cor}\label{revmgen}\label{evenm}
If $G$ is an $r$-large complete bipartite graph, then $\sgmr\ge\bipmgen$.
If $m$ is even, then $\sgmr\ge \FR15\FL{\FR{7r}m-\FR{13}2}$.
\end{cor}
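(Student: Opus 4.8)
The plan is to reduce general meeting size $m$ to the already-solved cases $m'=3$ (Theorem~\ref{bipartm3}) and $m'=2$ (Theorem~\ref{revm2}) by a simple \emph{blocking} argument. First I would establish the following monotonicity principle, valid on \emph{any} graph $G$: if $m'\le m$ and $k=\CL{m/m'}$, then
\[
\sgmr\ge \sigma\bigl(G,m',\FL{r/k}\bigr).
\]
To prove this, set $r'=\FL{r/k}$ and suppose the \revs\ beat $s$ spies in $\RS(G,m',r',s)$. In the game $\RSG$ the \revs\ partition $kr'$ of their members into $r'$ \emph{blocks} of size $k$ and ignore the remaining $r-kr'$; each block sits on a single vertex, and at every move all $k$ \revs\ of a block move together to the same neighbor (a legal joint move). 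Treating each block as one virtual \rev, the \revs\ run their winning $(m',r')$-strategy, reading the actual spy positions directly as the imagined spy positions. When $m'$ blocks gather on an uncovered vertex $w$ in the imagined game, that vertex really holds $km'\ge m$ \revs\ and still no spy, so it is an unguarded meeting in the real game. Hence the \revs\ beat $s$ spies in $\RSG$; taking $s=\sigma(G,m',r')-1$ yields the displayed inequality. Since $r'\le r$ and $G$ is $r$-large, $G$ is also $r'$-large, so the theorems above apply to the imagined game.

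For the first bound I would take $m'=3$, so $k=\CL{m/3}$ and $r'=\FL{r/\CL{m/3}}$. Theorem~\ref{bipartm3} gives $\sigma(G,3,r')\ge\FL{r'/2}$, and substituting $r'$ produces exactly $\sgmr\ge\FL{\FR12\FL{r/\CL{m/3}}}=\bipmgen$. For the even-$m$ bound I would take $m'=2$; since $m$ is even, $k=m/2$ and $r'=\FL{2r/m}$. Theorem~\ref{revm2} gives $\sigma(G,2,r')\ge\CL{(\FL{7r'/2}-3)/5}\ge(\FL{7r'/2}-3)/5$.

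It then remains only to check the arithmetic inequality $\FL{7r'/2}-3\ge\FL{\FR{7r}m-\FR{13}2}$, which is where the only real care is needed. Writing $2r/m=r'+f$ with $f\in[0,1)$, I have $\FR{7r}m-\FR{13}2=\FR72 r'+\bigl(\FR72 f-\FR{13}2\bigr)$, and $\FR72 f-\FR{13}2<-3$ because $f<1$; hence $\FR{7r}m-\FR{13}2<\FR72 r'-3$, and taking floors gives $\FL{\FR{7r}m-\FR{13}2}\le\FL{7r'/2}-3$. Combining, $\sgmr\ge(\FL{7r'/2}-3)/5\ge\FR15\FL{\FR{7r}m-\FR{13}2}$, as claimed. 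The main (mild) obstacle is this floor bookkeeping in the even case; the blocking reduction itself is routine, and the only structural points to verify are that block moves are legal and that $r$-largeness descends to $r'$-largeness.
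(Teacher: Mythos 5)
Your proposal is correct and is essentially the paper's own argument: the paper likewise groups the revolutionaries into cells of size $\CL{m/3}$ (resp.\ $m/2$) that move as single virtual players in a game with meeting size $3$ (resp.\ $2$), invokes Theorems~\ref{bipartm3} and~\ref{revm2} for the imagined game, and performs the same floor simplification via $\FL{\FR{2r}m}>\FR{2r}m-1$. Your explicit statement of the blocking reduction as a general monotonicity principle and your check that $r$-largeness descends to $r'$-largeness are just slightly more formal packaging of the same idea.
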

\begin{proof}
Let $m'=\CL{m/3}$.  The \revs\ group into cells of size $m'$; each cell moves
together, modeling one player in a game with meeting size $3$.  When three of
these cells converge to make an unguarded meeting, the \revs\ win the original
game.  The $r$ \revs\ make $\FL{r/m'}$ such cells and ignore extra \revs.  By
Theorem~\ref{bipartm3}, the number of spies needed to keep the \revs\ from
winning is at least $\FL{\FL{r/m'}/2}$.

For even $m$, let $m'=m/2$.  The \revs\ can group into $\FL{r/m'}$ cells of
size $m'$ and play a game with meeting size $2$.  In the lower bound of
Theorem~\ref{revm2}, we replace $r$ by the number of cells in this imagined
game, which is $\FL{2r/m}$.  Dropping the outer ceiling function, the resulting
lower bound is $\FR15\FL{\FR72\FL{\FR{2r}m}-3}$.  We use
$\FL{\FR{2r}m}>\FR{2r}m-1$ to obtain the slightly simpler expression claimed.
It improves on the bound above when $m\in\{4,8,10\}$.
\end{proof}

Finally, we consider upper bounds for $\sgmr$ when $G$ is an $r$-large
bipartite graph, proved by giving strategies for the spies.

\begin{defin}\label{stuff}
Henceforth, always $G$ is an $r$-large bipartite graph with partite sets $X_1$
and $X_2$, and we consider the game $\RSG$.  Any statement that includes index
$j$ is considered for both $j=1$ and $j=2$.  The numbers of \revs\ and spies in
part $j$ at the beginning of the current round are denoted by $r_j$ and $s_j$,
respectively, and the number of \revs\ in part $j$ that are on vertices covered
by spies is denoted by $c_j$.  The corresponding counts at the end of the 
round are denoted by $r'_j$, $s'_j$ and $c'_j$.

A spy that moves to $X_j$ during the round is \emph{new}; spies that remained
in $X_j$ and did not move are \emph{old}.  A meeting formed at a vertex in
$X_j$ during the round is \emph{new} if at the end of the previous round there
was no meeting there; a meeting is \emph{old} if it is not new.  The \revs\
\emph{swarm} $X_j$ in a round if at the end of the round all \revs\ are in
$X_j$.
\end{defin}

\begin{defin}\label{greedymig}
\rm
A \emph{greedy migration strategy} is a strategy for the spies having the
following properties.  First, no vertex ever has more than one spy on it.
Next, after the \revs\ move during the current round and the spies compute
the new desired distribution $s_1',s_2'$ of spies on $X_1$ and $X_2$, they
move to reach that distribution as follows.  Since $s_1'+s_2'=s_1+s_2$, by
symmetry there is an index $i\in\{1,2\}$ such that $s_i'\le s_{3-i}$.  The
spies reach their locations for the end of the round via the following steps.

(1) $s_i'$ spies move away from $X_{3-i}$, iteratively leaving vertices that
now have the fewest \revs\ among those in $X_{3-i}$.

(2) {\it All} $s_i$ spies previously on $X_i$ leave $X_i$ and move to uncovered
vertices in $X_{3-i}$, iteratively covering vertices having the most \revs.

(3) The $s_i'$ spies that left $X_{3-i}$ now move to uncovered vertices in
$X_{i}$, iteratively covering vertices having the most \revs.
\end{defin}

\begin{rem}\label{allmove}
At the end of round $t$ under a greedy migration strategy, we designate each
meeting or spy as ``old'' or ``new''.  An {\it old meeting} is a meeting at a
vertex where there was also a meeting at the start of round $t$; all other
meetings at the end of round $t$ are {\it new}.  An {\it old spy} is a spy who
did not move during round $t$; all spies who moved are {\it new spies}.

For $j\in\{1,2\}$ either all spies that end round $t$ in $X_j$ are new
(started round $t$ in $X_{3-j}$), or all spies that started round $t$
in $X_{3-j}$ are new (end round $t$ in $X_j$).  In the specification of the
movements in Definition~\ref{greedymig}, the former occurs when $j=i$, and the
latter occurs when $j=3-i$.  In the first case, round $t$ ends with $s'_j$ new
spies in $X_j$; in the second case, it ends with $s_{3-j}$ new spies in $X_j$.
In particular, at least $\min\{s'_j,s_{3-j}\}$ spies in $X_j$ are new.
\end{rem}

\begin{lem} \label{fact:greedymigration}
A greedy migration strategy in $\RSG$ is a winning strategy for the spies
if it prevents the \revs\ from winning by swarming a part.
\end{lem}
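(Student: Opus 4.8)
The plan is to prove the substantive direction by an inductive, one-round argument. The hypothesis that a greedy migration strategy prevents the \revs\ from winning by swarming means that, at \emph{every} position reached under the strategy, the spies' response to either swarm leaves all meetings covered. I will show that this single-round guarantee for the two swarms forces the same guarantee for an arbitrary move by the \revs; maintaining coverage round after round then yields a winning strategy. The target distribution $(s_1',s_2')$ is chosen to cover all meetings, and the content of the lemma is that surviving the two swarms certifies that this covering choice actually succeeds.

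Fix the start-of-round position, where by induction all meetings are covered, and let the \revs\ move arbitrarily. Following Remark~\ref{allmove}, classify each resulting meeting in $X_j$ as \emph{old} (a meeting occupied that vertex at the start of the round) or \emph{new}; write $a_j$ for the number of persisting old meetings and $b_j$ for the number of new meetings in $X_j$. A persisting old meeting already carried a spy, and since it holds at least $m$ \revs\ it is among the most-populated vertices, so the least-populated-first rule of Step~(1) leaves its spy in place; thus persisting old meetings stay covered for free. A new meeting sits on a vertex with no spy and must be taken by an arriving spy. By Remark~\ref{allmove} the number of spies arriving in $X_i$ is $s_i'$ and the number arriving in $X_{3-i}$ is $s_i$, and Steps~(2)--(3) direct each batch of arrivals to the most-populated uncovered vertices, hence onto the new meetings first.

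The crux is the inequality
\[
B_j \ge b_j + a_{3-j},
\]
where $B_j$ is the number of new meetings an $X_j$-swarm would create from the current position. To see it, note that under an $X_j$-swarm all $r_{3-j}$ \revs\ of $X_{3-j}$ enter $X_j$ and, after filling partial meetings, form $B_j$ fresh meetings; whereas in the actual move the \revs\ crossing into $X_j$ number at most $r_{3-j}-m\,a_{3-j}$, since at least $m\,a_{3-j}$ \revs\ must remain in $X_{3-j}$ to sustain its $a_{3-j}$ persisting meetings (and a \rev\ in $X_j$ cannot move within $X_j$ to join a fresh meeting). Running the greedy meeting-formation over the same pool of $X_j$-partials with the smaller influx therefore yields $b_j\le B_j-a_{3-j}$. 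Combined with the hypothesis $s_{3-j}\ge B_j$ for each $j$, coming from surviving both swarms, this is exactly what lets the two one-sided threats dominate the genuinely two-sided move.

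To assemble these, let $i$ be the part with fewer persisting old meetings, so $a_i\le a_{3-i}$, and set the target $s_i'=a_i+b_i$. On the fresh side $X_i$, all old $X_i$ spies depart in Step~(2), so the $s_i'$ arriving spies must cover \emph{every} $X_i$ meeting; this is feasible because $s_i'=a_i+b_i\le a_{3-i}+b_i\le B_i\le s_{3-i}$, using $a_i\le a_{3-i}$, the inequality above, and survival of the $X_i$-swarm. On the retained side $X_{3-i}$, Step~(1) can orphan at most $s_i'-(s_{3-i}-a_{3-i})\le a_i$ persisting meetings; these orphans together with the $b_{3-i}$ new meetings are the most-populated uncovered vertices, so the $s_i$ arriving spies take them first, and they suffice since $b_{3-i}+a_i\le B_{3-i}\le s_i$. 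Hence every meeting is covered at the end of the round. I expect the retained-side accounting to be the main obstacle: one must confirm that the least-populated-first rule of Step~(1) orphans no more than $a_i$ persisting meetings and that the greedy arrivals absorb precisely these orphans and the new meetings, and one must establish $B_j\ge b_j+a_{3-j}$ rigorously, accounting for the partial meetings a swarm fills and for any surplus spies.
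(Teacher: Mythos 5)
Your proof does not establish the lemma as stated, because it quietly changes the quantification. The lemma asserts that an arbitrary \emph{given} greedy migration strategy---whose end-of-round targets $s'_1,s'_2$ are specified by the strategy itself (in the paper's applications, by explicit formulas in terms of $r'_1,r'_2$ and constants like $\alpha,\beta$)---is winning provided it prevents swarm wins. Your argument instead \emph{constructs} the targets: you pick the side $i$ with $a_i\le a_{3-i}$ and set $s'_i=a_i+b_i$. That proves a different statement, roughly ``from any position that survives both swarms, there exists a covering greedy migration response.'' This does not yield the lemma: the strategies of Theorems~\ref{spym2}, \ref{spym3}, and \ref{spymgen} do not set $s'_i=a_i+b_i$ (for instance, Case 1 of Theorem~\ref{spym2} parks $\alpha$ spies on a side even when $r'_i<\alpha$, precisely to guard against later swarms), so your lemma could not be invoked there. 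The induction also breaks: the hypothesis ``prevents winning by swarming'' is a property of the positions reached \emph{under the given strategy}; once you substitute your own targets, play visits different positions, and nothing guarantees those positions still survive swarms. Indeed your coverage-driven rule may well fail the hypothesis, since covering current meetings can leave the split $s'_1,s'_2$ badly placed against the next swarm---this is exactly why the paper's strategies are formula-driven rather than coverage-driven.

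The salvageable content of your proposal is the comparison inequality $B_j\ge b_j+a_{3-j}$ together with $s_{3-j}\ge B_j$; this parallels the swarm comparison in the paper's Case~1 (a swarm to $X_j$ reproduces every actual new meeting, since every \rev\ who actually entered or stayed in $X_j$ can do the same in the swarm, and the \revs\ sustaining persisting meetings of $X_{3-j}$ supply additional ones). But the paper deploys it with the strategy's own targets: it extracts $s_j\ge\FL{r_j/m}$ from the swarm threat in round $t$ and, by applying the hypothesis to round $t+1$, also $s'_j\ge\FL{r'_j/m}$; it then argues by cases on whether every old meeting in $X_j$ keeps an old spy, using only the greedy mechanics (departures from least-populated vertices, arrivals to most-populated uncovered vertices) to conclude that all meetings end the round covered. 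To repair your write-up you must run your accounting with the given $s'_1,s'_2$---about which you know only the swarm-derived inequalities---rather than with values of your own choosing; as written, the choice $s'_i=a_i+b_i$ is the load-bearing step of your feasibility chain, so the gap is not cosmetic.
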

\begin{proof}
As in Definition~\ref{stuff}, Let $r_j,s_j,r'_j,s'_j$ count the \revs\ and
spies at vertices of $X_j$ at the start and end of round $t$, respectively,
and define old and new meetings and spies as in Remark~\ref{allmove}.  We show
that if a given greedy migration strategy for the spies keeps the \revs\ from
winning by swarming on round $t$ or round $t+1$, then all meetings are covered
at the end of round $t$.  Hence the \revs\ never win.

By swarming to $X_{3-j}$ in round $t$, the \revs\ can produce at least
$\FL{r_j/m}$ new meetings there.  Since these meetings can be covered only by
spies in $X_j$ at the start of round $t$, and the strategy prevents the \revs\
from winning by this swarm, we obtain $s_j\ge\FL{r_j/m}$ (and similarly
$s_{3-j}\ge\FL{r_{3-j}/ m}$).  Applying the same argument in round $t+1$ yields
$s'_j\ge\FL{r'_j/m}$.

If all $s'_j$ spies in $X_j$ at the end of round $t$ are new, then they cover
all the meetings in $X_j$, since $s'_j\ge\FL{r'_j/m}$ and greedy migration
maximizes the coverage.  Hence we may assume that some of these $s'_j$ spies
are old.  Now Remark~\ref{allmove} implies that all $s_{3-j}$ spies in
$X_{3-j}$ at the start of round $t$ moved to $X_j$ during round $t$.  We
consider two cases:

{\bf Case 1:} \emph{In round $t$ every old meeting in $X_j$ is covered by some
old spy.}
In this case it remains to show that at the end of round $t$, the $s_{3-j}$
new spies in $X_j$ cover all the new meetings there.  We claim that otherwise
the \revs\ could have won in round $t$ by swarming to $X_j$.  A \rev\ who
stayed in $X_j$ or moved from $X_{3-j}$ to $X_j$ in the actual round $t$ also
would do so in a swarm to $X_j$.  A \rev\ who moved from $X_j$ to $X_{3-j}$
would instead remain in $X_j$ in the swarm, and a \rev\ who stayed in
$X_{3-j}$ in the actual round would move to a $X_j$ in in the swarm.
Thus the swarm produces at least as many new meetings, and the same number of
old meetings, as the \revs' actual moves in round $t$.  The spies therefore
cannot cover all of the new meetings formed by this swarm if their greeting
migration does not cover all of the new meetings actually formed in $X_j$ in
round $t$.

{\bf Case 2:} \emph{At the end of round $t$ some old meeting in $X_j$ is not
covered by an old spy}.  Since greedy migration picks departing spies to
minimize the number of revolutionaries uncovered, all old spies who remain in
$X_j$ are covering meetings.  The new spies who move to $X_j$ maximize
coverage, so if there is an uncovered meeting in $X_j$ at the end of
round $t$, then every spy in $X_j$ is covering a meeting.  Since
$s'_j\ge\FL{r'_j/m}$, all the meetings are covered.
\end{proof}

\begin{thm} \label{spym2}
If $G$ is an $r$-large complete bipartite graph, then $\sgtr\le\bipmtwo$.
\end{thm}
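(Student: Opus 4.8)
The plan is to exhibit a greedy migration strategy (Definition~\ref{greedymig}) for $\bipmtwo$ spies and then invoke Lemma~\ref{fact:greedymigration}, which reduces the problem to showing only that the \revs\ can never win by swarming a part. I use the notation of Definition~\ref{stuff} with $m=2$, writing $c_j$ for the number of \revs\ in $X_j$ lying on covered vertices at the end of a round. The whole argument rests on maintaining, at the end of every round, the pair of \emph{swarm-safety inequalities}
$$
s_1\ge\FL{\FR{r-c_2}2}\qquad\text{and}\qquad s_2\ge\FL{\FR{r-c_1}2}.
$$
These are exactly the conditions that defeat a one-round swarm: if the \revs\ swarm $X_j$, then $X_{3-j}$ is emptied, so all $s_{3-j}$ of its spies are free to enter $X_j$, while the new meetings in $X_j$ are formed from the $r_{3-j}$ incoming \revs\ together with the $r_j-c_j$ uncovered \revs\ already present, for at most $\FL{(r-c_j)/2}$ new meetings. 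Thus if both inequalities hold at the start of each round, the hypothesis of Lemma~\ref{fact:greedymigration} is met and the spies win.

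First I would check that $\bipmtwo$ spies can establish the invariant after the initial placement: they cover the at most $\FL{r/2}$ initial meetings and distribute the rest to satisfy both inequalities. For the inductive step, assume the invariant holds at the start of a round; the \revs\ move to a new distribution $r_1',r_2'$ with $r_1'+r_2'=r$, and the spies pick a target $s_1',s_2'$ with $s_1'+s_2'=\bipmtwo$ and execute greedy migration. As a first reduction, the split is feasible as soon as $s_1'+s_2'\ge\FL{(r-c_2')/2}+\FL{(r-c_1')/2}$, and using $\FL{x}\le x$ this is implied by the total-coverage bound $c_1'+c_2'\ge 2(r-\bipmtwo)$, which is about $\FR{3r}5$. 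Since at most $\bipmtwo\approx\FR{7r}{10}$ \revs\ can be covered at all, the strategy may ``waste'' roughly $\FR r{10}$ spies as lonely spies and still be safe; the genuine difficulty, addressed next, is that the delivered coverage $c_j'$ itself depends on the chosen split through greedy migration.

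The heart of the maintenance is this coverage analysis. By the definition of greedy migration and Remark~\ref{allmove}, one part (say $X_i$, the one receiving the smaller target) ends the round populated entirely by \emph{new} spies, which can be sent to the most heavily occupied vertices; there the coverage is maximal, essentially $\min\{s_i',r_i'\}$. The other part $X_{3-i}$ retains \emph{old} spies that cannot move within their part, so \revs\ who vacate previously covered vertices of $X_{3-i}$ turn those spies into lonely spies. The key estimate I would prove is that the number of newly lonely old spies is controlled by the number of \revs\ who departed $X_{3-i}$ during the round, which is in turn bounded using the swarm-safety inequality from the \emph{previous} round. This is precisely the mechanism that the lower bound (Theorem~\ref{revm2}) exploits by moving \revs\ off spies to manufacture lonely spies, and it is why the true answer is $\FR{7r}{10}$ rather than the $\FR{2r}3$ suggested by a naive static reading of the invariant.

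The main obstacle is therefore this dynamic coverage bookkeeping on the stuck part: I must choose $(s_1',s_2')$ and show that the coverage greedy migration actually delivers still satisfies both swarm-safety inequalities, so that the required total never exceeds $\bipmtwo$. I expect this to need a case analysis on the side toward which the \revs\ move and on the sign of $r_j'-s_j'$, matching the lonely-spy loss against the $\FR r{10}$ of slack. The routine-but-delicate endpoint is the floor arithmetic that converts the continuous requirement $s\ge\FR{7r}{10}-\FR35$ into the exact value $\bipmtwo=\CL{\FR{\FL{7r/2}-3}5}$; keeping the inner $\FL{7r/2}$ aligned with the per-round floors is where the offset $-3$ and the outer ceiling must be tracked with care.
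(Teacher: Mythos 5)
Your overall architecture (greedy migration plus Lemma~\ref{fact:greedymigration}, reducing everything to swarm-safety) matches the paper, but your core invariant is wrong, and the error is fatal rather than cosmetic. You require $s_1\ge\FL{(r-c_2)/2}$ \emph{and} $s_2\ge\FL{(r-c_1)/2}$ at the end of every round. This overstates what a swarm can achieve: since $G$ is bipartite, an uncovered \rev\ already sitting in the swarmed part $X_j$ cannot merge with another such \rev\ in one round (they occupy distinct vertices and every edge leaves $X_j$), so every \emph{new} meeting created by a swarm to $X_j$ must contain at least one incoming \rev\ from $X_{3-j}$. The correct requirement is therefore $s_{3-j}\ge\min\bigl\{r_{3-j},\FL{(r-c_j)/2}\bigr\}$, which is exactly the paper's invariant $(A)$; dropping the $\min$ makes your invariant strictly stronger than necessary --- and, crucially, \emph{unachievable} with $\bipmtwo$ spies. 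Concretely, take the position the lower-bound proof opens with: all $r$ \revs\ at distinct vertices of $X_1$, so $r_1=r$, $r_2=0$, $c_2=0$, and $c_1\le s_1$ (one \rev\ per covered vertex). Your invariant then forces $s_1\ge\FL{r/2}$ and $s_2\ge\FL{(r-c_1)/2}\ge\FL{(r-s_1)/2}$, whence $s\ge (r+s_1)/2 - O(1)\ge \tfrac34 r-O(1)$, which exceeds the $\approx\tfrac{7}{10}r$ spies available. So your invariant cannot even be established at the initial placement, and the induction never starts. With the $\min$ in place, this same position only demands $s_1\ge\FL{r/2}$ and $s_2\ge 0$, which is why $\bipmtwo$ suffices.

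Beyond this, the maintenance step in your proposal is a plan rather than an argument, and the paper's proof shows that real content is needed there: it maintains a second invariant $(B)$ ($s_j\ge\alpha$ where $\alpha=s-\FL{r/2}$), chooses the split $(s_1',s_2')$ by a three-case rule built from the thresholds $\alpha$ and $\beta=\FL{(r-\alpha)/2}$, and uses Remark~\ref{allmove} to convert ``at least $\min\{s_j',s_{3-j}\}$ new spies enter $X_j$'' into lower bounds on the delivered coverage $c_j'$; the arithmetic hinges on the inequalities $\alpha\le\beta$, $\alpha+\beta\le s$, and $\FL{(r+\beta)/2}\le s$, verified per congruence class modulo $10$. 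Your instinct that lonely spies and dynamic coverage are the crux is right, but until the invariant itself is weakened to the $\min$ form, no amount of bookkeeping can close the argument.
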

\begin{proof}
Let $s=\bipmtwo$; we give a winning strategy for the spies in $\RS(G,2,r,s)$.
Let $\alpha=s-\FL{r/2}$ and $\beta=\FL{(r-\alpha)/2}$.  Later we will use the
following inequalities: $\alpha\le\beta$, $\alpha+\beta\le s$, and
$\FL{(r+\beta)/2}\le s$.  These inequalities can be checked explicitly for
each congruence class modulo 10.  The first two are loose, since
$\alpha\approx 2r/10$, $\beta\approx 4r/10$, and $s\approx 7r/10$, but the
third is delicate, with equality holding except in two congruence classes
and the floor function needed for correctness in four congruence classes.

During the game, if the \revs\ swarm $X_{3-j}$ in the current round, then they
generate at most $\min\{r_j, \FL{r-c_{3-j}\over 2}\}$ new meetings.  The spy
strategy will ensure
$$
s_j\ge\min\left\{r_j,\FL{r-c_{3-j}\over2}\right\}\qquad
\textrm{for}~j\in\{1,2\},\eqno{(A)}
$$
and hence it will keep the \revs\ from winning by a swarm.  The spies move by
greedy migration after computing the new values $s_1'$ and $s_2'$ in response
to $r_1'$ and $r_2'$.  By Lemma~\ref{fact:greedymigration}, the spies win
by a greedy migration strategy that keeps the \revs\ from winning by swarm.

The spies determine $s'_1$ and $s'_2$ via three cases, using the first that
applies.  Always $s_1'+s_2'=s$.

\begin{description}
\item [Case 1:] If $r'_i\le\alpha$ for some $i\in\{1,2\}$, then $s_i'=\alpha$.

\item [Case 2:] If $s_i\ge \min\{r'_{3-i}, \beta\}$ for some $i\in\{1,2\}$,
then $s_{3-i}'=\min\{r'_{3-i}, \beta\}$.

\item [Case 3:] Otherwise, $s_i'=s_{3-i}$ and $s_{3-i}'=s_i$.
\end{description}

It remains to prove $(A)$.  In order to do so, we first prove
$$
s_j\ge\alpha\qquad\textrm{for}~j\in\{1,2\}.\eqno{(B)}.
$$
Trivially the spies can satisfy both $(A)$ and $(B)$ in round $0$.  Assuming
that these invariants hold before the current round begins, we will show that
they also hold when it ends.

\medskip
\emph{Invariant (B) is preserved.} 
In Case 1, $s'_i=\alpha$ and $s'_{3-i}=\floor{r/2}>\alpha$.
In Case 3, $s'_j=s_{3-j}\ge \alpha$.  In Case 2, $r'_{3-i}> \alpha$, so
$s'_{3-i}=\min\{r'_{3-i},\beta\}\ge\alpha$, and
$s'_i=s-s'_{3-i}=s-\min\{r'_{3-i},\beta\}\ge s-\beta\ge\alpha$.

\medskip
\emph{Invariant (A) is preserved.} 
In Case 1,
$s'_i=\alpha\ge r'_i\ge \min\{r'_i,\floor{r-c'_{3-i}\over2}\}$ and
$s'_{3-i}=\FL{r/2}\ge\FL{r-c'_{3-i}\over2}\ge\min\{r'_i,\FL{r-c'_{3-i}\over2}\}$.

In Case 2 with $s_i\ge\min\{r'_{3-i},\beta\}$, first consider $j=3-i$.  We have
$s'_{3-i}=\min\{r'_{3-i},\beta\}$.  If $s'_{3-i}=r'_{3-i}$, then $s'_{3-i}$ is
already big enough, so suppose $s'_{3-i}=\beta$.  By Remark~\ref{allmove},
at least $\min\{s'_i,s_{3-i}\}$ spies in $X_i$ are new.  By $(B)$, this
quantity is at least $\alpha$, and Case 2 requires $r'_i>\alpha$.  Hence the
new spies cover at least $\alpha$ \revs, and $c'_i\ge\alpha$ yields
$s'_{3-i}=\beta=\FL{\FR{r-\alpha}2}\ge\min\{r'_{3-i},\FL{\FR{r-c'_i}2}\}$.

Now consider $j=i$.  By Remark~\ref{allmove}, at least $\min\{s_i,s'_{3-i}\}$
spies in $X_i$ are new, and in Case 2 each of $s_i$ and $s'_{3-i}$ is at least
$\min\{r'_{3-i},\beta\}$.  Since spies cover greedily,
$c'_{3-i}\ge \min\{r'_{3-i},\beta\}=s'_{3-i}$.  Also $s'_{3-i}\le\beta$, so 
\begin{equation}\label{swarmguard}
s'_i=s-s'_{3-i}\ge \FL{\FR{r+\beta}2}-s'_{3-i}\ge \FL{\FR{r-s'_{3-i}}2}
\ge \FL{\FR{r-c'_{3-i}}2}\ge \min\left\{r'_i,\FL{\FR{r-c'_{3-i}}2}\right\}.
\end{equation}

Finally, $s'_j=s_{3-j}<\min\{r'_j,\beta\}$ in Case 3, since Case 2 does not
apply.  Since all spies move and $s'_j\le r'_j$, we have $c'_j\ge s'_j$.  Hence
for each $j$ the computation in (\ref{swarmguard}) is valid.
\end{proof}

The method for the upper bound when $m=3$ is essentially the same.

\begin{thm} \label{spym3}
If $G$ is an $r$-large complete bipartite graph, then $\sghr\le \FL{r/2}$.
\end{thm}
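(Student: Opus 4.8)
The plan is to mirror, for meeting size $3$, the spy strategy used for $m=2$ in Theorem~\ref{spym2}. Set $s=\FL{r/2}$, which by Theorem~\ref{bipartm3} is exactly the lower bound, so the strategy must be essentially slackless. As in the $m=2$ case the spies will move by a greedy migration strategy (Definition~\ref{greedymig}); then Lemma~\ref{fact:greedymigration} reduces the whole problem to maintaining, every round, an invariant that prevents the \revs\ from winning by swarming a part.

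First I would determine the maximum number of new meetings a swarm to $X_{3-j}$ can create. Since two \revs\ in the same part are non-adjacent, no two \revs\ can combine within a part in one round, so every new meeting in $X_{3-j}$ must absorb \revs\ arriving from $X_j$: completing a lone uncovered \rev\ to a meeting of size $3$ costs two incoming \revs, while a fresh meeting costs three. Counting exactly as in the $m=2$ analysis (with the incoming \revs\ used greedily to complete partial meetings first) gives the maximum as $\min\{\FL{r_j/2},\FL{(r-c_{3-j})/3}\}$, where $\FL{r_j/2}$ is the bottleneck when incoming \revs\ are scarce and $\FL{(r-c_{3-j})/3}$ when they are plentiful. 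Thus the swarm-prevention invariant the spies must maintain is
$$
s_j\ge\min\left\{\FL{r_j/2},\ \FL{\FR{r-c_{3-j}}3}\right\}
\qquad\textrm{for }j\in\{1,2\},\eqno{(A)}
$$
with notation as in Definition~\ref{stuff}.

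Next I would set up the response rule exactly as in Theorem~\ref{spym2}: choose an $m=3$ analogue of the cap $\beta$ (and, if needed, a secondary floor invariant in the spirit of invariant~(B)), and compute the new split $s_1',s_2'$ with $s_1'+s_2'=s$ by the first applicable rule among capping the lightly-occupied side, capping one side at $\min\{r'_{3-i},\beta\}$, and swapping. To check that $(A)$ is restored after the move, I would invoke Remark~\ref{allmove}: at least $\min\{s'_j,s_{3-j}\}$ of the spies ending in $X_j$ are new, and because greedy migration makes new spies cover the most-populated vertices first, this yields a lower bound on the coverage $c'_{3-j}$ in the swarm target, which feeds back into the term $\FL{(r-c'_{3-j})/3}$. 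The guiding balance is that in the worst case for coverage—all uncovered \revs\ on distinct vertices, so each spy covers only one \rev—invariant $(A)$ forces each side to hold about $r/4$ spies, whose total is exactly $\FL{r/2}$; feasibility, coverage of old meetings, and the old/new bookkeeping then transfer verbatim from Theorem~\ref{spym2}.

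The main obstacle is precisely this tightness. Because $s=\FL{r/2}$ meets the lower bound of Theorem~\ref{bipartm3}, there is no room to spare, so the cap $\beta$ and the treatment of the floor functions must be calibrated exactly. In the $m=2$ proof the delicate step was the inequality $\FL{(r+\beta)/2}\le s$, which held with equality outside a few residue classes; for $m=3$ the corresponding inequality has a factor $2$ (two incoming \revs\ per completed meeting) rather than $1$, and I expect it likewise to be tight, so I would verify it by checking the residues of $r$ modulo a small modulus. Establishing that single sharp inequality, and confirming that the greedy coverage bound from Remark~\ref{allmove} always supplies the required $c'_{3-j}$, is where the real work lies.
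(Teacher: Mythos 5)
Your framework (greedy migration plus Lemma~\ref{fact:greedymigration}) matches the paper's, but the invariant you plan to maintain is incorrect, and the error is fatal rather than cosmetic. You bound the number of new meetings created by a swarm to $X_{3-j}$ by $\min\{\FL{r_j/2},\FL{(r-c_{3-j})/3}\}$, reasoning that an uncovered partial meeting is a lone \rev\ needing two incoming \revs. But with $m=3$ an uncovered vertex may hold \emph{two} \revs---a pair is not yet a meeting, and no rule forces a spy onto it---and completing such a pair costs only \emph{one} incoming \rev. This is precisely why the paper tracks $u_j$, the maximum number of \revs\ on an uncovered vertex of $X_j$, and uses the threat function $f_j=\min\{\FL{(r-c_{3-j})/3},\bFL{\FR{r_j}{3-u_{3-j}}}\}$; when $u_{3-j}=2$ the first term of your min must be $r_j$, not $\FL{r_j/2}$. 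Concretely (say $6\mid r$), suppose the \revs\ hold $r/3$ uncovered pairs in $X_2$ and $r/3$ \revs\ on distinct vertices of $X_1$: your invariant $(A)$ is satisfiable with $s_1=r/6$ and $s_2=r/2-r/6=r/3$, yet a swarm to $X_2$ completes all $r/3$ pairs into meetings, and only the $r/6$ spies of $X_1$ can reach them, so the \revs\ win. Thus maintaining your $(A)$ does not prevent swarm wins, and the appeal to Lemma~\ref{fact:greedymigration} collapses.

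Repairing this is where the real content of the paper's proof lies, and it goes beyond recalibrating $\beta$. The paper (i) carries $u_j$ and $u'_j$ in the invariant; (ii) uses greediness of coverage to show that leaving an uncovered vertex with $u'_j$ \revs\ forces each of the at least $\alpha$ new spies entering $X_j$ to cover at least $u'_j$ \revs, giving $c'_j\ge u'_j\alpha$, which shrinks the term $\FL{(r-c'_j)/3}$ exactly in the situations where the term $\bFL{\FR{r'_{3-j}}{3-u'_j}}$ blows up; and (iii) runs a four-threshold response rule (thresholds $\alpha$, $\beta$, $2\beta$, $\FL{r/2}$) whose verification splits on $u'_j\in\{1,2\}$. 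Finally, your expectation that the tight inequality will merely ``hold with equality in a few residue classes'' is too optimistic: the inequality the proof needs, $\beta\ge\bFL{\FR{r-2\alpha}3}$, actually \emph{fails} when $r\equiv 3\pmod{18}$, and in that class the strategy itself must be altered (playing $s'_i=\beta+1$ in the relevant case, with a separate coverage argument), not merely re-verified.
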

\begin{proof}
We present a greedy migration strategy for $\floor{r/2}$ spies that keeps the
\revs\ from winning by swarming; by \lemma{fact:greedymigration} it is a
winning strategy for the spies.

Define $r_j,s_j,c_j$ at the start of a round and $r'_j,s'_j,c'_j$ at the end of
the round in the same way as before.  Also, we need to know the maximum number
of \revs\ together on an uncovered vertex in $X_j$ at the beginning and end of
the round; let these values be $u_j$ and $u'_j$.  If the \revs\ have not
already won, then $u_j,u'_j\le 2$.  Let $s=\FL{r/2}$,
$\alpha=\FL{r/2}-\FL{r/3}$, and $\beta=s-\FL{(r-\alpha)/3}$.  We will want the
inequalities $\beta\ge\bFL{\FR{r-2\alpha}3}$ and
$\beta\le\bigceil{\FR{\FL{r/2}}2}$.  The latter is always satisfied (the left
side is about $2r/9$ and the right side is about $r/4$), but both sides of the
first inequality are about $2r/9$.  Checking each congruence class modulo $18$
shows that $\beta\ge\bFL{\FR{r-2\alpha}3}$ except when $r\equiv 3\mod 18$.

The values $s'_1$ and $s'_2$ that determine the movements of spies in this
round under the greedy migration strategy are computed as follows, with
$s'_{3-i}=s-s'_i$ always.  Note that since $r'_1+r'_2=r$, when one of the cases
below holds, it holds for exactly one index $i$ unless $r'_1=r'_2=r/2$.  In
this case of equality, it does not matter which index we call $i$.
\begin{description}
\item[Case 1:] If $r'_i\le\alpha$ for some $i\in\{1,2\}$, then
$s'_i=\alpha$.

\item[Case 2:] If $\alpha<r'_i\le\beta$ for some $i\in\{1,2\}$, then
$s'_i=r'_i$.

\item[Case 3:] If $\beta<r'_i\le2\beta$ for some $i\in\{1,2\}$, then
$s'_i=\beta$, except that $s'_i=\beta+1$ when $s_i=\alpha$ and
$r\equiv 3\mod 18$.

\item[Case 4:] If $2\beta<r'_i\le \floor{r/2}$ for some $i\in\{1,2\}$, then
$s'_i=\floor{r'_i/2}$.
\end{description}

Let $f_j=\min\{\FL{\FR{r-c_{3-j}}3},\bFL{\FR{r_j}{3-u_{3-j}}}\}$.
During the game, if the \revs\ swarm $X_{3-j}$ in the current round, then they
generate at most $f_j$ new meetings.  Hence it suffices to show that the
strategy specified above always ensures
$$
s_j\ge f_j\qquad \textrm{for}~j\in\{1,2\}.\eqno{(A)}
$$

As in Theorem~\ref{spym2}, in order to prove $(A)$ we will also need
$$
s_j\ge \alpha \qquad\textrm{for}~j\in\{1,2\}.\eqno{(B)}
$$
Place the spies to satisfy $(A)$ and $(B)$ in round $0$.  In each Case of
play, $\alpha\le s_i'\le \FL{r/4}\le s-\alpha$, so $(B)$ is preserved.
Now $s_1,s_2,s_1',s_2'\ge\alpha$, and we study $(A)$.

With $f'_j$ being the value of $f_j$ at the end of the round, we need
$s'_j\ge f'_j$.  By Remark~\ref{allmove}, each part receives at least $\alpha$
new spies in each round.  In Cases 2, 3, and 4 each part contains at least
$\alpha$ \revs, so $c'_j\ge \alpha$ in those cases.  Also
$s'_j\ge \floor{r'_j/3}$ in each Case.  Since
$s'_j\ge \floor{r'_j/3}=\floor{r'_j/(3-u'_{3-j})}$ when $u'_{3-j}=0$, 
we may assume $u'_j\in\{1,2\}$.

In addition, since the greedy strategy places new spies in $X_j$ to maximize
coverage, leaving an uncovered vertex with $u'_j$ \revs\ implies that each of
the (at least) $\alpha$ new spies covers at least $u'_j$ \revs\ at its vertex.
Hence $c'_j\ge u'_j \alpha$.

\medskip
\emph{Invariant (A) is preserved:} 

In Case 1, $s'_i=\alpha\ge r'_i\ge f'_i$ 
and $s'_{3-i}=s-\alpha\ge \floor{r/3}\ge f'_{3-i}$.

In Case 2, $s'_i=r'_i\ge f'_i$.
Also, $c'_i\ge \alpha$ and
$s'_{3-i}=s-r'_i\ge s-\beta=\FL{\FR{r-\alpha}3}\ge
\bFL{\FR{r-c'_i}3}\ge f'_{3-i}$.

In Case 3, then $c'_i\ge \alpha$.  In the nonexceptional case,
$s'_{3-i}=s-\beta=\FL{\FR{r-\alpha}3}\ge \bFL{\FR{r-c'_i}3}\ge f'_{3-i}$.
If $s_i=\alpha$ and $r\equiv 3\mod 18$, then $s'_{3-i}=s-\beta-1$ and we must
be a bit more careful.  Since all $\alpha$ spies that were in $X_i$ move to
$X_{3-i}$, and $r_i'\ge\beta+1$, we have $c_i'\ge\beta+1$, and hence
$\FL{\FR{r-\alpha}3}-1\ge\bFL{\FR{r-c'_i}3}$.

In Case 3 or Case 4,
if $u'_{3-i}=1$, then $s'_i\ge\FL{r'_i/2}=\bFL{\FR{r'_i}{3-u'_{3-i}}}\ge f'_i$.
If $u'_{3-i}=2$, then $c'_{3-i}\ge 2\alpha$.  Hence
$s'_i\ge\beta\ge\bFL{\FR{r-2\alpha}3}\ge\bFL{\FR{r-c'_{3-i}}3}\ge f'_i$, with
the exception that $\beta=\bFL{\FR{r-2\alpha}3}-1$ when $r\equiv 3\mod 18$.
In this case either $s'_i>\beta$, which suffices, or $s_i>\alpha$.  If
$s_i>\alpha$, then $X_{3-j}$ has more than $\alpha$ new spies, so
$c'_{3-i}\ge 2\alpha+2$, which fixes the problem for $r\equiv 3\mod 18$.

In Case 4, if $u'_i=1$, then $s'_{3-i}=s-\bFL{\FR{r'_i}2}\ge
\bFL{\FR{r'_{3-i}}2}=\bFL{\FR{r'_{3-i}}{3-u'_i}}\ge f'_{3-i}$.
If $u'_i=2$, then $c'_i\ge 2\alpha$.  Now
$s'_{3-i}=s-\bFL{\FR{r'_i}2}\ge \FL{\FR r2}-\bFL{\FR{\FL{r/2}}2}
=\bigceil{\FR{\FL{r/2}}2}\ge\bFL{\FR{r-2\alpha}3}\ge\bFL{\FR{r-c'_i}3}\ge
f'_{3-i}$.
\end{proof}

\begin{thm}\label{spymgen}
If $G$ is an $r$-large complete bipartite graph, 
then $\sigma(G,m,r)\le (1+{1\over \sqrt{3}}){r\over m}+1$.
\end{thm}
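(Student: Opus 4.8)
The plan is to give an explicit greedy migration strategy for $s=\FL{(1+1/\sqrt3)\FR rm}+1$ spies and invoke \lemma{fact:greedymigration}, so that it suffices to choose, after each \rev\ move, a new split $s_1',s_2'$ with $s_1'+s_2'=s$ that prevents the \revs\ from winning by swarming either part on the following round. As in Theorems~\ref{spym2} and~\ref{spym3}, I would define a stable position through invariants of the form $(A)$ ``$s_j$ is at least the number of new meetings a swarm to $X_{3-j}$ can create'' together with an auxiliary reserve invariant $(B)$ ``$s_j\ge\alpha$'' guaranteeing a minimum coverage on each side; the whole proof is then the verification that $(A)$ and $(B)$ are restorable every round.

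The key quantitative step is bounding the swarm threat. When the \revs\ swarm $X_{3-j}$, each new meeting occupies a spy-free vertex of $X_{3-j}$ filled by the residents already there plus \revs\ arriving from $X_j$; since residents cannot move within a part, a vertex carrying $u$ uncovered residents still needs $m-u$ newcomers. Hence the number of new meetings is at most $\min\{\FL{(r-c_{3-j})/m},\FL{\free r_j/(m-u_{3-j})}\}$, where $c_{3-j}$ counts \revs\ on spy-occupied vertices of $X_{3-j}$ and $u_{3-j}$ is the largest number of \revs\ on an uncovered vertex there. Greedy migration lets the spies push $c_{3-j}$ up by covering the fullest vertices, so they control the first term while the incoming requirement controls the second.

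The constant should emerge from optimizing against the \revs' choice of the partial size $u$ and of how to split their force over two rounds. The two bounds above move oppositely in $u$ (covering dense partials raises $c_{3-j}$ but lowers the newcomers needed per meeting), and balancing them, together with the spies' need to hedge their split because the \revs\ can switch which part is heavy from one round to the next, produces a quadratic in $s$. Writing $\sigma=sm/r$, I expect this to take the form $3\sigma^2-6\sigma+2\ge 0$, equivalently $3(s-r/m)^2\ge (r/m)^2$; its relevant root is exactly $\sigma=1+1/\sqrt3$, which is why the bound reads $(1+1/\sqrt3)\FR rm$ and why the hypothesis $\FR rm\ge 1/(1-1/\sqrt3)$ is needed for the floors and the additive $+1$ to absorb rounding. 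Concretely I would package this as a short list of cases (mirroring Cases~1--4 of Theorem~\ref{spym3}) computing $s_i'$ from thresholds $\alpha$ and $\beta$ (with $\beta$ of order $(\FR rm)/\sqrt3$), then check $(A)$ and $(B)$ in each case, using Remark~\ref{allmove} to track how many spies in each part are new.

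The main obstacle is precisely this maintenance verification: showing that for every \rev\ move out of a stable position there is a feasible $(s_1',s_2')$ restoring both invariants. The difficulty is that the coverage the spies can guarantee is degraded by the \revs\ ``fleeing'' covered vertices during the round, so the bound on $c_{3-j}$ one would like (each new spy covering a full $(m-1)$- or $(m/3)$-partial) is not available; accounting honestly for this loss is what separates the proven coefficient $1+1/\sqrt3$ from the conjectured truth $3/2$, and it is where the delicate, floor-sensitive inequalities (as in the congruence-class checks for $m\in\{2,3\}$) must be handled with care.
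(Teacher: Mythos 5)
Your setup is the right one---greedy migration plus Lemma~\ref{fact:greedymigration}, the two competing swarm bounds (one limited by the \revs\ already covered, one by the $m-u$ newcomers each new meeting needs), and even the quadratic $3\sigma^2-6\sigma+2\ge0$ whose relevant root produces the constant $1+1/\sqrt3$---but the proposal stops exactly where the proof has to start. You never define the rule by which the spies compute $(s_1',s_2')$, and you explicitly defer the invariant-maintenance verification as ``the main obstacle.'' That verification is not a routine adaptation of Theorems~\ref{spym2} and~\ref{spym3}: the fixed-threshold route you sketch (constants $\alpha,\beta$ of order $r/(m\sqrt3)$ maintained as invariants, followed by floor-sensitive congruence-class checks) is tailored to $m\in\{2,3\}$, and you give no indication of how to carry it out for general $m$; indeed you identify this as the hard part and leave it open. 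As written, the proposal is a plan with its central step missing, not a proof.

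The idea that closes this gap in the paper is different from what you sketch: rather than fixed thresholds carried across rounds, the spies solve, \emph{in each round}, for real-valued parameters $x,\alpha,u_1,u_2$ (depending on the current distribution $r_1'$) satisfying
\[
\alpha \;=\; x+\FR rm-\FR{r-u_1x}{m} \;=\; x+\FR rm-\FR{r_2'}{m-u_1}
\;=\;\FR{r_1'}{m-u_2}\;=\;\FR{r-u_2x}{m},
\]
so that the two bounds on what a swarm to $X_2$ can create both equal $\alpha$, and the two bounds for a swarm to $X_1$ both equal $x+r/m-\alpha$. The spy rule then reduces to three clean cases ($s_1'=\CL{x}$, $\FL{r/m}$, or $\CL{\alpha}$ according to where $\alpha$ falls), the coverage facts needed come from greedy migration placing at least $\CL{x}$ new spies in each part every round (Remark~\ref{allmove}), and no congruence-class analysis is needed at all. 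The constant then emerges not from a balancing heuristic but from the explicit solution $x=\sqrt{9r^2+12r_1'r-12{r_1'}^2}\,/(6m)\le r/(\sqrt3\,m)$ (the maximum occurring at $r_1'=r/2$), which is precisely what makes the feasibility condition $x+r/m+1\le s$ hold when $s\ge(1+1/\sqrt3)r/m+1$ and $r/m\ge 1/(1-1/\sqrt3)$ (the remaining small-ratio case is handled separately by four spies). Without this per-round balancing device, or an equivalent substitute, your deferred ``maintenance verification'' is exactly the unproven core of the theorem.
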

\begin{proof}
For $s\ge (1+{1\over \sqrt{3}}){r\over m}+1$, we present a greedy migration
strategy for $s$ spies that keeps the \revs\ from winning by swarming.  Suppose
first that $\FR rm<\FR1{1-1/\sqrt3}<2.5$.  In this case, the \revs\ can never
make more than two meetings.  We want to show that at most $4.75$ spies suffice.
In fact, four spies always suffice, because they can always arrange to keep two
spies on each side to handle up to two new meetings on the other side.  The
greedy migration strategy that always sets $s_1=s_2=2$ accomplishes this.
Henceforth, we may assume $\FR rm\ge\FR1{1-1/\sqrt3}$.

As usual, $r_j$ and $s_j$ count the \revs\ and spies in $X_j$ to begin a round,
$r'_j$ counts the \revs\ after they move, and $s'_j$ is the number of spies to
be computed for $X_j$ to end the round.  To determine $s'_1$ and $s'_2$, the
spies compute $x$, $\alpha$, $u_1$, and $u_2$ (not necessarily integers) such
that 
\begin{align}
x&\le\FL{r/m},\qquad x+{r/m}+1\le s,\qquad\textrm{and}\label{eq:definexalpha}\\
\alpha &= x+r/m-{r-u_1x\over m} = x+r/m-{r'_2\over m-u_1} = {r'_1\over m-u_2} = {r-u_2 x\over m}.
\label{eq:defineus}
\end{align}
We will show that such numbers always exist.
Now $s'_1$ and $s'_2$ are computed as follows:
\begin{description}
\item [Case 1:] If $\alpha\le x$, then $s'_1=\ceil{x}$ and $s'_2=s-s'_1$.
\item [Case 2:] If $\alpha>\floor{r/m}$, then $s'_1=\floor{r/m}$ and $s'_2=s-s'_1$.
\item [Case 3:] If $x < \alpha \le\floor{r/m}$, then $s'_1=\ceil{\alpha}$ and $s'_2=s-s'_1$.
\end{description}
Since always $s'_j\ge x$, greedy migration moves at least $\ceil{x}$ new spies
to each part in each round, by Remark~\ref{allmove}.  Consider a swarm.  If all
uncovered vertices in $X_j$ have at most $u_j$ \revs, then swarming $X_j$
generates at most $r'_{3-j}/(m-u_j)$ new meetings.  If some uncovered vertex in
$X_j$ has more than $u_j$ \revs, then by greedy migration at least $x$ spies in
$X_j$ have covered more than $u_j$ \revs\ each, and swarming $X_j$ forms at
most $(r-u_jx)/m$ new meetings.  Hence swarming $X_j$ fails to win if
\begin{equation}\label{lastswarm}
s'_{3-j}\ge \max\left\{\FR{r'_{3-j}}{m-u_j},\FR{r-u_jx}m\right\}.
\end{equation}

For $j=2$, both quantities on the right in (\ref{lastswarm}) equal $\alpha$, so
the condition is equivalent to $s'_1\ge \alpha$, which holds in Cases 1 and 3.
In Case 2, $s_1'=\FL{r/m}$, which always protects against swarming $X_2$ since
at most $\FL{r/m}$ meetings can be made.

For $j=1$, both quantities on the right in (\ref{lastswarm}) equal
$x+r/m-\alpha$, so the condition is equivalent to $s'_2\ge x+r/m-\alpha$.
Since $s-1\ge x+r/m$, proving $s'_2\ge s-1-\alpha$ shows that swarming $X_1$ is
ineffective.  In Case 1, $s'_2> r/m$, which suffices.  In Case 2 or 3,
$s'_1\le \CL{\alpha}$, so $s'_2\ge s-\CL{\alpha}>s-1-\alpha$, as desired.

\medskip
It remains to show that such numbers exist.  Solving \eqref{eq:defineus} yields
\begin{align*}
x   &= {\sqrt{9r^2+12r'_1r-12{r'_1}^2} \over 6m} \\
u_1 &= {r+mx-\sqrt{r^2+2rxm+x^2m^2-4xr'_1m} \over 2x}\text{ and }  \\
u_2 &= {r+mx-\sqrt{r^2-2rxm+x^2m^2+4xr'_1m} \over 2x}.
\end{align*}
Since $x\le r/(\sqrt{3}m)$, the inequalities in \eqref{eq:definexalpha} hold
when $\FR rm\ge\FR1{1-1/\sqrt3}$.
\end{proof}

\bigskip
\centerline{\bf Acknowledgment}
We thank the referees for careful reading and many helpful suggestions.

\end{document}